\newcommand{\algprobm}[1]{\textsc{#1}\xspace}
\newcommand{\Soc}{\text{Soc}}
\newcommand{\Fac}{\text{Fac}}
\newcommand{\F}{\mathbb{F}}
\newcommand{\Z}{\mathbb{Z}}
\theoremstyle{plain}
\newtheorem{theorem}{Theorem}[section]
\newtheorem{corollary}[theorem]{Corollary}
\newtheorem{lemma}[theorem]{Lemma}
\theoremstyle{definition}
\newtheorem{definition}[theorem]{Definition}
\newtheorem{remark}[theorem]{Remark}
\newtheorem{question}[theorem]{Question}
\newcommand{\Lem}[1]{Lem.~\ref{#1}\xspace}
\newcommand{\Cor}[1]{Cor.~\ref{#1}\xspace}
\newcommand{\Thm}[1]{Thm.~\ref{#1}\xspace}
\newcommand{\Def}[1]{Def.~\ref{#1}\xspace}
\newcommand{\Rmk}[1]{Rmk.~\ref{#1}\xspace}
\DeclareMathOperator{\Aut}{Aut}
\DeclareMathOperator{\rad}{Rad}
\DeclareMathOperator{\poly}{poly}
\DeclareMathOperator{\supp}{supp}
\DeclareMathOperator{\rk}{rk}
\title{Count-Free Weisfeiler--Leman and Group Isomorphism\footnote{NAC was partially supported by J. A. Grochow's startup funds. ML was partially supported by J. A. Grochow's NSF award CISE-2047756 and the University of Colorado Boulder, Department of Computer Science Summer Research Fellowship. We thank J.A. Grochow and the anonymous referee for their helpful feedback.} }
\author[1]{Nathaniel A. Collins}
\author[2]{Michael Levet}
\affil[1]{Department of Mathematics, Colorado State University}
\affil[2]{Department of Computer Science, College of Charleston}
\begin{document}
\maketitle

\begin{abstract}
We investigate the power of counting in \algprobm{Group Isomorphism}. We first leverage the count-free variant of the Weisfeiler--Leman Version I algorithm for groups (Brachter \& Schweitzer, LICS 2020) in tandem with bounded non-determinism and limited counting to improve the parallel complexity of isomorphism testing for several families of groups. These families include:
\begin{itemize}
\item Direct products of non-Abelian simple groups.

\item Coprime extensions, where the normal Hall subgroup is Abelian and the complement is an $O(1)$-generated solvable group with solvability class $\poly \log \log n$. This notably includes instances where the complement is an $O(1)$-generated nilpotent group. This problem was previously known to be in $\textsf{P}$ (Qiao, Sarma, \& Tang, STACS 2011), and the complexity was recently improved to $\textsf{L}$ (Grochow \& Levet, FCT 2023). 

\item Graphical groups of class $2$ and exponent $p > 2$ (Mekler, \textit{J. Symb. Log.}, 1981) arising from the CFI and twisted CFI graphs (Cai, F\"urer, \& Immerman, \textit{Combinatorica} 1992) respectively. In particular, our work improves upon previous results of Brachter \& Schweitzer (LICS 2020).
\end{itemize}

Notably, each of these families were previously known to be identified by the \textit{counting} variant of the more powerful Weisfeiler--Leman Version II algorithm.  We finally show that the $q$-ary count-free pebble game is unable to even distinguish Abelian groups. This extends the result of Grochow \& Levet (ibid), who established the result in the case of $q = 1$. The general theme is that some counting appears necessary to place $\algprobm{Group Isomorphism}$ into $\textsf{P}$.
\end{abstract}

\thispagestyle{empty}

\newpage

\setcounter{page}{1}

\section{Introduction}
\label{sec:introduction}
The \algprobm{Group Isomorphism} problem (\algprobm{GpI}) takes as input two finite groups $G$ and $H$, and asks if there exists an isomorphism $\varphi : G \to H$. Here, we will consider groups given by their multiplication (a.k.a. Cayley) tables. The generator-enumerator algorithm, attributed to Tarjan in 1978 \cite{MillerTarjan}, has time complexity $n^{\log_{p}(n) + O(1)}$, where $n$ is the order of the group and $p$ is the smallest prime dividing $n$. In more than 40 years, this bound has escaped largely unscathed: Rosenbaum \cite{Rosenbaum2013BidirectionalCD} (see \cite[Sec. 2.2]{GR16}) improved this to $n^{(1/4)\log_p(n) + O(1)}$. And even the impressive body of work on practical algorithms for this problem, led by Eick, Holt, Leedham-Green and O'Brien (e.\,g., \cite{BEO02, ELGO02, BE99, CH03}) still results in an $n^{\Theta(\log n)}$-time algorithm in the general case (see \cite[Page 2]{WilsonSubgroupProfiles}). In the past several years, there have been significant advances on algorithms with worst-case guarantees on the serial runtime for special cases of this problem including Abelian groups \cite{Kavitha, Vikas, Savage}, direct product decompositions \cite{WilsonDirectProductsArxiv, KayalNezhmetdinov}, groups with no Abelian normal subgroups \cite{BCGQ, BCQ}, coprime and tame group extensions \cite{Gal09,QST11,BQ, GQ15}, low-genus $p$-groups and their quotients \cite{LW12,BMWGenus2, IvanyosQ19, SchrockThesis}, Hamiltonian groups \cite{DasSharma}, and groups of almost all orders \cite{DietrichWilson}.

Despite decades of research, there are no known complexity-theoretic lower bounds against \algprobm{Group Isomorphism}, and it is unclear how to even establish such bounds. There is reason for this-- the work of Chattopadhyay, Tor\'an, \& Wagner rules out existing techniques for establishing lower bounds (cf., \cite{VollmerText}) against very weak complexity classes like $\textsf{AC}^{0}$, which is in fact a \textit{proper} subclass of $\textsf{P}$. In particular, this rules out $\textsf{AC}^{0}$-reductions (as well as several stronger notions of parallel reduction) from \algprobm{Majority} (counting) and \algprobm{Parity} to \algprobm{Group Isomorphism}. Note that many $\textsf{NP}$-completeness reductions are $\textsf{AC}^{0}$-computable.

Even though \algprobm{Majority} does not reduce to \algprobm{Group Isomorphism} (under $\textsf{AC}^{0}$-computable reductions), counting has nonetheless been a key ingredient for advances in \algprobm{Group Isomorphism}. For instance, the Fundamental Theorem of finitely generated Abelian groups provides that two finite Abelian groups $G$ and $H$ of order $n$ are isomorphic if and only if for each divisor $d \mid n$, the number of elements of order $d$ in $G$ is the same as in $H$. Lipton, Snyder, \& Zalcstein \cite{LiptonSnyderZalcstein} leveraged this to establish the first polynomial-time isomorphism test for Abelian groups, and subsequent improvements in both the serial runtime \cite{Vikas, Savage, Kavitha} and parallel (circuit) complexity \cite{ChattopadhyayToranWagner} have crucially relied on the Fundamental Theorem of finitely generated Abelian groups. The Remak--Krull--Schmidt Theorem similarly provides that a group is determined up to isomorphism by the isomorphism classes of its indecomposable direct factors and their multiplicities. This has been leveraged crucially in algorithms that rely on determining whether two group actions are equivalent \cite{Gal09,QST11,BQ, GQ15, GQcoho}. Recent works utilizing the Weisfeiler--Leman algorithm for \algprobm{Group Isomorphism} also crucially rely on counting \cite{QiaoLiWL, BGLQW, WLGroups, BrachterSchweitzerWLLibrary, GLWL1, GLDescriptiveComplexity}.

By treating \textit{both} counting and nondeterminism as scarce resources, Grochow \& Levet \cite{GLWL1} recently obtained further improvements in the parallel complexity of isomorphism testing for Abelian groups. In light of this result, as well as the notable lack of lower bounds, it is natural to inquire as to the extent that counting is necessary to place \algprobm{Group Isomorphism} into $\textsf{P}$. Indeed, the circuit model is well-suited to investigate this question, as we can analyze the number of counting (\textsf{Majority}) gates that are present. We will leverage this technique of Grochow \& Levet to make further advances in the parallel complexity of \algprobm{Group Isomorphism}. This brings us to the Weisfeiler--Leman procedure, which will serve as the backbone of our algorithmic approach.

The Weisfeiler--Leman algorithm is a key combinatorial subroutine that has driven advances for the \algprobm{Graph Isomorphism} problem for several decades (see Section~\ref{sec:RelatedWork} for detailed discussion, as well as Sandra Kiefer's dissertation for a very thorough and current survey of the Weisfeiler--Leman algorithm \cite{KieferThesis}). The \algprobm{Group Isomorphism} problem, when the groups are given by their multiplication tables, reduces to \algprobm{Graph Isomorphism} \cite{MillerTarjan} (see as well the reduction used in Weisfeiler--Leman Version III \cite{WLGroups}). As a result, \algprobm{Group Isomorphism} can be viewed as a special algebraic subproblem of \algprobm{Graph Isomorphism}. It is thus natural to investigate the power of Weisfeiler--Leman in the setting of groups.

One approach for isomorphism testing is to identify an isomorphism invariant relation that distinguishes two objects whenever they are non-isomorphic. Indeed, this is intuitively the aim of Weisfeiler--Leman. The original formulation of Weisfeiler--Leman (what is known today as the $2$-dimensional Weisfeiler--Leman) establishes an equivalence between the algorithmic procedure and binary relational structures known as \textit{coherent configurations} \cite{WLOriginal, Weisfeiler1976OnCA}, which are a staple in algebraic graph theory. Babai \& Mathon \cite{BabaiMathon} generalized Weisfeiler--Leman for higher-dimensions, establishing an equivalence with $k$-ary relational structures that they call \textit{generalized coherent configurations}. 

In order to construct these relations, the $k$-dimensional Weisfeiler--Leman algorithm iteratively colors $k$-tuples of elements\footnote{In our setting, these will be $k$-tuples of group elements, but the procedure can be applied to any relational structure such as monoids or graphs.} in an isomorphism invariant manner. If at the end of a given iteration, the multiset of colors present for one object differs from the other, we can conclude that the two objects are not isomorphic. It is open whether there exists a constant $k$, where $k$-dimensional Weisfeiler--Leman distinguishes any pair of non-isomorphic groups. A priori, it may seem unnatural to consider a coloring procedure that, on the surface, appears to ignore key group theoretic structure. This is, however, far from the case. Immerman \& Lander \cite{ImmermanLander1990} (see as well \cite{CFI}) independently generalized Weisfeiler--Leman for higher-dimensions, establishing equivalences with tools from finite model theory including Ehrenfeucht--Fra\"iss\'e pebble games and first-order logic with counting quantifiers ($\textsf{FO} + \textsf{C}$). The pebble game characterization allows us to investigate at a granular level, the extent to which the Weisfeiler--Leman algorithm detects the relevant group theoretic structures and the number of iterations required to do so, essentially marrying group theory, finite model theory, and combinatorics to make advances in computation.

There is a relaxation of the Weisfeiler--Leman algorithm to a \textit{count-free} variant, that intuitively considers the \textit{set} of colors at each round, rather than the full multiset of colors. Count-free Weisfeiler--Leman is equivalent to first-order logic \textit{without} counting quantifiers ($\textsf{FO}$) \cite{ImmermanLander1990, CFI}, and also admits an efficient parallel implementation using circuits \textit{without} counting ($\textsf{Majority}$) gates\footnote{The standard counting variant of Weisfeiler--Leman admits an analogous parallel implementation that utilizes a polynomial number of $\textsf{Majority}$ gates.} \cite{GroheVerbitsky}. As we will fruitfully leverage count-free Weisfeiler--Leman in tandem with bounded non-determinism and limited counting, our results suggest, both from the perspective of logics and circuits, that perhaps a limited amount of counting might suffice to place \algprobm{Group Isomorphism} into $\textsf{P}$. This is in stark contrast to the setting of \algprobm{Graph Isomorphism}, where $\textsf{FO} + \textsf{C}$ notoriously fails to yield \textit{any} improvements in the general case. In fact, there exists an infinite family $(G_{n}, H_{n})_{n\in \mathbb{Z}^{+}}$ of pairs of non-isomorphic graphs, where the standard counting Weisfeiler--Leman algorithm requires time $n^{\Theta(n)}$ to distinguish $G_n$ from $H_n$ \cite{CFI}, which is worse than the brute-force approach of enumerating all possible $n!$ bijections.

\noindent \\ \textbf{Main Results.} In this paper, we show that the count-free Weisfeiler--Leman (WL) Version I algorithm serves as a key subroutine in developing efficient parallel isomorphism tests for several families of groups. Namely, we will establish the following. 

\begin{theorem} \label{thm:Main}
For the following classes of groups, the isomorphism problem can be efficiently parallelized with improved complexity using the count-free variant of Weisfeiler--Leman Version I:
\begin{enumerate}[label=(\alph*)]
\item Direct products of non-Abelian simple groups;
\item Coprime extensions $H \ltimes N$, where the normal Hall subgroup $N$ is Abelian, and the complement $H$ is an $O(1)$-generated solvable group with solvability class $\poly \log \log n$; and
\item Graphical groups of class $2$ and exponent $p > 2$, arising from the CFI and twisted CFI graphs \cite{CFI}.
\end{enumerate}
\end{theorem}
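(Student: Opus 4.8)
The plan is to prove each of the three parts separately, since they concern rather different families of groups, but in every case the unifying strategy is the same: show that \emph{count-free} Weisfeiler--Leman Version~I, run for $O(1)$ or logarithmically many rounds, already produces an isomorphism-invariant coloring that is a complete invariant for the family, and then argue that the combinatorial refinement together with a small amount of nondeterminism and counting can be implemented in the claimed low-level parallel class (some subclass of $\textsf{P}$ defined via circuits with few $\textsf{Majority}$ gates, in the spirit of Grochow \& Levet). The key meta-step for each part is to translate a known \emph{structural} isomorphism theorem for the family into a statement about what count-free WL can distinguish, bounding both the dimension and the number of rounds.

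For part (a), direct products of non-Abelian simple groups, I would lean on the Remak--Krull--Schmidt theorem: such a group is determined up to isomorphism by the multiset of its simple direct factors. First I would recall that a bounded-dimensional WL coloring can detect membership in, and the isomorphism type of, each simple factor, because non-Abelian simple groups are generated by $O(1)$ elements (indeed by two) and have no center to obstruct the pebble game. The crux is to show that count-free WL can recover the \emph{multiplicities} of each isomorphism type of simple factor without counting quantifiers; here the trick is that distinct simple factors commute elementwise and intersect trivially, so the socle decomposition is visible as an isomorphism-invariant partition of group elements, and the number of factors of a given type can be pinned down by a constant number of nondeterministically guessed generators rather than by a counting gate. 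The main obstacle in this part is exactly this replacement of counting by bounded nondeterminism: one must verify that guessing $O(1)$ generators per factor suffices and that the verification (that the guessed tuple generates a copy of the target simple group and that the copies together generate the whole group as an internal direct product) is expressible in the count-free, low-depth model.

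For part (b), the coprime extension $H \ltimes N$ with $N$ abelian Hall and $H$ an $O(1)$-generated solvable group of derived length $\poly\log\log n$, I would follow the QST/Grochow--Levet framework: coprimality gives a canonical decomposition (the Schur--Zassenhaus splitting is essentially unique up to conjugacy by coprimality), so isomorphism reduces to testing equivalence of the induced $H$-actions on $N$ as $\Z$-modules, i.e.\ equivalence of representations. The plan is to use count-free WL to identify $N$ (as the normal Hall subgroup, detectable by orders since $\gcd(|N|,|H|)=1$) and the complement, then to encode the action-equivalence test. Because $H$ is $O(1)$-generated, its action is specified by $O(1)$ matrices, and I would nondeterministically guess the isomorphism on these generators; the derived-length bound $\poly\log\log n$ is what controls the round/iteration budget needed for WL to stabilize on the solvable structure (iterating through the derived series), keeping everything within the logspace-or-better regime claimed in \Thm{thm:Main}. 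The anticipated difficulty here is handling the module-equivalence verification in the count-free model while respecting the solvability-class-driven round bound, and confirming that the nilpotent special case indeed falls out as $H$ nilpotent forces even tighter control on the action.

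For part (c), the graphical (Mekler) groups of class $2$ and exponent $p>2$ built from the CFI and twisted CFI graphs, the strategy is to exploit Mekler's functorial correspondence between such graphs and class-$2$ exponent-$p$ groups, under which graph isomorphism matches group isomorphism. I would show that a constant-dimensional count-free WL run on the group simulates the corresponding WL run on the underlying graph, so that the group-level invariant is at least as strong as the graph-level one; the point of using CFI/twisted-CFI graphs is that they are designed to be \emph{distinguished} by low-dimensional count-free refinement once the right base graph is colored, even though they defeat high-dimensional counting WL in the general graph setting. The main obstacle, and the part I expect to require the most care, is the faithful simulation: one must verify that the commutator structure and the exponent-$p$ arithmetic of the Mekler group expose exactly the adjacency and non-adjacency relations of the CFI construction to count-free pebbles, with no spurious collapse of colors and with the round count surviving the translation; getting the quantitative improvement over Brachter \& Schweitzer then amounts to tightening this simulation to show count-freeness and few rounds suffice where they previously invoked the counting Version~II algorithm.
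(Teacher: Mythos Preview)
Your high-level framing matches the paper's: in each case, run count-free WL Version~I for few rounds, then use $O(\log n)$ nondeterministic bits to guess a color class and a single \textsf{Majority} gate to compare multiplicities, landing in $\beta_1\textsf{MAC}^0(\textsf{FOLL})$ or a close relative. But two aspects of your plan diverge in ways that would block the proof.

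First, in part~(a) you say the multiplicities of simple factors ``can be pinned down by a constant number of nondeterministically guessed generators \emph{rather than by a counting gate}.'' This is a genuine misconception. The paper does \emph{not} eliminate counting; it shows that count-free WL produces a coloring whose \emph{multiset} of colors differs between non-isomorphic $G$ and $H$, and then uses nondeterminism only to guess \emph{which} color class has different multiplicity, followed by a single \textsf{Majority} gate to detect the imbalance. Nondeterminism alone cannot compare multiplicities---indeed the paper's own lower bound (\Thm{thm:LowerBound}) shows that count-free WL, even with the stronger $q$-ary game, fails to distinguish Abelian groups, so some counting is essential. The same pattern governs parts~(b) and~(c): you should be proving that the multiset of colors differs, not that count-free WL outright distinguishes the groups. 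A second technical ingredient you are missing in~(a) is the Babai--Kantor--Lubotzky theorem (every finite simple group has seven generators such that every element is a word of length $O(\log|S|)$ over them); without this, you cannot control the number of rounds needed to detect that a pebbled tuple generates a simple direct factor of the socle.

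Second, your plan for part~(c)---simulate graph WL on the Mekler group---is not how the paper proceeds and would not give the claimed bounds. The paper's argument is much more concrete: it shows count-free WL detects support size (via centralizer sizes and a binary-splitting argument costing $O(\log\log n)$ rounds), detects gadget membership (via $6$- and $8$-cycles), and distinguishes internal from external vertices. The key step is then to consider a specific element $v$ whose support contains one internal vertex per gadget together with its adjacent external vertices; Brachter--Schweitzer showed that the induced subgraph on $\supp(v)$ has a different edge count modulo~$2$ in the twisted versus untwisted case, and this difference is what count-free WL exposes. A generic simulation argument does not produce this parity invariant, and it is the heart of why the CFI twist is visible at the group level despite being invisible to counting WL at the graph level. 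For part~(b), your outline is closer, but you will need Th\'ev\'enaz's lemma that coprime indecomposable $(\Z/p^k\Z)[H]$-modules are cyclic (single-generated) to keep the pebble count constant; without it you cannot encode module isomorphism classes by single group elements.
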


\noindent \\ Our strategy in proving \Thm{thm:Main} is similar to that of Grochow \& Levet \cite{GLWL1} for handling Abelian groups. We will begin by briefly recalling their strategy here. Let $G$ and $H$ be our groups. Grochow \& Levet first utilized $O(\log \log n)$ rounds of the constant-dimensional count-free Weisfeiler--Leman Version I algorithm (which is $\textsf{FOLL}$ computable \cite{GroheVerbitsky}). This sufficed to distinguish Abelian groups from non-Abelian groups, as well as to distinguish two group elements with different orders. Now if both $G$ and $H$ are Abelian, the two groups are isomorphic if and only if for each divisor $d$ of $|G|$, $G$ and $H$ have the same number of elements of order $d$. Thus, if $G \not \cong H$, there exists an order-- and hence, a color class-- where the multiplicity of said color class $C$ is greater for $G$ than for $H$. While count-free Weisfeiler--Leman is not powerful enough to detect this difference, we may use $O(\log n)$ non-deterministic bits to guess such a color class. For tuples belonging to $C$ in $G$, we feed a $1$ to our $\textsf{Majority}$ gate; and for the corresponding tuples in $H$, we feed a $0$ to our $\textsf{Majority}$ gate. This last step is $\textsf{MAC}^{0}$-computable (that is, computable by an $\textsf{AC}^{0}$ circuit with a single $\textsf{Majortiy}$ gate at the output). In total, this yields an upper bound of $\beta_{1}\textsf{MAC}^{0}(\textsf{FOLL})$, where the $\beta_{1}$ indicates the $O(\log n)$ non-deterministic bits (see Section~\ref{sec:Complexity} and Remark~\ref{rmk:Diagram} for more discussion on how $\beta_{1}\textsf{MAC}^{0}(\textsf{FOLL})$ compares to other complexity classes). This notably improved upon the upper bound of $\textsf{TC}^{0}(\textsf{FOLL})$ due to Chattopadhyay, Tor\'an, \& Wagner \cite{ChattopadhyayToranWagner}.

Instead of order, we will show that there exist different invariants where, after running $\poly \log \log n$ rounds of the constant-dimensional count-free WL Version I algorithm, the multiset of colors differ between our two input groups $G$ and $H$, whenever $G \not \cong H$. The strategy is otherwise identical to that of Grochow \& Levet for Abelian groups. We also note that for direct products of non-Abelian simple groups and the graphical groups in question, $O(\log \log n)$ rounds of count-free WL suffice. As a consequence, we obtain upper bounds of $\beta_{1}\textsf{MAC}^{0}(\textsf{FOLL})$ for isomorphism testing of these families of groups, essentially\footnote{The isomorphism problem for Abelian groups is also known to belong to $\textsf{L}$ \cite{ChattopadhyayToranWagner}, which is not known for the groups considered in \Thm{thm:Main}(c).} matching that for Abelian groups. This is particularly surprising for \Thm{thm:Main}(c), as (i) the CFI graphs serve as hard instances for WL \cite{CFI}, and (ii) class $2$ $p$-groups of exponent $p$ have long been acknowledged to be a bottleneck case of \algprobm{Group Isomorphism} (see e.g., \cite{BCGQ, DietrichWilson, GrochowQiaoTensors}), while Abelian groups are amongst the \textit{easiest} cases.

Each of the families in \Thm{thm:Main} were previously known to be identifiable by the counting variant of the more powerful Weisfeiler--Leman Version II algorithm (see Sections~\ref{sec:WLPrelims} and \ref{sec:ParallelWL} for thorough discussion on the technical differences between Weisfeiler--Leman Versions I and II). 
\begin{enumerate}[label=(\alph*)]
\item Brachter \& Schweitzer \cite{BrachterSchweitzerWLLibrary} established this result for direct products of non-Abelian simple groups, and a careful analysis of their work shows that only $O(1)$ rounds suffice. 

\item Grochow \& Levet \cite{GLWL1} showed that $O(1)$-dimensional WL Version II identifies in $O(1)$ rounds the class of coprime extensions $H \ltimes N$, where $N$ is Abelian and $H$ is $O(1)$-generated (with no restriction on the solvability class of $H$). While count-free WL Version I can also identify $O(1)$-generated groups in $O(\log n)$ rounds, it is unclear whether the number of iterations can be improved. This places us in a complexity class ($\textsf{AC}^{1}$), which in particular can handle counting (both \algprobm{Parity} and \algprobm{Majority} belong to $\textsf{AC}^{1}$). 

By considering $H$ to be solvable with solvability class $\poly \log \log n$, the count-free WL Version I algorithm can identify $H$ in $\poly \log \log n$ rounds, allowing us to remain in a complexity class that cannot compute \algprobm{Parity}. Here, we crucially leverage the techniques established by Barrington, Kadau, Lange, \& McKenzie \cite{BKLM} in their work on the \algprobm{Cayley Group Membership} problem. Determining whether either the counting or count-free variant of WL Version I identifies $O(1)$-generated groups in $O(\log \log n)$ rounds is closely related to determining whether the \algprobm{Cayley Group Membership} problem belongs to the complexity class $\textsf{FOLL}$, the latter of which is a longstanding open problem \cite{BKLM}. Nonetheless, the assumption that $H$ has solvability class $\poly \log \log n$ is already quite robust, as this includes all finite nilpotent groups, which have solvability class $O(\log \log n)$ \cite{Therien1980ClassificationOR}.

\item The graphical groups of class $2$ and exponent $p > 2$ arising from the CFI graphs were first considered by Brachter \& Schweitzer \cite{WLGroups}, who showed that $3$-WL Version II suffices. A careful analysis of their work shows that $O(\log n)$ rounds suffice. Recall that the CFI graphs serve as hard instances for Weisfeiler--Leman in the setting of graphs \cite{CFI}. Furthermore, the CFI graphs have also been fruitfully leveraged to establish lower bounds against generalizations of Weisfeiler--Leman in the direction of logics, including \textsf{Rank Logic} \cite{LichterRankLogic} and restrictions of \textsf{Choiceless Polynomial Time} \cite{PagoCPT}. The fact that the \textit{count-free} WL, in tandem with bounded non-determinism and a single \textsf{Majority} gate, can distinguish between graphical groups arising from the CFI and twisted CFI graphs further illustrates the power of WL in the setting of groups.
\end{enumerate}

\noindent 
\begin{remark}
From a group theoretic perspective, we find it surprising that Weisfeiler--Leman distinguishes the graphical groups arising from the CFI and twisted CFI graphs. A class $2$ $p$-group $G$ of exponent $p > 2$ decomposes as an extension of the center $Z(G)$, with an elementary Abelian quotient $G/Z(G)$. While it is easy to check whether two such groups $G$ and $H$ satisfy $Z(G) \cong Z(H)$ and $G/Z(G) \cong H/Z(H)$, the difficulty is deciding whether $Z(G)$ and $G/Z(G)$ are \textit{glued together} in a compatible manner as $Z(H)$ and $H/Z(H)$. For class $2$ $p$-groups of exponent $p > 2$, this is the \algprobm{Cohomology Class Isomorphism} problem (see \cite{GQcoho} for the precise formulation of \algprobm{Cohomology Class Isomorphism}). The usual approach of dealing with cohomology algorithmically is to leverage tools from (multi)linear algebra-- see for instance, \cite{GQcoho, BMWGenus2, LewisWilson, IvanyosQ19, SchrockThesis, GrochowQiaoTensors}. In the setting of \Thm{thm:Main}(c), the CFI graphs provide sufficient structure to restrict the cohomology, allowing Weisfeiler--Leman to solve the isomorphism problem and avoiding the need for tools from (multi)linear algebra. 

The techniques we employ to prove \Thm{thm:Main}(c), which extend the work of Brachter \& Schweitzer \cite{WLGroups}, do not appear to extend to arbitrary graphical groups. For instance, \cite[Lemma~4.14]{WLGroups} provides that the vertex set of a tree does not yield a canonical generating set for the corresponding graphical group $G$, as there is a $\varphi \in \Aut(G)$ mapping (the group element corresponding to) a leaf node $v$ to (the group element corresponding to) the edge incident to $v$. Trees are considered to be easy cases for \algprobm{Graph Isomorphism}:  they are identified by $1$-dimensional Weisfeiler--Leman \cite{edmonds_1965, ImmermanLander1990}, and the isomorphism problem for trees is $\textsf{L}$-complete \cite{ElberfeldSchweitzer} (and so can be efficiently parallelized). The CFI graphs are considerably more complicated than trees, and group theoretic methods appear necessary to identify them in polynomial-time \cite{LUKS198242}. It is a longstanding open problem whether the CFI graphs admit an efficient parallel isomorphism test.
\end{remark}

We now turn to considering a higher-arity version of the count-free pebble game (where the $1$-ary case was introduced for graphs \cite{ImmermanLander1990, CFI} and adapted for groups \cite{GLWL1}). In the $q$-ary game, Spoiler can pebble up to $q$ elements at a single round. A priori, it seems plausible that the higher-arity count-free game might be able to make progress on \algprobm{Group Isomorphism}. Nonetheless, we show that the higher-arity count-free game is unable to even distinguish Abelian groups. 

\begin{theorem} \label{thm:LowerBound}
Let $q \geq 1, n \geq 5$. Let $G_{n} := (\mathbb{Z}/2\mathbb{Z})^{qn} \times (\mathbb{Z}/4\mathbb{Z})^{qn}$ and $H_{n} := (\mathbb{Z}/2\mathbb{Z})^{q(n-2)} \times (\mathbb{Z}/4\mathbb{Z})^{q(n+1)}$. Duplicator has a winning strategy in the $q$-ary, $qn/4$-pebble game.
\end{theorem}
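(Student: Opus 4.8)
The plan is to give Duplicator an explicit winning strategy, organized as a back-and-forth system of partial isomorphisms. View $G_n$ and $H_n$ additively, and note that a configuration of the $q$-ary, $k$-pebble game (with $k = qn/4$) determines a partial map $f$ from the at most $k$ pebbled elements of $G_n$ to those of $H_n$; Spoiler wins only if $f$ fails to be a partial isomorphism of the additive structure. Since preserving the ternary multiplication relation forces Duplicator to honor every realized sum among pebbled elements, I would have her maintain the stronger invariant that $f$ is the restriction of a genuine group isomorphism $\phi : A \to B$, where $A = \langle \mathrm{dom}(f)\rangle \le G_n$ and $B = \langle \mathrm{im}(f)\rangle \le H_n$, and that $\phi$ is moreover compatible with the ambient doubling maps, i.e.
\[
\phi(A \cap 2G_n) = B \cap 2H_n .
\]
Because $\phi$ is a homomorphism, the predicates ``$2x = 0$'' (order dividing $2$) and every additive relation are preserved automatically; the only genuinely ambient datum the count-free game can probe is whether a pebbled element is a \emph{double}, i.e. lies in $2G_n$, and this is exactly what the displayed invariant records while deliberately forgetting the global counts that distinguish $G_n$ from $H_n$.

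The heart of the argument is an extension lemma: given such a $\phi$ on a subgroup $A$ of rank $< k$ and any $g \in G_n$, Duplicator can choose a response $h \in H_n$ so that $\phi$ extends to $\langle A, g\rangle \to \langle B, h\rangle$ preserving the invariant, and symmetrically with the roles of $G_n$ and $H_n$ reversed. The element $g$ determines a coset $g + A$ of order $1$, $2$, or $4$ in $G_n/A$, and Duplicator must produce $h$ realizing the same order over $B$, with $2h$ matching $\phi(2g)$ (when $2g \in A$), and with the same pattern of memberships ``$x + g \in 2G_n$'' for $x \in A$. This last requirement is the crucial one: an abstract isomorphism $A \cong B$ need not respect the embeddings into $2G_n$ versus $2H_n$, so Duplicator must actively choose $h$ to reproduce the double-predicate pattern, not merely the isomorphism type of the generated subgroup.

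I would establish existence of an admissible $h$ by a dimension count. When $g$ has order $2$ over $A$, the valid responses lie in the coset of square roots $h_0 + H_n[2]$ of the prescribed element $\phi(2g) \in 2H_n$, of which there are $|H_n[2]| = 2^{q(2n-1)}$; the remaining coset conditions (from $B$ and from the pattern of $2H_n + B$) impose at most $qn/2 + O(1)$ binary constraints, coming from the at most $k = qn/4$ generators. Since $q(2n-1) > qn/2 + O(1)$, the admissible set is nonempty (and in fact large); the order-$4$ case is handled identically using the pool of order-$4$ elements of size $>2^{3qn-1}$. The factor-of-four slack between $k = qn/4$ and the dimensions $\sim qn$ of $2H_n$, $H_n[2]$ is exactly what keeps these inequalities strict, and the hypothesis $n \ge 5$ guarantees $q(n-2) \ge qn/4$, so that $H_n$, despite having fewer $\mathbb{Z}/2\mathbb{Z}$ summands than $G_n$, still has more than $k$ of them and the symmetric (back) direction goes through as well. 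The $q$-ary feature is absorbed by applying the extension lemma up to $q$ times within a single round; since the total budget is $qn/4$, the generated subgroups never exceed rank $qn/4$ and the counts remain valid throughout, while relocating a pebble only shrinks $\mathrm{dom}(f)$ and cannot destroy the invariant. Hence Duplicator survives indefinitely.

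I expect the main obstacle to be the bookkeeping in the extension lemma, namely verifying that the doubling-compatibility invariant (rather than mere abstract isomorphism of the generated subgroups) is genuinely preserved at each step. The delicate point is that the \emph{sole} invariant separating $G_n$ from $H_n$ is the rank of the doubling image, $qn$ versus $q(n+1)$, and this becomes visible only after roughly $qn$ independent probes; the crux is to quantify the coset conditions precisely enough to confirm that no combination of order constraints and square-root constraints available within the budget $qn/4$ can leak this difference, so that Duplicator's choice of $h$ is always possible.
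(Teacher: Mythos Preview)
Your proposal is correct and follows essentially the same approach as the paper: maintain that the pebbled map extends to an isomorphism of the generated subgroups together with an ambient compatibility condition (you phrase it as $\phi(A \cap 2G_n) = B \cap 2H_n$; the paper equivalently tracks how many $\mathbb{Z}/2\mathbb{Z}$ direct factors of the generated subgroup sit inside $\mathbb{Z}/4\mathbb{Z}$ summands of the ambient group), and handle a $q$-ary move by iterating the one-pebble extension $q$ times. The paper's proof is terser, deferring the one-pebble extension step to \cite[Theorem~7.10]{GLWL1} rather than carrying out the dimension count you sketch, but the underlying strategy is identical.
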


\begin{remark}
The case of $q = 1$ corresponds to count-free Weisfeiler--Leman as well as $\textsf{FO}$ \cite{CFI, GLWL1}, and the lower bound in the $q = 1$ case was established by Grochow \& Levet \cite[Theorem~7.10]{GLWL1}. Here, \Thm{thm:LowerBound} suggests some counting might be necessary to place $\algprobm{GpI}$ into $\textsf{P}$. In particular, \Thm{thm:LowerBound} provides additional evidence that $\algprobm{GpI}$ might not belong to (uniform) $\textsf{AC}^{0}$. In contrast, while some counting appears necessary, \Thm{thm:Main} and \cite[Theorem~7.15]{GLWL1} provide evidence that the full power of $\textsf{FO} + \textsf{C}$ might not be required to resolve $\algprobm{GpI}$.

We also contrast the higher-arity count-free game with the analogous game in the counting setting. The $3$-ary counting game is trivially able to resolve $\algprobm{Group Isomorphism}$. Namely, Duplicator chooses a bijection $f : G \to H$. If $f$ is not an isomorphism, then Spoiler pebbles a triple $(g_{1}, g_{2}, g_{1}g_{2})$ where $f(g_{1})f(g_{2}) \neq f(g_{1}g_{2})$. The power of the $2$-ary counting game remains an intriguing open question in the setting of groups. Grochow \& Levet \cite{GLDescriptiveComplexity} showed that this pebble game identifies groups without Abelian normal subgroups using only $O(1)$ pebbles and $O(1)$ rounds. This family of groups has rich structure and is highly non-trivial. In the Cayley table model, it took a series of two papers \cite{BCGQ, BCQ} to obtain a polynomial-time isomorphism test for this family of groups.
\end{remark}

\section{Preliminaries}

\subsection{Groups}

Unless stated otherwise, all groups are assumed to be finite and represented by their Cayley tables. 
For a group of order $n$, the Cayley table has $n^{2}$ entries, each represented by a binary string of size $\lceil \log_{2}(n) \rceil$. For an element $g$ in the group $G$, we denote the \textit{order} of $g$ by $|g|$. We use $d(G)$ to denote the minimum size of a generating set for the group $G$. 

We say that a normal subgroup $N \trianglelefteq G$ \textit{splits} in $G$ if there exists a subgroup $H \leq G$ such that $H \cap N = \{1\}$ and $G = HN$. The conjugation action of $H$ on $N$ allows us to express multiplication of $G$ in terms of pairs $(h, n) \in H \times N$. We note that the conjugation action of $H$ on $N$ induces a group homomorphism $\theta : H \to \Aut(N)$ mapping $h \mapsto \theta_{h}$, where $\theta_{h} : N \to N$ sends $\theta_{h}(n) = hnh^{-1}$. So given $(H, N, \theta)$, we may define the group $H \ltimes_{\theta} N$ on the set $\{ (h, n) : h \in H, n \in N \}$ with the product $(h_{1}, n_{1})(h_{2}, n_{2}) = (h_{1}h_{2}, \theta_{h_{2}^{-1}}(n_{1})n_{2})$. We refer to the decomposition $G = H \ltimes_{\theta} N$ as a \textit{semidirect product} demoposition. When the action $\theta$ is understood, we simply write $G = H \ltimes N$. 

A \textit{normal Hall subgroup} $N \trianglelefteq G$ is a subgroup such that $|N|$ and $[G : N] = |G/N|$ are relatively prime. We are particularly interested in semidirect products when $N$ is a normal Hall subgroup. To this end, we recall the Schur--Zassenhaus Theorem \cite[(9.1.2)]{Robinson1982}.

\begin{theorem}[Schur--Zassenhaus]
Let $G$ be a finite group of order $n$, and let $N$ be a normal Hall subgroup. Then there exists a complement $H \leq G$, such that $\text{gcd}(|H|, |N|) = 1$ and $G = H \ltimes N$. Furthermore, if $H$ and $K$ are complements of $N$, then $H$ and $K$ are conjugate.
\end{theorem}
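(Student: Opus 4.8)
The plan is to prove the two assertions of the theorem---existence of a complement and conjugacy of any two complements---separately, inducting on $|G|$ in each case and reducing to the situation where $N$ is abelian. The common engine is that the coprimality hypothesis $\gcd(|N|,[G:N]) = 1$ forces the relevant cohomology of $G/N$ with coefficients in $N$ to vanish, because these cohomology groups are simultaneously annihilated by $|N|$ (the exponent of the coefficient module) and by $[G:N]$ (an averaging/transfer argument), and the coprime pair $\{|N|,[G:N]\}$ then kills them outright. Throughout I write $m = |N|$ and $k = [G:N]$.

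For existence I would handle the abelian base case directly. Fix a transversal $\{t_x : x \in G/N\}$ of $N$ in $G$ and record the factor set $c(x,y) \in N$ defined by $t_x t_y = c(x,y)\,t_{xy}$; associativity makes $c$ a $2$-cocycle, and a subgroup complement is exactly a choice of transversal trivializing $c$. Setting $d(x) = \prod_{y \in G/N} c(x,y)$ and summing the cocycle identity over $y$ shows that $d$ agrees with a coboundary up to a $k$-th power in the abelian group $N$; since $\gcd(k,m)=1$ the map $n \mapsto n^{k}$ is a bijection on $N$, so a unique $k$-th root exists, and replacing $t_x$ by $t_x e(x)^{-1}$ yields a transversal closed under multiplication, i.e.\ a complement $H$ with $|H| = k$ and $H \cap N = \{1\}$. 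For the inductive step with $N$ non-abelian I would pick a prime $p \mid m$ and a Sylow $p$-subgroup $P$ of $N$ (which, as $p \nmid k$, is Sylow in $G$ as well), and apply the Frattini argument $G = N\,N_G(P)$. If $N_G(P) \neq G$, then $M := N_G(P)$ is a proper subgroup in which $M \cap N = N_N(P)$ is a normal Hall subgroup of index $k$, so induction produces a complement inside $M$ that, having order coprime to $m$, is automatically a complement to $N$ in $G$. If instead $N_G(P) = G$, then $P \trianglelefteq G$ and its center $Z := Z(P) \neq \{1\}$ is normal in $G$; applying induction to $G/Z$ gives a subgroup $H \geq Z$ with $H/Z$ complementing $N/Z$, and within the strictly smaller group $H$ the abelian normal Hall subgroup $Z$ is complemented by the base case, producing a complement of order $k$ meeting $N$ trivially.

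For conjugacy I would run a parallel induction. When $N$ is abelian, two complements $H = \{h_x\}$ and $K = \{k_x\}$ satisfy $k_x = \gamma(x)\,h_x$ for a map $\gamma\colon G/N \to N$; the subgroup conditions make $\gamma$ a $1$-cocycle (crossed homomorphism), and the identical coprimality trick---extracting a $k$-th root of $\prod_x \gamma(x)$---exhibits an $n \in N$ with $K = nHn^{-1}$, the statement $H^1(G/N,N) = 0$. The non-abelian reduction mirrors the existence argument: passing to $G/Z(P)$ (or to $G$ modulo a minimal normal subgroup contained in $N$) lets induction conjugate the images of $H$ and $K$, after which the abelian case aligns them on the nose.

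The step I expect to be the main obstacle is organizing the non-abelian reductions so that normality, the Hall condition, and coprimality are all simultaneously preserved under passage to $N_G(P)$ and to the quotient $G/Z(P)$; the cohomological core is clean, but one must check at each node that the smaller instance is genuinely a Schur--Zassenhaus instance and that the reconstructed subgroup has the correct order and trivial intersection with $N$. I would also flag that a fully general conjugacy proof classically requires solvability of $N$ or of $G/N$ to drive the induction---this is automatic in our setting, since $\gcd(|N|,[G:N]) = 1$ forces one of the two factors to have odd order, which is solvable by the Feit--Thompson theorem.
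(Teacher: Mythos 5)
The paper does not prove this statement: it is recalled as classical background, with a citation to Robinson \cite[(9.1.2)]{Robinson1982}, so the only meaningful comparison is with the standard literature proof---and your outline is essentially that proof, correctly assembled. The abelian existence case via the factor-set computation (the identity $c(x,y)^{k} = d(x)\, {}^{t_x}d(y)\, d(xy)^{-1}$ with $d(x) = \prod_y c(x,y)$, followed by unique $k$-th roots since $\gcd(k,m)=1$) is the standard vanishing of $H^2$; the non-abelian induction via the Frattini argument on a Sylow $p$-subgroup $P \leq N$, splitting into the branches $N_G(P) \lneq G$ (descend to $N_G(P)$, where $N_N(P)$ is a normal Hall subgroup of the same index) and $P \trianglelefteq G$ (quotient by $Z(P) \neq 1$, then split off $Z(P)$ inside the preimage, which is proper unless $N$ is abelian), is exactly the textbook existence proof; and the abelian conjugacy case is the vanishing of $H^1$ by the same averaging trick.

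One caution on the conjugacy half: your claim that the non-abelian reduction ``mirrors the existence argument'' is not accurate in the $N_G(P) \lneq G$ branch. Two complements $H$ and $K$ need not normalize $P$, so they do not both descend to $N_G(P)$, and $Z(P)$ is not normal in $G$ in that branch, so neither Frattini-style move is available for conjugacy. The classical induction instead passes to $G/M$ for $M$ a minimal normal subgroup of $G$ contained in $N$---which is abelian only when $N$ is solvable---or runs a genuinely different Sylow-theoretic argument when $G/N$ is solvable. Your closing remark installs exactly this dichotomy (coprimality forces one of $|N|$, $[G:N]$ to be odd, hence solvable by Feit--Thompson), so the proof as a whole is sound; just be aware that the parenthetical ``(or to $G$ modulo a minimal normal subgroup contained in $N$)'' carries the induction only under that solvability hypothesis, and is doing more load-bearing work than the phrasing suggests.
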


The \textit{Frattini} subgroup $\Phi(G)$ is the intersection of maximal subgroups. For a class $2$ $p$-group of exponent $p > 2$, $G/\Phi(G) \cong (\mathbb{Z}/p\mathbb{Z})^{d(G)}$.

For a group $G$, $G' := [G, G]$ denotes the commutator subgroup. The \textit{derived series} of a group $G$, indexed $G^{(\alpha)}$, is defined as follows: $G^{(0)} := G$, and for $\alpha > 0$, $G^{(\alpha)} := [G^{(\alpha-1)}, G^{(\alpha-1)}]$. A group is said to be \textit{solvable} if there exists an $\ell < \infty$ such that $G^{(\ell)} = 1$. We refer to the minimum such $\ell$ as the \textit{solvability class} of $G$.

\subsection{Complexity Classes} \label{sec:Complexity}
We assume familiarity with the complexity classes $\textsf{P}, \textsf{NP}, \textsf{L}$, and $\textsf{NL}$-- we refer the reader to standard references \cite{ComplexityZoo, AroraBarak}. We will now turn to introducing key notions from circuit complexity. For a standard reference, see \cite{VollmerText}. We consider Boolean circuits using the \textsf{AND}, \textsf{OR}, \textsf{NOT}, and \textsf{Majority} gates, where $\textsf{Majority}(x_{1}, \ldots, x_{n}) = 1$ if and only if $> n/2$ of the inputs are $1$. Otherwise, $\textsf{Majority}(x_{1}, \ldots, x_{n}) = 0$. We use $\textsf{Majority}$ to denote a gate in a circuit, and \algprobm{Majority} to denote the algorithmic problem of deciding, when given input bits $x_{1}, \ldots, x_{n}$, whether $> n/2$ of the bits are $1$'s.

\begin{definition}
Fix $k \geq 0$. We say that a language $L$ belongs to (uniform) $\textsf{NC}^{k}$ if there exist a (uniform) family of circuits $(C_{n})_{n \in \mathbb{N}}$ over the $\textsf{AND}, \textsf{OR}, \textsf{NOT}$ gates such that the following hold:
\begin{itemize}
\item The $\textsf{AND}$ and $\textsf{OR}$ gates take exactly $2$ inputs. That is, they have fan-in $2$.
\item $C_{n}$ has depth $O(\log^{k} n)$ and uses (has size) $n^{O(1)}$ gates. Here, the implicit constants in the circuit depth and size depend only on $L$.

\item $x \in L$ if and only if $C_{|x|}(x) = 1$. 
\end{itemize}
\end{definition}

\noindent The complexity class $\textsf{AC}^{k}$ is defined analogously as $\textsf{NC}^{k}$, except that the $\textsf{AND}, \textsf{OR}$ gates are permitted to have unbounded fan-in. That is, a single $\textsf{AND}$ gate can compute an arbitrary conjunction, and a single $\textsf{OR}$ gate can compute an arbitrary disjunction. The complexity class $\textsf{TC}^{k}$ is defined analogously as $\textsf{AC}^{k}$, except that our circuits are now permitted $\textsf{Majority}$ gates of unbounded fan-in.

For every $k$, the following containments are well-known:
\[
\textsf{NC}^{k} \subseteq \textsf{AC}^{k} \subseteq \textsf{TC}^{k} \subseteq \textsf{NC}^{k+1}.
\]

\noindent In the case of $k = 0$, we have that:
\[
\textsf{NC}^{0} \subsetneq \textsf{AC}^{0} \subsetneq \textsf{TC}^{0} \subseteq \textsf{NC}^{1} \subseteq \textsf{L} \subseteq \textsf{NL} \subseteq \textsf{AC}^{1}.
\]

\noindent We note that functions that are $\textsf{NC}^{0}$-computable can only depend on a bounded number of input bits. Thus, $\textsf{NC}^{0}$ is unable to compute the $\textsf{AND}$ function. It is a classical result that $\textsf{AC}^{0}$ is unable to compute \algprobm{Parity} \cite{Smolensky87algebraicmethods}. The containment $\textsf{TC}^{0} \subseteq \textsf{NC}^{1}$ (and hence, $\textsf{TC}^{k} \subseteq \textsf{NC}^{k+1}$) follows from the fact that $\textsf{NC}^{1}$ can simulate the \textsf{Majority} gate.

Let $d(n)$ be a function. The complexity class $\textsf{FO}(d(n))$ is the set of languages decidable by uniform circuit families with \textsf{AND}, \textsf{OR}, and \textsf{NOT} gates of depth $O(d(n))$, polynomial size, and unbounded fan-in. In particular, $\textsf{FOLL} = \textsf{FO}(\log \log n)$. It is known that $\textsf{AC}^{0} \subsetneq \textsf{FOLL} \subsetneq \textsf{AC}^{1}$, and it is open as to whether $\textsf{FOLL}$ is contained in $\textsf{NL}$ and vice-versa \cite{BKLM}. 

The complexity class $\textsf{MAC}^{0}$ is the set of languages decidable by constant-depth uniform circuit familes with a polynomial number of \textsf{AND}, \textsf{OR}, and \textsf{NOT} gates, and at most one $\textsf{Majority}$ gate at the output. The class $\textsf{MAC}^{0}$ was introduced (but not so named) in \cite{VotingPolynomials}, where it was shown that $\textsf{MAC}^{0} \subsetneq \textsf{TC}^{0}$. This class was subsequently given the name $\textsf{MAC}^{0}$ in \cite{LearnabilityAC0}.

For a complexity class $\mathcal{C}$, we define $\beta_{i}\mathcal{C}$ to be the set of languages $L$ such that there exists an $L' \in \mathcal{C}$ such that $x \in L$ if and only if there exists $y$ of length at most $O(\log^{i} |x|)$ such that $(x, y) \in L'$. For any $i, c \geq 0$, $\beta_{i}\textsf{FO}((\log \log n)^{c})$ cannot compute \algprobm{Parity} \cite{ChattopadhyayToranWagner}.

For complexity classes $\mathcal{C}_{1}, \mathcal{C}_{2}$, the complexity class $\mathcal{C}_{1}(\mathcal{C}_{2})$ is the class of $h = g \circ f$, where $f$ is $\mathcal{C}_{1}$-computable and $g$ is $\mathcal{C}_{2}$-computable. For instance, $\beta_{1}\textsf{MAC}^{0}(\textsf{FOLL})$ is the set of functions $h = g \circ f$, where $f$ is $\textsf{FOLL}$-computable and $g$ is $\beta_{1}\textsf{MAC}^{0}$-computable.

\begin{remark} \label{rmk:Diagram}
By simulating the $\poly(n)$ possibilities for the $O(\log n)$ non-deterministic bits, we obtain that $\beta_{1}\textsf{MAC}^{0}$ is contained in $\textsf{TC}^{0}$. Note that this simulation requires $\poly(n)$ $\textsf{Majority}$ gates. So $\beta_{1}\textsf{MAC}^{0}(\textsf{FOLL}) \subseteq \textsf{TC}^{0}(\textsf{FOLL})$. Note that $\textsf{TC}^{0}(\textsf{FOLL})$ improves upon using an arbitrary $\textsf{TC}$ circuit of depth $O(\log \log n)$ and polynomial-size, as the $\textsf{Majority}$ gates only occur in the last $O(1)$ levels of the circuit. Consequently, $\textsf{TC}^{0}(\textsf{FOLL}) \subseteq \textsf{AC}^{1}$. However, $\beta_{1}\textsf{MAC}^{0}(\textsf{FOLL})$ really sits closer to $\textsf{FOLL}$ than to $\textsf{AC}^{1}$.
\end{remark}

\subsection{Weisfeiler--Leman} \label{sec:WLPrelims}

We begin by recalling the Weisfeiler--Leman algorithm for graphs, which computes an isomorphism-invariant coloring. Let $\Gamma$ be a graph, and let $k \geq 2$ be an integer. The $k$-dimensional Weisfeiler--Leman, or $k$-WL, algorithm begins by constructing an initial coloring $\chi_{0} : V(\Gamma)^{k} \to \mathcal{K}$, where $\mathcal{K}$ is our set of colors, by assigning each $k$-tuple a color based on its isomorphism type. That is, two $k$-tuples $(v_{1}, \ldots, v_{k})$ and $(u_{1}, \ldots, u_{k})$ receive the same color under $\chi_{0}$ iff the map $v_i \mapsto u_i$ (for all $i \in [k]$) is an isomorphism of the induced subgraphs $\Gamma[\{ v_{1}, \ldots, v_{k}\}]$ and $\Gamma[\{u_{1}, \ldots, u_{k}\}]$ and for all $i, j$, $v_i = v_j \Leftrightarrow u_i = u_j$. 

For $r \geq 0$, the coloring computed at the $r$th iteration of Weisfeiler--Leman is refined as follows. For a $k$-tuple $\overline{v} = (v_{1}, \ldots, v_{k})$ and a vertex $x \in V(\Gamma)$, define
\[
\overline{v}(v_{i}/x) = (v_{1}, \ldots, v_{i-1}, x, v_{i+1}, \ldots, v_{k}).
\]

The coloring computed at the $(r+1)$st iteration, denoted $\chi_{r+1}$, stores the color of the given $k$-tuple $\overline{v}$ at the $r$th iteration, as well as the colors under $\chi_{r}$ of the $k$-tuples obtained by substituting a single vertex in $\overline{v}$ for another vertex $x$. We examine this multiset of colors over all such vertices $x$. This is formalized as follows:
\begin{align*}
\chi_{r+1}(\overline{v}) = &( \chi_{r}(\overline{v}), \{\!\!\{ ( \chi_{r}(\overline{v}(v_{1}/x)), \ldots, \chi_{r}(\overline{v}(v_{k}/x) ) \bigr| x \in V(\Gamma) \}\!\!\} ),
\end{align*}
where $\{\!\!\{ \cdot \}\!\!\}$ denotes a multiset.

The \textit{count-free} variant of WL considers the set rather than the multiset of colors at each round. Precisely:
\begin{align*}
\chi_{r+1}(\overline{v}) = &( \chi_{r}(\overline{v}), \{ ( \chi_{r}(\overline{v}(v_{1}/x)), \ldots, \chi_{r}(\overline{v}(v_{k}/x) ) \bigr| x \in V(\Gamma) \} ).
\end{align*}

\noindent Note that the coloring $\chi_{r}$ computed at iteration $r$ induces a partition of $V(\Gamma)^{k}$ into color classes. The Weisfeiler--Leman algorithm terminates when this partition is not refined, that is, when the partition induced by $\chi_{r+1}$ is identical to that induced by $\chi_{r}$. The final coloring is referred to as the \textit{stable coloring}, which we denote $\chi_{\infty} := \chi_{r}$.

Brachter \& Schweitzer introduced three variants of WL for groups. WL Versions I and II are both executed directly on the groups, where $k$-tuples of group elements are initially colored. For WL Version I, two $k$-tuples $(g_{1}, \ldots, g_{k})$ and $(h_{1}, \ldots, h_{k})$ receive the same initial color iff (a) for all $i, j, \ell \in [k]$, $g_{i}g_{j} = g_{\ell} \iff h_{i}h_{j} = h_{\ell}$, and (b) for all $i, j \in [k]$, $g_{i} = g_{j} \iff h_{i} = h_{j}$. For WL Version II, $(g_{1}, \ldots, g_{k})$ and $(h_{1}, \ldots, h_{k})$ receive the same initial color iff the map $g_{i} \mapsto h_{i}$ for all $i \in [k]$ extends to an isomorphism of the generated subgroups $\langle g_{1}, \ldots, g_{k} \rangle$ and $\langle h_{1}, \ldots, h_{k} \rangle$. For both WL Versions I and II, refinement is performed in the classical manner as for graphs. Namely, for a given $k$-tuple $\overline{g}$ of group elements,
\begin{align*}
\chi_{r+1}(\overline{g}) = &( \chi_{r}(\overline{g}), \{\!\!\{ ( \chi_{r}(\overline{g}(g_{1}/x)), \ldots, \chi_{r}(\overline{g}(g_{k}/x) ) \bigr| x \in G \}\!\!\} ).
\end{align*}

\noindent We will not use WL Version III, and so we refer the reader to \cite{WLGroups} for details. 

\subsection{Pebbling Game}

We recall the bijective pebble game introduced by \cite{Hella1989, Hella1993} for WL on graphs. This game is often used to show that two graphs $X$ and $Y$ cannot be distinguished by $k$-WL. The game is an Ehrenfeucht--Fra\"iss\'e game (c.f., \cite{Ebbinghaus:1994, Libkin}), with two players: Spoiler and Duplicator. We begin with $k+1$ pairs of pebbles, which are placed beside the graph. Each round proceeds as follows.
\begin{enumerate}
\item Spoiler picks up a pair of pebbles $(p_{i}, p_{i}^{\prime})$. 
\item We check the winning condition, which will be formalized below.
\item Duplicator chooses a bijection $f : V(X) \to V(Y)$.
\item Spoiler places $p_{i}$ on some vertex $v \in V(X)$. Then $p_{i}^{\prime}$ is placed on $f(v)$. 
\end{enumerate} 

Let $v_{1}, \ldots, v_{m}$ be the vertices of $X$ pebbled at the end of step 1, and let $v_{1}^{\prime}, \ldots, v_{m}^{\prime}$ be the corresponding pebbled vertices of $Y$. Spoiler wins precisely if the map $v_{\ell} \mapsto v_{\ell}^{\prime}$ does not extend to an isomorphism of the induced subgraphs $X[\{v_{1}, \ldots, v_{m}\}]$ and $Y[\{v_{1}^{\prime}, \ldots, v_{m}^{\prime}\}]$. Duplicator wins otherwise. Spoiler wins, by definition, at round $0$ if $X$ and $Y$ do not have the same number of vertices. $X$ and $Y$ are not distinguished by the first $r$ rounds of $k$-WL if and only if Duplicator wins the first $r$ rounds of the $(k+1)$-pebble game \cite{Hella1989, Hella1993, CFI}. 

Versions I and II of the pebble game are defined analogously, where Spoiler pebbles group elements. We first introduce the notion of marked equivalence. Let $\overline{u} := (u_{1}, \ldots, u_{k}), \overline{v} := (v_{1}, \ldots, v_{k})$ be $k$-tuples of group elements. We say that $\overline{u}$ and $\overline{v}$ are \textit{marked equivalent} in WL Version I iff (i) for all $i, j \in [k]$, $u_{i} = u_{j} \iff v_{i} = v_{j}$, and (ii) for all $i, j, \ell \in [k]$, $u_{i}u_{j} = u_{\ell} \iff v_{i}v_{j} = v_{\ell}$. We say that $\overline{u}$ and $\overline{v}$ are marked equivalent in WL Version II if the map $u_{i} \mapsto v_{i}$ extends to an isomorphism of the generated subgroups $\langle u_{1}, \ldots, u_{k} \rangle$ and $\langle v_{1}, \ldots, v_{k} \rangle$.

We now turn to formalizing Versions I and II of the pebble game. Precisely, for groups $G$ and $H$, each round proceeds as follows.
\begin{enumerate}
\item Spoiler picks up a pair of pebbles $(p_{i}, p_{i}^{\prime})$. 
\item We check the winning condition, which will be formalized below.
\item Duplicator chooses a bijection $f : G \to H$.
\item Spoiler places $p_{i}$ on some vertex $g \in G$. Then $p_{i}^{\prime}$ is placed on $f(g)$. 
\end{enumerate} 

Suppose that $(g_{1}, \ldots, g_{\ell}) \mapsto (h_{1}, \ldots, h_{\ell})$ have been pebbled. Duplicator wins at the given round if this map is a marked equivalence in the corresponding version of WL. Brachter \& Schweitzer established that for $J \in \{I, II\}$, $(k,r)$-WL Version J is equivalent to version J of the $(k+1)$-pebble, $r$-round pebble game \cite{WLGroups}.

\begin{remark}
In our work, we explicitly control for both pebbles and rounds. In our theorem statements, we state explicitly the number of pebbles on the board. So if Spoiler can win with $k$ pebbles on the board, then we are playing in the $(k+1)$-pebble game. Note that $k$-WL corresponds to $k$-pebbles on the board.
\end{remark}

Brachter \& Schweitzer \cite[Theorem~3.9]{WLGroups} also previously showed that WL Version I, II, and III are equivalent up to a factor of $2$ in the dimension, though they did not control for rounds. Following the proofs of Brachter \& Schweitzer \cite{WLGroups} for the bijective games, Grochow \& Levet \cite[Appendix~A]{GLWL1} showed that only $O(\log n)$ additional rounds are necessary.  

There exist analogous pebble games for count-free WL Versions I-III. The count-free $(k+1)$-pebble game consists of two players: Spoiler and Duplicator, as well as $(k+1)$ pebble pairs $(p, p^{\prime})$. In Versions I and II, Spoiler wishes to show that the two groups $G$ and $H$ are not isomorphic; and in Version III, Spoiler wishes to show that the corresponding graphs $\Gamma_{G}, \Gamma_{H}$ are not isomorphic. Duplicator wishes to show that the two groups (Versions I and II) or two graphs (Version III) are isomorphic. Each round of the game proceeds as follows.
\begin{enumerate}
\item Spoiler picks up a pebble pair $(p_{i}, p_{i}^{\prime})$.
\item The winning condition is checked. This will be formalized later.
\item In Versions I and II, Spoiler places one of the pebbles on some group element (either $p_{i}$ on some element of $G$ or $p_{i}'$ on some element of $H$). In Version III, Spoiler places one of the pebbles on some vertex of one of the graphs (either $p_{i}$ on some vertex of $\Gamma_{G}$ or $p_{i}'$ on some element of $\Gamma_{H}$).
\item Duplicator places the other pebble on some element of the other group (Versions I and II) or some vertex of the other graph (Version III).
\end{enumerate}

Let $v_{1}, \ldots, v_{m}$ be the pebbled elements of $G$ (resp., $\Gamma_{G}$) at the end of step 1, and let $v_{1}^{\prime}, \ldots, v_{m}^{\prime}$ be the corresponding pebbled vertices of $H$ (resp., $\Gamma_{H}$). Spoiler wins precisely if the map $v_{\ell} \mapsto v_{\ell}^{\prime}$ does not extend to a marked equivalence in the appropriate version of count-free WL. Duplicator wins otherwise. Spoiler wins, by definition, at round $0$ if $G$ and $H$ do not have the same number of elements. We note that $G$ and $H$ (resp., $\Gamma_{G}, \Gamma_{H}$) are not distinguished by the first $r$ rounds of the count-free $k$-WL if and only if Duplicator wins the first $r$ rounds of the count-free $(k+1)$-pebble game. Grochow \& Levet \cite{GLWL1} established the equivalence between Versions I and II of the count-free pebble game and the count-free WL algorithm for groups.

The count-free $r$-round, $k$-WL algorithm for graphs is equivalent to the $r$-round, $(k+1)$-pebble count-free pebble game \cite{CFI}. Thus, the count-free $r$-round, $k$-WL Version III algorithm for groups introduced in Brachter \& Schweitzer \cite{WLGroups} is equivalent to the $r$-round, $(k+1)$-pebble count-free pebble game on the graphs $\Gamma_G, \Gamma_H$ associated to the groups $G,H$.

%%%LOGICS- TODO%%%%%
\subsection{Logics} \label{sec:Logics}

We recall the central aspects of first-order logic. We have a countable set of variables $\{x_{1}, x_{2}, \ldots, \}$. Formulas are defined inductively. As our basis, $x_{i} = x_{j}$ is a formula for all pairs of variables. Now if $\varphi, \psi$ are formulas, then so are the following: $\varphi \land \psi, \varphi \vee \psi, \neg{\varphi}, \exists{x_{i}} \, \varphi,$ and $\forall{x_{i}} \, \varphi$. In order to define logics on groups, it is necessary to define a relation that relates the group multiplication. We recall the two different logics introduced by Brachter \& Schweitzer \cite{WLGroups}.
\begin{itemize}
\item \textbf{Version I:} We add a ternary relation $R$ where $R(x_{i}, x_{j}, x_{\ell}) = 1$ if and only if $x_{i}x_{j} = x_{\ell}$ in the group. In keeping with the conventions of \cite{CFI}, we refer to the first-order logic with relation $R$ as $\mathcal{L}^{I}$ and its $k$-variable fragment as $\mathcal{L}^{I}_{k}$. We refer to the logic $\mathcal{C}^{I}$ as the logic obtained by adding counting quantifiers $\exists^{\geq n} x_{i} \, \varphi$ (there exist at least $n$ distinct $x_{i}$ that satisfy $\varphi$) and $\exists{!n} \, \varphi$ (there exist exactly $n$ distinct $x_{i}$ that satisfy $\varphi$) and its $k$-variable fragment as $\mathcal{C}^{I}_{k}$. If furthermore we restrict the formulas to have quantifier depth at most $r$, we denote this fragment as $\mathcal{C}^{I}_{k,r}$.

\item \textbf{Version II:} We add a relation $R$, defined as follows. Let $w \in (\{x_{i_{1}}, \ldots, x_{i_{t}}\} \cup \{ x_{i_{1}}^{-1}, \ldots, x_{i_{t}}^{-1}\})^{*}$. We have that $R(x_{i_{1}}, \ldots, x_{i_{t}}; w) = 1$ if and only if multiplying the group elements according to $w$ yields the identity. For instance, $R(a, b; [a,b])$ holds precisely if $a, b$ commute. Again, in keeping with the conventions of \cite{CFI}, we refer to the first-order logic with relation $R$ as $\mathcal{L}^{II}$ and its $k$-variable fragment as $\mathcal{L}^{II}_{k}$. We refer to the logic $\mathcal{C}^{II}$ as the logic obtained by adding counting quantifiers $\exists^{\geq n} x_{i} \, \varphi$ and $\exists{!n} \, \varphi$ and its $k$-variable fragment as $\mathcal{C}^{II}_{k}$. If furthermore we restrict the formulas to have quantifier depth at most $r$, we denote this fragment as $\mathcal{C}^{II}_{k,r}$.
\end{itemize}

\begin{remark}
Brachter \& Schweitzer \cite{WLGroups} refer to the logics with counting quantifiers as $\mathcal{L}_{I}$ and $\mathcal{L}_{II}$. We instead adhere to the conventions in \cite{CFI}.
\end{remark}

\noindent Let $J \in \{ I, II\}$. Brachter \& Schweitzer \cite{WLGroups} established that two groups $G, H$ are distinguished by $(k,r)$-WL Version $J$ if and only if there exists a formula $\varphi \in \mathcal{C}^{J}_{k+1,r}$ such that $G \models \varphi$ and $H \not \models \varphi$. Following the techniques of Brachter \& Schweitzer, Grochow \& Levet \cite{GLWL1} established the analogous result for count-free WL Version $J$ and the logic $\mathcal{L}^{J}$. In the setting of graphs, the equivalence between Weisfeiler--Leman and first-order logic with counting quantifiers is well known \cite{ImmermanLander1990, CFI}.

\subsection{Weisfeiler--Leman as a Parallel Algorithm} \label{sec:ParallelWL}

Grohe \& Verbitsky \cite{GroheVerbitsky} previously showed that for fixed $k$, the classical $k$-dimensional Weisfeiler--Leman algorithm for graphs can be effectively parallelized. Precisely, each iteration of the classical counting WL algorithm (including the initial coloring) can be implemented using a logspace uniform $\textsf{TC}^{0}$ circuit, and each iteration of the \textit{count-free} WL algorithm can be implemented using a logspace uniform $\textsf{AC}^{0}$ circuit. As they mention (\cite[Remark~3.4]{GroheVerbitsky}), their implementation works for any first-order structure, including groups. However, because here we have three different versions of WL, we explicitly list out the resulting parallel complexities, which differ slightly between the versions.

\begin{itemize}
\item \textbf{WL Version I:} Let $(g_{1}, \ldots, g_{k})$ and $(h_{1}, \ldots, h_{k})$ be two $k$-tuples of group elements. We may test in $\textsf{AC}^{0}$ whether (a) for all $i, j, m \in [k]$, $g_{i}g_{j} = g_{m} \iff h_{i}h_{j} = h_{m}$, and (b) $g_{i} = g_{j} \iff h_{i} = h_{j}$. So we may decide if two $k$-tuples receive the same initial color in $\textsf{AC}^{0}$. Comparing the multiset of colors at the end of each iteration (including after the initial coloring), as well as the refinement steps, proceed identically as in \cite{GroheVerbitsky}. Thus, for fixed $k$, each iteration of $k$-WL Version I can be implemented using a logspace uniform $\textsf{TC}^{0}$ circuit. In the setting of the count-free $k$-WL Version I, we are comparing the set rather than multiset of colors at each iteration. So each iteration (including the initial coloring) can be implemented using a logspace uniform $\textsf{AC}^{0}$ circuit.

\item \textbf{WL Version II:} Let $(g_{1}, \ldots, g_{k})$ and $(h_{1}, \ldots, h_{k})$ be two $k$-tuples of group elements. We may use the marked isomorphism test of Tang \cite{TangThesis} to test in $\textsf{L}$ whether the map sending $g_{i} \mapsto h_{i}$ for all $i \in [k]$ extends to an isomorphism of the generated subgroups $\langle g_{1}, \ldots, g_{k} \rangle$ and $\langle h_{1}, \ldots, h_{k} \rangle$. So we may decide whether two $k$-tuples receive the same initial color in $\textsf{L}$. Comparing the multiset of colors at the end of each iteration (including after the initial coloring), as well as the refinement steps, proceed identically as in \cite{GroheVerbitsky}. Thus, for fixed $k$, the initial coloring of $k$-WL Version II is $\textsf{L}$-computable, and each refinement step is $\textsf{TC}^{0}$-computable. In the case of the count-free $k$-WL Version II, the initial coloring is still $\textsf{L}$-computable, while each refinement step can be implemented can be implemented using a logspace uniform $\textsf{AC}^{0}$ circuit.
\end{itemize}

\subsection{Further Related Work} \label{sec:RelatedWork}

In addition to the intrinsic interest of this natural problem, a key motivation for the $\algprobm{Group Isomorphism}$ problem (\algprobm{GpI}) is its close relation to the \algprobm{Graph Isomorphism} problem ($\algprobm{GI}$). In the Cayley (verbose) model, $\algprobm{GpI}$ reduces to $\algprobm{GI}$ \cite{ZKT}, while $\algprobm{GI}$ reduces to the succinct $\algprobm{GpI}$ problem \cite{Heineken1974TheOO, Mekler} (recently simplified \cite{HeQiao}). In light of Babai's breakthrough result that $\algprobm{GI}$ is quasipolynomial-time solvable \cite{BabaiGraphIso}, $\algprobm{GpI}$ in the Cayley model is a key barrier to improving the complexity of $\algprobm{GI}$. Both verbose $\algprobm{GpI}$ and $\algprobm{GI}$ are considered to be candidate $\textsf{NP}$-intermediate problems, that is, problems that belong to $\textsf{NP}$, but are neither in $\textsf{P}$ nor $\textsf{NP}$-complete \cite{Ladner}. There is considerable evidence suggesting that $\algprobm{GI}$ is not $\textsf{NP}$-complete \cite{Schoning, BuhrmanHomer, ETH, BabaiGraphIso, GILowPP, ArvindKurur, MATHON1979131}. As verbose $\algprobm{GpI}$ reduces to $\algprobm{GI}$, this evidence also suggests that $\algprobm{GpI}$ is not $\textsf{NP}$-complete. It is also known that $\algprobm{GI}$ is strictly harder than $\algprobm{GpI}$ under $\textsf{AC}^{0}$ reductions \cite{ChattopadhyayToranWagner}. Tor\'an showed that $\algprobm{GI}$ is $\textsf{DET}$-hard \cite{Toran}. In particular, $\algprobm{Parity}$ is $\textsf{AC}^{0}$-reducible to $\algprobm{GI}$ \cite{CFI, Toran}. On the other hand, Chattopadhyay, Tor\'an, and Wagner showed that $\algprobm{Parity}$ is not $\textsf{AC}^{0}$-reducible to $\algprobm{GpI}$ \cite{ChattopadhyayToranWagner}. To the best of our knowledge, there is no literature on lower bounds for $\algprobm{GpI}$ in the Cayley table model. The best known complexity-theoretic upper bound for \algprobm{GI} is $\mathbb{F}$-\algprobm{Tensor Isomorphism} ($\textsf{TI}_{\mathbb{F}}$). When the field $\mathbb{F}$ is finite, $\textsf{TI}_{\mathbb{F}} \subseteq \textsf{NP} \cap \textsf{coAM}$ \cite{GrochowQiaoTensors}. Thus, in a precise sense, \algprobm{Graph Isomorphism} is at least as hard as linear algebra and is no harder than multilinear algebra.

Combinatorial techniques, such as individualization and refinement, have also been incredibly successful in $\algprobm{GI}$, yielding efficient isomorphism tests for several families \cite{GroheVerbitsky, KieferPonomarenkoSchweitzer, GroheKieferPlanar, grohe_et_al:LIPIcs:2019:10693, grohe2019canonisation, BabaiWilmes, ChenSunTeng}.  Weisfeiler--Leman is also a key subroutine in Babai's quasipolynomial-time isomorphism test \cite{BabaiGraphIso}. Despite the successes of such combinatorial techniques, they are known to be insufficient to place $\algprobm{GI}$ into $\textsf{P}$ \cite{CFI, NeuenSchweitzerIR}. In contrast, the use of combinatorial techinques for $\algprobm{GpI}$ is relatively new \cite{QiaoLiWL, BGLQW, WLGroups, BrachterSchweitzerWLLibrary, GLWL1, GLDescriptiveComplexity}, and with the key open question in the area being  whether such techniques are sufficient to improve even the long-standing upper-bound of $n^{\Theta(\log n)}$ runtime.

Even in the setting of graphs, there is little known about count-free Weisfeiler--Leman. It is known that count-free WL fails to place $\algprobm{GI}$ into $\textsf{P}$. In particular, almost all pairs of graphs are indistinguishable by the constant-dimensional count-free WL \cite{IMMERMAN198276, FaginCountFree}. Nonetheless, Grohe \& Verbitsky \cite{GroheVerbitsky} used the count-free WL algorithm to distinguish rotation systems in $O(\log n)$ rounds. As a consequence, they obtained a new proof that isomorphism testing of planar graphs was in $\textsf{AC}^{1}$. Verbitsky subsequently showed that the $14$-dimensional count-free WL algorithm places isomorphism testing of $3$-connected planar graphs into $\textsf{AC}^{1}$, removing the reduction for planar graphs to rotation systems in this special case \cite{VerbitskyPlanarCountFree}. Grochow \& Levet \cite{GLWL1} showed that in the setting of groups, count-free WL is unable to even distinguish Abelian groups in polynomial-time. Nonetheless, after $O(\log \log n)$ rounds, the multiset of colors computed by count-free WL will differ for an Abelian group $G$ and a non-isomorphic (not necessarily Abelian) group $H$. 

There has been considerable work on efficient parallel ($\textsf{NC}$) isomorphism tests for graphs \cite{LindellTreeCanonization, BirgitKoblerMcKenzieToran, KoblerVerbitsky,WagnerBoundedTreewidth, ElberfeldSchweitzer, GroheVerbitsky, GroheKieferPlanar, DattaLimayeNimbhorkarPrajaktaThieraufWagner, datta_et_al:LIPIcs:2009:2314, ARVIND20121}. In contrast with the work on serial runtime complexity, the literature on the space and parallel complexity for $\algprobm{GpI}$ is quite minimal. Around the same time as Tarjan's $n^{\log_{p}(n) + O(1)}$-time algorithm for $\algprobm{GpI}$ \cite{MillerTarjan}, Lipton, Snyder, and Zalcstein showed that $\algprobm{GpI} \in \textsf{SPACE}(\log^{2}(n))$ \cite{LiptonSnyderZalcstein}. This bound has been improved to to $\beta_{2}\textsf{AC}^{1}$ ($\textsf{AC}^{1}$ circuits that receive $O(\log^{2}(n))$ non-deterministic bits as input)\footnote{Wolf claimed a bound of $\beta_{2}\textsf{NC}^{2}$. However, he used $\textsf{NC}^{1}$ circuits to multiple two group elements, whereas an $\textsf{AC}^{0}$ circuit suffices.} \cite{Wolf}, and subsequently to $\beta_{2}\textsf{L} \cap \beta_{2}\textsf{FOLL} \cap \beta_{2}\textsf{SC}^{2}$ \cite{ChattopadhyayToranWagner, TangThesis}. In the case of Abelian groups, Chattopadhyay, Tor\'an, and Wagner showed that $\algprobm{GpI} \in \textsf{L} \cap \textsf{TC}^{0}(\textsf{FOLL})$ \cite{ChattopadhyayToranWagner}. Grochow \& Levet \cite{GLWL1} subsequently observed that the algorithm of \cite{ChattopadhyayToranWagner} could be implemented in $\beta_{1}\textsf{MAC}^{0}(\textsf{FOLL})$, further improving the complexity of identifying Abelian groups. Tang showed that isomorphism testing for groups with a bounded number of generators can also be done in $\textsf{L}$ \cite{TangThesis}. Since composition factors of permutation groups can be identified in $\textsf{NC}$ \cite{BabaiLuksSeress} (see also \cite{BealsCompositionFactors} for a CFSG-free proof), isomorphism testing \emph{between} two groups that are both direct products of simple groups (Abelian or non-Abelian) can be done in $\textsf{NC}$, using the regular representation, though this does not allow one to test isomorphism of such a group against an arbitrary group. Grochow \& Levet \cite{GLWL1} showed that Weisfeiler--Leman can compute the direct factors of a group in $\mathsf{NC}$, improving on the serial bound of Brachter \& Schweitzer \cite{BrachterSchweitzerWLLibrary}. Furthermore, Grochow \& Levet \cite{GLWL1} showed that WL serves as an $\textsf{L}$-computable isomorphism test for coprime extensions $H \ltimes N$ where $H$ is $O(1)$-generated and $N$ is Abelian.

Examining the distinguishing power of the counting logic $\mathcal{C}_{k}$ serves as a measure of descriptive  complexity for groups. In the setting of graphs, the descriptive complexity has been extensively studied, with \cite{GroheBook} serving as a key reference in this area. There has been recent work relating first order logics and groups \cite{FiniteGroupsFOL, WLGroups, BrachterSchweitzerWLLibrary}, as well as work examining the descriptive complexity of finite abelian groups \cite{DescriptiveComplexityAbelianGroups}. 

Ehrenfeucht--Fra\"iss\'e games \cite{Ehrenfeucht, Fraisse}, also known as pebbling games, serve as another tool in proving the inexpressibility of certain properties in first-order logics. 
Two finite structures are said to be elementary equivalent if they satisfy the same first-order sentences. In such games, we have two players who analyze two given structures. Spoiler seeks to prove the structures are not elementary equivalent, while Duplicator seeks to show that the structures are indeed elementary equivalent. Spoiler begins by selecting an element from one structure, and Duplicator responds by picking a similar element from the other structure. Spoiler wins if and only if the eventual substructures are not isomorphic. 
Pebbling games have served as an important tool in analyzing graph properties like reachability \cite{ajtai_fagin_1990, ARORA199797}, designing parallel algorithms for graph isomorphism \cite{GroheVerbitsky}, and isomorphism testing of random graphs \cite{Rossman2009EhrenfeuchtFrassGO}.

\section{Bounded-Generator Solvable Groups}

In this section, show that the count-free WL Version I serves as an $\textsf{FO}(\poly \log \log n)$-canonization procedure for $O(1)$-generated solvable groups with solvability class $\poly \log \log n$. We begin with the following theorem.

\begin{theorem} \label{thm:BoundedGenSolvable}
Let $G$ be an $O(1)$-generated solvable group with solvability class $r$, and let $H$ be arbitrary. The count-free $(O(1), O(r \log \log n))$-WL Version I algorithm can distinguish $G$ and $H$.
\end{theorem}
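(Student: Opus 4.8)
The plan is to use the pebble-game characterization of count-free WL Version~I (the equivalence recalled above) and exhibit a Spoiler strategy that uses $O(1)$ pebbles and forces a win within $O(r \log\log n)$ rounds whenever $G \not\cong H$. Since $G$ is $O(1)$-generated, say $G = \langle g_1, \ldots, g_d\rangle$ with $d = O(1)$, the first move is for Spoiler to individualize a canonical generating sequence of $G$. The heart of the argument is then a \emph{canonization} claim: once the generators are individualized, count-free WL assigns to every element of $G$ a color that uniquely encodes its expression relative to $g_1, \ldots, g_d$. If Duplicator's induced bijection fails to be an isomorphism, some element of $G$ is mapped to an element of $H$ bearing an incompatible color, and Spoiler navigates to this witness. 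Completeness against an \emph{arbitrary} $H$ follows because matching stable colors transfer the generator-factorization of each element of $G$ to $H$, yielding a product-preserving bijection; hence equal colorings force $G \cong H$.

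The round bound comes from the derived series $G = G^{(0)} \trianglerighteq G^{(1)} \trianglerighteq \cdots \trianglerighteq G^{(r)} = 1$. Each quotient $G^{(i)}/G^{(i+1)}$ is abelian, and the strategy is to resolve the elements of $G$ one derived layer at a time, from the top down. I would show that, given that the cosets of $G^{(i+1)}$ are already canonically colored, WL resolves the abelian layer $G^{(i)}/G^{(i+1)}$ in an additional $O(\log\log n)$ rounds; summing over the $r$ layers yields the claimed $O(r\log\log n)$ rounds. This is exactly where solvability buys the improvement over the generic $O(\log n)$ bound for $O(1)$-generated groups: instead of resolving straight-line expressions of depth $\log n$, we pay only $\log\log n$ per abelian quotient and multiply by the number $r$ of quotients.

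The per-layer $O(\log\log n)$ bound is the technical core, and it is where I would import the techniques of Barrington, Kadau, Lange, \& McKenzie on \algprobm{Cayley Group Membership}. In an abelian layer, resolving an element amounts to recovering its coordinates with respect to the (images of the) generators, i.e., solving an abelian membership/discrete-logarithm-type problem, which BKLM place in $\textsf{FOLL} = \textsf{FO}(\log\log n)$. I would translate each step of their $\log\log n$-depth procedure into $O(1)$ rounds of count-free WL played with $O(1)$ pebbles, reusing pebbles across rounds to emulate their iterated-squaring/doubling computation, so that the layer is resolved in $O(\log\log n)$ rounds. Throughout, the already-resolved higher layers serve as the individualized ``scalars'' that render the current quotient abelian and hence tractable.

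The main obstacle I anticipate is \emph{pebble efficiency}: the BKLM-style computation naturally wants to track $\Theta(\log n)$ intermediate group elements (the iterated squares), whereas the theorem allows only $O(1)$ pebbles. Overcoming this requires replaying the computation in the pebble game so that intermediate elements are recomputed on demand and pebbles are recycled, which is what converts ``width'' into ``rounds'' and must be done without exceeding the $O(\log\log n)$-per-layer budget. A secondary difficulty is the clean composition across the $r$ derived layers — maintaining the inductive hypothesis that lower layers stay canonically colored while the current layer is resolved, and checking that the $O(1)$-pebble budget survives Spoiler's simultaneous anchoring of the generators and navigation within a layer. Finally, one must certify completeness of the resulting invariant against an arbitrary (possibly neither solvable nor boundedly generated) $H$, i.e., that identical stable colorings genuinely reconstruct an isomorphism rather than merely a local product-preserving matching.
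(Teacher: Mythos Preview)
Your proposal is correct and follows essentially the same approach as the paper: individualize the $O(1)$ generators, then use the derived series together with the BKLM techniques to canonically color every element in $O(r\log\log n)$ rounds. The paper's proof is simply terser, invoking \cite[Theorem~3.5]{BKLM} and \cite[Lemmas~7.3, 7.9, 7.12]{GLWL1} directly; in particular, your anticipated ``pebble efficiency'' obstacle (recycling $O(1)$ pebbles to track the $\Theta(\log n)$ intermediate powers) is exactly what \cite[Lemma~7.3]{GLWL1} already packages, so you need not redevelop it. One minor indexing slip: your inductive hypothesis should read ``cosets of $G^{(i)}$ are already canonically colored'' (i.e., $G/G^{(i)}$ is resolved), after which you resolve the abelian layer $G^{(i)}/G^{(i+1)}$ to obtain $G/G^{(i+1)}$.
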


\begin{proof}
Let $d := d(G)$. Suppose that generators $(g_{1}, \ldots, g_{d})$ for $G$ have been individualized. By \cite[Lemma~7.12]{GLWL1}, powers of individualized elements will receive a unique color after $O(\log \log n)$ rounds of count-free WL Version III. By \cite[Lemma~7.9]{GLWL1}, the result also holds for count-free WL Version I. Now by \cite[Lemma~7.3]{GLWL1} and \cite[Lemma~7.9]{GLWL1}, words of length $O(\log n)$ over individualized elements will receive a unique color after $O(\log \log n)$ iterations of count-free WL Version I. It follows by \cite[Theorem~3.5]{BKLM} that if $G$ has solvability class $r$, that $O(r \log \log n)$ rounds of count-free WL Version I will assign a unique color to each element of $G$. The result now follows.
\end{proof}

We immediately obtain the following corollary.

\begin{corollary} \label{cor:SolvableIsomorphism}
Let $G$ be an $O(1)$-generated solvable group with solvability class $(\log \log n)^{c}$, and let $H$ be arbitrary. We can decide isomorphism between $G$ and $H$ in $\textsf{FO}((\log \log n)^{c+1})$.
\end{corollary}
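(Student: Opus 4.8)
The plan is to derive Corollary~\ref{cor:SolvableIsomorphism} as an almost-immediate consequence of Theorem~\ref{thm:BoundedGenSolvable} combined with the Grochow--Levet isomorphism-testing strategy recalled in the introduction. The key observation is that Theorem~\ref{thm:BoundedGenSolvable} with solvability class $r = (\log \log n)^{c}$ gives a count-free WL Version~I procedure that runs for $O(r \log \log n) = O((\log \log n)^{c+1})$ rounds and assigns a \emph{unique} color to each element of $G$ once $d = d(G)$ generators have been individualized. Since $d = O(1)$, individualizing these generators costs only a constant number of additional pebbles. I would first note that each round of count-free WL Version~I is $\textsf{AC}^{0}$-computable (Section~\ref{sec:ParallelWL}), so running $O((\log\log n)^{c+1})$ rounds is exactly an $\textsf{FO}((\log\log n)^{c+1})$ computation, matching the claimed bound.

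Next I would turn the \emph{distinguishing} guarantee of Theorem~\ref{thm:BoundedGenSolvable} into an \emph{isomorphism test} against an arbitrary $H$. First I would run the count-free WL Version~I refinement on the disjoint union of $G$ and $H$ (equivalently, run it on each separately with a shared color alphabet) for $O((\log\log n)^{c+1})$ rounds. If $H \cong G$, then $H$ is also $O(1)$-generated and solvable with the same solvability class, so by Theorem~\ref{thm:BoundedGenSolvable} the stable coloring is a canonical labeling of both groups, and an isomorphism can be read off by matching color classes (each of which is a singleton after individualizing the images of the $g_i$ in $H$). Conversely, if $G \not\cong H$, Theorem~\ref{thm:BoundedGenSolvable} guarantees that the count-free algorithm distinguishes them within the same round bound. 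To handle the individualization of the generators in an $\textsf{FO}$ computation rather than interactively, I would existentially guess the $d = O(1)$ images in $H$ of the individualized generators $(g_1,\dots,g_d)$; since $d$ is constant this is an $O(\log n)$-bit guess, or one may simply enumerate the $n^{O(1)}$ choices in parallel within the $\textsf{AC}^0$/\,$\textsf{FO}$ budget, checking for each whether the induced canonical colorings of $G$ and $H$ agree.

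The main obstacle I anticipate is bookkeeping the interface between the \emph{canonization} statement (Theorem~\ref{thm:BoundedGenSolvable} assigns unique colors once generators are fixed) and the \emph{isomorphism-test} statement (the corollary decides $G \cong H$ for arbitrary $H$). Specifically, one must argue that when $H$ is \emph{not} of the promised form (not $O(1)$-generated, or not solvable of the right class), the procedure still correctly outputs that $G \not\cong H$: here the promise is only on $G$, so the distinguishing clause of Theorem~\ref{thm:BoundedGenSolvable} (which takes $H$ arbitrary) does the work, and no structural assumption on $H$ is needed for soundness. I would make explicit that the round count $O(r \log\log n)$ depends only on the solvability class of $G$, so fixing $r = (\log\log n)^{c}$ yields the uniform depth $O((\log\log n)^{c+1})$ required for membership in $\textsf{FO}((\log\log n)^{c+1})$. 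The remaining details---that comparing the two stable colorings and verifying a color-class bijection is itself $\textsf{AC}^0$-computable, and that the constant-size individualization is absorbed into the constant dimension---are routine and do not affect the depth bound.
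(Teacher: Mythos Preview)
Your proposal is correct and lands on the same idea the paper uses, but you are doing more work than necessary. The paper treats the corollary as literally immediate: Theorem~\ref{thm:BoundedGenSolvable} already states that count-free $(O(1), O(r\log\log n))$-WL Version~I \emph{distinguishes} $G$ from any $H$, so plugging in $r = (\log\log n)^{c}$ and invoking the Grohe--Verbitsky $\textsf{AC}^{0}$-per-round implementation (Section~\ref{sec:ParallelWL}) yields the $\textsf{FO}((\log\log n)^{c+1})$ bound with nothing further to argue.

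You instead work from the \emph{proof} of Theorem~\ref{thm:BoundedGenSolvable} (unique colors after individualizing generators) and then reconstruct the distinguishing conclusion by explicitly guessing or enumerating images of generators in $H$. This is sound, but redundant: the individualization is already absorbed into the pebble-game side of the theorem (Spoiler pebbles the $d = O(1)$ generators), so the WL algorithm itself needs no separate guessing step. Your discussion of the ``main obstacle'' (soundness when $H$ is not of the promised form) is likewise already handled by the theorem statement, which takes $H$ arbitrary. In short, cite the theorem's conclusion rather than its proof, and the corollary collapses to two lines.
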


\begin{remark} \label{rmk:SolvableCanonization}
It is possible to upgrade \Thm{thm:BoundedGenSolvable} to obtain $\textsf{quasiFO}((\log \log n)^{c+1})$ canonization. Let $G$ be a $d$-generated solvable group with solvability class $(\log \log n)^{c}$. We first run the count-free $(d, O((\log \log n)^{c+1}))$-WL Version I. For each color class containing an element (and thus, all elements) of the form $(g_{1}, \ldots, g_{d})$ where $g_{1}, \ldots, g_{d}$ are all distinct, we individualize $g_{1}, \ldots, g_{d}$ and run the count-free $(d, O((\log \log n)^{c+1}))$-WL Version I. To obtain a canonical form, we take a representative $(g_{1}, \ldots, g_{d})$ such that each color class has size $1$ after individualizing $(g_{1}, \ldots, g_{d})$ and running the count-free $(d, O((\log \log n)^{c+1}))$-WL Version I. 

To see that finding the minimum color class (under the natural ordering of the labels of the color classes) is $\textsf{AC}^{0}$-computable given the coloring, we appeal to the characterization that $\textsf{AC}^{0} = \textsf{FO}$ \cite{MIXBARRINGTON1990274}. We may write down a first-order formula for the minimum element, and so finding the minimum color class is $\textsf{AC}^{0}$-computable given the coloring. Furthermore, identifying the members of a given color class is $\textsf{AC}^{0}$-computable. Thus, in total, we only require an $\textsf{quasiFOLL}$ circuit for canonization. If $d \in O(1)$, we have an $\textsf{FO}((\log \log n)^{c+1})$ circuit. We record this with the following theorem.
\end{remark}

\begin{theorem}
Canonical forms for solvable groups with solvability class $O((\log \log n)^{c})$ can be computed using a $\textsf{quasiFO}((\log \log n)^{c+1})$ circuit. Furthermore, for $O(1)$-generated solvable groups with solvability class $O((\log \log n)^{c})$, canonical forms can be computed using an $\textsf{FO}((\log \log n)^{c+1})$ circuit.
\end{theorem}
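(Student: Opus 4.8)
The statement essentially records the canonization procedure sketched in \Rmk{rmk:SolvableCanonization}, so the plan is to turn that sketch into a clean argument and account carefully for the circuit resources. The high-level idea is: individualize a canonically-chosen generating tuple, run count-free WL Version I long enough to discretize the group (assign every element its own color), and then read off the Cayley table in color order to produce an isomorphism-invariant normal form.

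First I would fix the dimension. Any finite group of order $n$ is generated by at most $\lceil \log_{2} n \rceil$ elements, since each new generator at least doubles the order of the subgroup generated so far; set $d := \lceil \log_{2} n \rceil$ in the general case and $d = O(1)$ in the bounded-generator case. The key structural input is the discretization fact established inside the proof of \Thm{thm:BoundedGenSolvable}: if $G$ is solvable with solvability class $r = O((\log\log n)^{c})$ and we individualize a generating tuple of pairwise distinct elements, then $O(r \log\log n) = O((\log\log n)^{c+1})$ rounds of count-free WL Version I assign a \emph{unique} color to every element of $G$. I would then run, in parallel, the count-free $(d, O((\log\log n)^{c+1}))$-WL Version I once for each individualization of a $d$-tuple $(g_{1}, \ldots, g_{d})$ of pairwise distinct elements. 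Since any generating set of size $d(G) \le d$ can be padded to a $d$-tuple of distinct generators (which still generate $G$), at least one such individualization discretizes $G$ completely. Among the tuples whose individualization produces all-singleton color classes, I would select the canonical one, say the tuple lying in the lexicographically least stable color class under a fixed ordering of color labels, and relabel each element by the rank of its resulting unique color; reading off the multiplication table in this order is the canonical form.

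For the resource accounting: each round of count-free WL Version I is $\textsf{AC}^{0}$-computable, so $O((\log\log n)^{c+1})$ rounds contribute depth $O((\log\log n)^{c+1})$. Selecting the minimum color class and extracting the relabeled multiplication table is $\textsf{AC}^{0} = \textsf{FO}$ and hence adds only constant depth. The size is quasipolynomial for two compounding reasons: the dimension $d = O(\log n)$ forces tracking $n^{O(\log n)}$ tuples, and there are $n^{O(\log n)}$ candidate individualizations run in parallel. This yields the $\textsf{quasiFO}((\log\log n)^{c+1})$ bound. When $G$ is $O(1)$-generated we have $d = O(1)$, both counts drop to polynomial, and the circuit becomes $\textsf{FO}((\log\log n)^{c+1})$.

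The main obstacle is establishing genuine \emph{canonicity} rather than mere identification: I must argue that the chosen discretizing tuple and the induced element ordering are isomorphism-invariant, so that isomorphic inputs yield bit-identical Cayley tables. This rests on two facts about the count-free WL coloring that I would verify explicitly. First, the predicate ``all coordinates of the tuple are distinct'' is itself WL-invariant (it is part of the initial coloring via the equality pattern), so each color class is homogeneous in this respect and the selection step is well-defined. Second, two $d$-tuples lying in the same stable color class are WL-indistinguishable, so individualizing either and re-running WL produces the same refined partition and hence the same relabeled table; combined with the isomorphism-invariance of the whole coloring, choosing the lex-least discretizing color class picks out ``the same'' abstract generating configuration for any two isomorphic copies of $G$. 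Pinning down this last invariance is the delicate part; everything else is bookkeeping on top of \Thm{thm:BoundedGenSolvable}.
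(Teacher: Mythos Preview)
Your proposal is correct and follows essentially the same approach as the paper: the paper's proof is precisely the sketch in \Rmk{rmk:SolvableCanonization}, which runs count-free $(d, O((\log\log n)^{c+1}))$-WL Version I, then for each color class of distinct-coordinate $d$-tuples individualizes a representative and re-runs WL, selecting the minimum color class that discretizes. Your write-up is in fact more careful than the paper's on the canonicity point (that tuples in the same stable color class yield identical relabeled tables, and that the ``all coordinates distinct'' predicate is WL-invariant), which the remark leaves implicit.
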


\begin{remark}
As nilpotent groups have solvability class $O(\log \log n)$ \cite{Therien1980ClassificationOR}, we obtain the following consequence.
\end{remark}

\begin{corollary}
Canonical forms for nilpotent groups can be computed using a $\textsf{quasiFO}((\log \log n)^{2})$ circuit. Furthermore, for $O(1)$-generated nilpotent groups, canonical forms can be computed using an $\textsf{FO}((\log \log n)^{2})$ circuit.
\end{corollary}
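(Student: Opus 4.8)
The plan is to derive the statement directly from the preceding (unlabeled) theorem by specializing its solvability-class parameter to $c = 1$. The only nontrivial ingredient is the classical bound, due to Th\'erien \cite{Therien1980ClassificationOR}, that every finite nilpotent group of order $n$ has derived length $O(\log \log n)$. I would recall the reasoning briefly: a finite nilpotent group decomposes as the direct product of its Sylow subgroups, so its derived length equals the maximum derived length over those $p$-group factors. A $p$-group of order $p^{m}$ has nilpotency class at most $m - 1$, and a nilpotent group of class $c$ has derived length $O(\log c)$ (since $G^{(i)} \subseteq \gamma_{2^{i}}(G)$, it suffices to take $i = \lfloor \log_{2} c \rfloor + 1$). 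Because $m \leq \log_{2} n$, this gives derived length $O(\log m) = O(\log \log n)$ for each factor, and hence for the whole group.

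With this fact in hand, every finite nilpotent group is a solvable group of solvability class $O((\log \log n)^{1})$, so it satisfies the hypothesis of the preceding theorem with $c = 1$. I would then simply invoke that theorem: its first assertion yields canonical forms computable by a $\textsf{quasiFO}((\log \log n)^{1+1}) = \textsf{quasiFO}((\log \log n)^{2})$ circuit, and its second assertion—which additionally assumes $O(1)$ generators—yields that $O(1)$-generated nilpotent groups admit canonical forms computable by an $\textsf{FO}((\log \log n)^{1+1}) = \textsf{FO}((\log \log n)^{2})$ circuit. This is exactly the claimed bound, and no further argument is needed.

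Since the derivation is a single substitution $c = 1$, there is no substantive obstacle in the reduction itself; the only point meriting care is the $O(\log \log n)$ bound on the derived length, which is precisely the content of \cite{Therien1980ClassificationOR} and is exactly what keeps the exponent $c + 1$ at its minimal value $2$. A group family with a larger solvability class would produce a correspondingly larger exponent, so the nilpotent case is favorable precisely because its solvability class sits at the smallest interesting scale admitted by the theorem.
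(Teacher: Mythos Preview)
Your proposal is correct and matches the paper's own reasoning exactly: the paper simply remarks that nilpotent groups have solvability class $O(\log\log n)$ \cite{Therien1980ClassificationOR} and states the corollary as an immediate consequence of the preceding theorem with $c=1$. Your additional sketch of why the Th\'erien bound holds is a helpful elaboration but not a departure from the paper's approach.
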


We now turn our attention to canonizing finite simple groups. Grochow \& Levet \cite[Corollary~7.14]{GLWL1} previously observed that finite simple groups are identified by the $(O(1), O(\log \log n))$-WL Versions I and III. They appealed to the fact that finite simple groups are uniquely identified amongst all groups by their order and the set of orders of their elements \cite{SimpleOrder}. While this result suffices for an $\textsf{FOLL}$ isomorphism test, it is not clear how to obtain $\textsf{FOLL}$ canonization for finite simple groups. To this end, we recall the following result of Babai, Kantor, \& Lubotzky \cite{BabaiKantorLubotsky}.

\begin{theorem}[{\cite{BabaiKantorLubotsky}}] \label{thm:BabaiKantorLubotzky}
There exists an absolute constant $C$ such that for every non-Abelian finite simple group $S$, there exists a generating sequence $(s_{1}, \ldots, s_{7})$ such that every element of $S$ can be realized as a word of length $C \cdot \log |S|$ over $s_{1}, \ldots, s_{7}$.
\end{theorem}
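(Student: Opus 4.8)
The plan is to prove this via the Classification of Finite Simple Groups (CFSG), partitioning the non-Abelian finite simple groups into the alternating groups, the groups of Lie type, and the sporadic groups, establishing the bound separately in each family, and finally taking $C$ to be the maximum of the constants obtained while checking that a generating sequence of length $7$ suffices uniformly. Before doing any work one should note that $C\log|S|$ is optimal up to the constant: with a generating sequence of bounded length $k$, the ball of radius $\ell$ in the corresponding Cayley graph contains at most $(2k+1)^{\ell}$ elements, so \emph{every} bounded generating sequence forces words of length $\Omega(\log|S|)$. The content of the theorem is thus that this information-theoretic optimum is achievable with a \emph{bounded} (indeed, $7$-element) generating sequence. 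The sporadic case is then immediate: there are only $26$ groups, each of fixed finite order and $2$-generated, so one pads a generating pair to length $7$ and chooses $C$ large enough that words of length $C\log|S|$ exhaust each of these finitely many groups.

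\textbf{Alternating groups.} Here $\log|A_n| = \Theta(n\log n)$, and the difficulty is that the obvious bounded generating sequences (say an $n$-cycle together with a $3$-cycle) have diameter $\Theta(n^2)$, which is quadratically too large. The plan is to engineer a bounded generating sequence whose Cayley graph admits a divide-and-conquer routing: by choosing generators that realize, in $O(1)$ steps each, the basic butterfly/sorting-network moves on $\{1,\dots,n\}$, one can bring any permutation to the identity in $O(\log n)$ rounds of $O(n)$ work, i.e.\ $O(n\log n) = O(\log|A_n|)$ generator applications, matching the lower bound.

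\textbf{Groups of Lie type.} This is the main case. One uses the Bruhat decomposition $G = \bigcup_{w \in W} U\,\dot w\,T\,U_w$ to reduce the task of writing an arbitrary $g$ as a short word to expressing, as words of length $O(\log|S|)$, (i) a Weyl-group element $w$, (ii) a torus element, and (iii) a product of root-subgroup elements $x_\alpha(t)$. For bounded rank (exceptional and small classical groups) the Weyl group is bounded and (i) is free; for classical groups of growing rank, $w$ is a (signed) permutation and (i) reduces to the alternating-group routing above. The crux is (iii): a root element $x_\alpha(t)$ ranges over $\F_q^{+}$ with $q = p^{f}$, and naive repeated addition writes $t$ as a word of length $\sim q$, exponentially longer than the allowed $O(\log q)$. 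The resolution is to exploit a torus element $h$ conjugating $x_\alpha(t)$ to $x_\alpha(\lambda t)$ for a suitable scalar $\lambda$; combining this multiply-by-$\lambda$ operation with addition realizes a Horner/repeated-doubling expansion of $t$, yielding $x_\alpha(t)$ in $O(\log q)$ steps, after which Chevalley's commutator relations assemble the full unipotent part and similar scaling handles the torus. Summing over the $O(\mathrm{rk})$ root subgroups and the Weyl word gives total length $O(\log|S|)$.

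Taking $C$ to be the maximum of the three constants, and verifying that $7$ generators suffice in every family (two already suffice for each individual simple group by Steinberg and Aschbacher--Guralnick, but the explicit short-word-friendly sequences above may use a few more), yields the uniform statement. I expect step (iii) --- encoding field elements of a large field as short words via the torus action --- to be the principal obstacle, with the $O(n\log n)$ routing for the alternating groups a close second, since both are precisely the points where one must beat the naive linear-length constructions to reach the logarithmic optimum.
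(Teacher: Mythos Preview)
The paper does not prove this statement at all: Theorem~\ref{thm:BabaiKantorLubotzky} is quoted from Babai, Kantor, and Lubotzky as a black box, with no argument given beyond the citation. So there is no ``paper's own proof'' to compare against; the authors simply import the result and use it (via Corollary~\ref{cor:FOLLSimple} and in Section~6) to bound word lengths in simple groups.

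That said, your outline is a faithful sketch of the original Babai--Kantor--Lubotzky argument: the CFSG case split, the trivial sporadic case, the sorting-network routing for $A_n$ to hit $O(n\log n)$ diameter, and the Bruhat decomposition combined with torus conjugation to encode field elements in $O(\log q)$ steps for the Lie-type groups. The two points you flag as the main obstacles---the $O(n\log n)$ routing for alternating groups and the Horner-style expansion of root elements via a torus scalar---are exactly the substantive technical steps in the cited paper, and your identification of why naive generating sets fail (quadratic diameter for $A_n$, linear-in-$q$ for root elements) is correct. If you were to flesh this out, the places needing the most care are making the $7$-generator bound uniform across the Lie-type families (the original paper is explicit about this) and handling the torus part in unbounded rank, but nothing in your plan is wrong.
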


This yields an alternative proof that finite simple groups admit an $\textsf{FOLL}$ isomorphism test.

\begin{corollary} \label{cor:FOLLSimple}
Let $G$ be a finite simple group, and let $H$ be arbitrary. We can decide isomorphism between $G$ and $H$ in $\textsf{FOLL}$.
\end{corollary}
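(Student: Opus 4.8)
The plan is to mirror the proof of \Thm{thm:BoundedGenSolvable}, replacing the bound on the solvability class with the short-word guarantee of \Thm{thm:BabaiKantorLubotzky}. First I would dispose of the Abelian case: the only Abelian finite simple groups are the cyclic groups $\mathbb{Z}/p\mathbb{Z}$ of prime order, and such a $G$ satisfies $G \cong H$ if and only if $|H| = |G|$, since every group of prime order is cyclic. Testing commutativity of $G$ and comparing orders is $\textsf{AC}^0 \subseteq \textsf{FOLL}$, so it remains to treat the non-Abelian case.

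For non-Abelian simple $G$ of order $n$, the key point is that $G$ admits a discrete (all-colors-distinct) stable coloring after only $O(\log\log n)$ rounds of count-free WL Version I, once a suitable generating sequence is individualized. Concretely, I would individualize a generating sequence $(s_1, \ldots, s_7)$ supplied by \Thm{thm:BabaiKantorLubotzky}, over which every element of $G$ is a word of length at most $C\log|G| = O(\log n)$. By \cite[Lemma~7.3]{GLWL1} together with \cite[Lemma~7.9]{GLWL1}, words of length $O(\log n)$ over individualized elements receive a unique color after $O(\log\log n)$ iterations of count-free WL Version I; since by Babai--Kantor--Lubotzky every element of $G$ is such a word, every element of $G$ receives a unique color. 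Thus the individualized, $O(1)$-dimensional, $O(\log\log n)$-round count-free WL Version I produces a discrete coloring of $G$.

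To decide $G \cong H$ I would run the following $\textsf{FOLL}$ procedure. First check $|G| = |H|$; if not, reject. Otherwise, existentially branch over all pairs of $7$-tuples $(\overline{g}, \overline{h}) \in G^7 \times H^7$ --- a fan-in $n^{14} = \poly(n)$ disjunction, hence $\textsf{FO}$-cheap --- and for each pair individualize $\overline{g}$ in $G$ and $\overline{h}$ in $H$, run $O(\log\log n)$ rounds of count-free WL Version I (an $\textsf{FOLL}$ computation), and accept the branch if both resulting colorings are discrete and the induced color-preserving bijection $f \colon G \to H$ is an isomorphism. The last test is $\textsf{AC}^0 \subseteq \textsf{FOLL}$: once both colorings are discrete, $f$ is determined by matching the unique occurrences of each color, and verifying $f(g_1 g_2) = f(g_1) f(g_2)$ for all $g_1, g_2$ is a single unbounded conjunction with Cayley-table lookups. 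Accept $G \cong H$ if some branch accepts. For soundness, an accepting branch exhibits an explicit isomorphism. For completeness, if $G \cong H$ via $\varphi$, taking $\overline{g}$ to be the Babai--Kantor--Lubotzky sequence and $\overline{h} = \varphi(\overline{g})$ makes the coloring of $G$ discrete by the previous paragraph, makes the coloring of $H$ discrete by isomorphism-invariance of WL, and renders $f = \varphi$; hence that branch accepts.

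The main thing to verify carefully is that this composition stays inside $\textsf{FOLL}$ rather than merely $\textsf{AC}^1$: the polynomial existential branching and the final verification each add only $O(1)$ to the circuit depth, while the count-free WL contributes depth $O(\log\log n)$ and polynomial size, and composing the $\poly(n)$ branches keeps the size polynomial, so the overall depth remains $O(\log\log n)$. The only genuinely group-theoretic input is \Thm{thm:BabaiKantorLubotzky}, which simultaneously supplies a bounded generating sequence (keeping the WL dimension $O(1)$) and the $O(\log n)$ word-length bound (so that \cite[Lemma~7.3]{GLWL1} applies to every element at once); everything else is the same individualize-and-refine bookkeeping already used in \Thm{thm:BoundedGenSolvable}.
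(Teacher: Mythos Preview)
Your proposal is correct and takes essentially the same approach as the paper: the corollary is meant to follow from \Thm{thm:BabaiKantorLubotzky} exactly by mirroring the proof of \Thm{thm:BoundedGenSolvable}, replacing the solvability-class word-length bound of \cite{BKLM} with the $O(\log n)$ word-length guarantee of Babai--Kantor--Lubotzky, and then proceeding as in \Rmk{rmk:SolvableCanonization}. Your explicit treatment of the Abelian case and the spelled-out branching verification are more detailed than what the paper states, but they are the intended filling-in of ``the result now follows.''
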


To obtain $\textsf{FOLL}$ canonization, we proceed identically as in \Rmk{rmk:SolvableCanonization}:

\begin{theorem}
Let $G$ be a finite simple group. We can compute a canonical labeling for $G$ in $\textsf{FOLL}$.
\end{theorem}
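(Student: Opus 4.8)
The plan is to combine \Cor{cor:FOLLSimple} with the canonization strategy outlined in \Rmk{rmk:SolvableCanonization}, now instantiated using the Babai--Kantor--Lubotzky generating sequence from \Thm{thm:BabaiKantorLubotzky} rather than an arbitrary $O(1)$-sized generating set. The key observation is that a non-Abelian finite simple group $S$ admits a generating sequence $(s_1, \ldots, s_7)$ of length exactly $7$ over which every element is realized as a word of length $C \cdot \log |S| = O(\log n)$. This is precisely the ingredient that drives the canonization argument: once such generators are individualized, every group element is a short word over them, and hence receives a unique color after $O(\log\log n)$ rounds of count-free WL Version I. (The Abelian case, i.e. the cyclic groups of prime order, is trivial and can be handled separately or folded into the general bounded-generator machinery.)

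First I would run count-free $(7, O(\log\log n))$-WL Version I on $G$ without individualization. By the same reasoning as in \Thm{thm:BoundedGenSolvable} --- invoking \cite[Lemmas~7.3, 7.9, 7.12]{GLWL1} --- after individualizing any generating sequence witnessing \Thm{thm:BabaiKantorLubotzky}, all words of length $O(\log n)$ over those generators receive a unique color in $O(\log\log n)$ rounds, and so every element of $G$ is assigned a unique color. Next, I would identify a canonical choice of individualized generators exactly as in \Rmk{rmk:SolvableCanonization}: examine the color classes of $7$-tuples $(g_1, \ldots, g_7)$ with distinct entries, individualize each candidate tuple, rerun the count-free $(7, O(\log\log n))$-WL Version I, and select (under the natural ordering on color-class labels) the minimum-labeled tuple for which every color class has size $1$ after individualization. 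Since $\Thm{thm:BabaiKantorLubotzky}$ guarantees such a generating sequence exists, the individualize-and-refine step is guaranteed to succeed on at least one tuple, and the resulting stable coloring gives each element a distinct label, yielding a canonical form.

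For the complexity accounting, each invocation of count-free WL Version I runs for $O(\log\log n)$ rounds, and each round is $\textsf{AC}^0$-computable by the parallel implementation of Section~\ref{sec:ParallelWL}; there are only $\poly(n)$ candidate $7$-tuples, so the branching over candidates contributes only polynomial width and constant additional depth. As argued in \Rmk{rmk:SolvableCanonization}, selecting the minimum color class and reading off the members of each color class are both $\textsf{AC}^0 = \textsf{FO}$ computable given the coloring \cite{MIXBARRINGTON1990274}. Crucially, because the generating sequence has bounded length ($7 \in O(1)$), we are in the $d \in O(1)$ regime of \Rmk{rmk:SolvableCanonization}, so no quasi-polynomial blowup from enumerating generators of unbounded size is incurred; the whole procedure fits into $\textsf{FO}(\log\log n) = \textsf{FOLL}$.

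The main obstacle I anticipate is not the combinatorics of WL but ensuring that the Babai--Kantor--Lubotzky bound genuinely replaces the order-based argument of \cite[Corollary~7.14]{GLWL1} in a way that supports \emph{canonization} and not merely \emph{identification}. The order-and-element-order characterization of \cite{SimpleOrder} lets one decide isomorphism but gives no handle on producing a labeled copy; the content of \Thm{thm:BabaiKantorLubotzky} is precisely that a \emph{bounded-length} generating sequence with \emph{logarithmic-diameter} word growth exists, which is what makes the individualize-and-refine canonization terminate in $O(\log\log n)$ rounds with all singleton color classes. The remaining care is to confirm that the short-word coloring lemmas of \cite{GLWL1}, stated there for count-free WL Versions I and III, apply verbatim to generators arising from \Thm{thm:BabaiKantorLubotzky}, which they do since those lemmas only require that each element be a word of length $O(\log n)$ over the individualized elements.
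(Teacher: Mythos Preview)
Your proposal is correct and matches the paper's approach exactly: the paper's proof is simply the one-line remark ``proceed identically as in \Rmk{rmk:SolvableCanonization},'' now using the $7$-element Babai--Kantor--Lubotzky generating sequence from \Thm{thm:BabaiKantorLubotzky} in place of an arbitrary $O(1)$-sized generating set. You have faithfully unpacked that remark, including the key point that $d=7\in O(1)$ keeps the procedure in $\textsf{FOLL}$ rather than $\textsf{quasiFOLL}$.
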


\section{CFI Groups}

\subsection{Preliminaries}

\textbf{CFI Construction.} Cai, F\"urer, \& Immerman \cite{CFI} previously established that for every $k$, there exist a pair of graphs that are indistinguishable by $k$-WL. We recall their construction, which we denote the CFI construction, here. We begin with a connected base graph $\Gamma$. In $\Gamma$, each vertex is replaced by a particular gadget, and the gadgets are interconnected according to the edges of $\Gamma$ as follows. For a vertex of degree $d$, we define the gadget $F_{d}$ to be the graph whose vertex set consists of a set of external vertices $O_{d} = \{ a_{1}^{v}, b_{1}^{v}, \ldots, a_{d}^{v}, b_{d}^{v}\}$ and a set of internal vertices $M_{d}$ which are labeled according to the strings in $\{0,1\}^{d}$ that have an even number of $1$'s. For each $i$, each internal vertex $u$ of $M_{d}$ is adjacent to exactly one of $\{ a_{i}^{v}, b_{i}^{v}\}$; namely $u$ is adjacent to $a_{i}$ if the $i$th bit of the string is $0$ and $b_{i}$ otherwise. An example of $F_{3}$ is included here (see Figure \ref{CFIFigure}).

\begin{figure} 
\begin{center}
\includegraphics{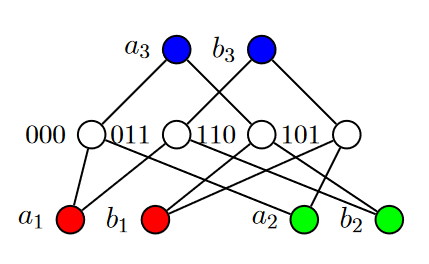}
\end{center}
\caption{The CFI gadget $F_{3}$ \cite{CFI, WLGroups}.}
\label{CFIFigure}
\end{figure}

We now discuss how the gadgets are interconnected. Let $xy \in E(\Gamma)$. For each pair of external vertices $(a_{i}^{x}, b_{i}^{x})$ on the gadget corresponding to $x$ and each pair of external vertices $(a_{j}^{y}, b_{j}^{y})$ on the gadget corresponding to $y$, we add parallel edges $a_{i}^{x}a_{j}^{y}, b_{i}^{x}b_{j}^{y}$. We refer to the resulting graph as $\text{CFI}(\Gamma)$. The \textit{twisted CFI-graph} $\widetilde{\text{CFI}(\Gamma)}$ is obtained by taking one pair of parallel edges $a_{i}^{x}a_{j}^{y}, b_{i}^{x}b_{j}^{y}$ from $\text{CFI}(\Gamma)$ and replacing these edges with the twist $a_{i}^{x}b_{j}^{y}, b_{i}^{x}a_{j}^{y}$. Up to isomorphism, it does not matter which pair of parallel edges we twist \cite{CFI}. For a subset of edges $E' \subseteq E(\Gamma)$ of the base graph, we can define the graph obtained by twisting exactly the edges in $E'$. The resulting graph is isomorphic to $\text{CFI}(\Gamma)$ if $|E'|$ is even and isomorphic to $\widetilde{\text{CFI}(\Gamma)}$ otherwise.

In the original construction \cite{CFI}, the base graph is generally taken to be a vertex-colored graph where each vertex has a different color. As a result, all of the gadgets in the CFI construction are distinguishable. The colors can be removed by instead attaching gadgets to each vertex, and these gadgets can be attached in such that the base graph is identified by $2$-WL. In particular, it is possible to choose a $3$-regular base graph to have WL-dimension $2$ \cite[Observation~2.2]{WLGroups}.

\noindent  \\ \textbf{Mekler's Construction.} We recall Mekler's construction \cite{Mekler} (recently improved by He \& Qiao \cite{HeQiao}), which allows us to encode a graph into a class $2$ $p$-group ($p > 2$) of exponent $p$.

\begin{definition}
For $n \in \mathbb{N}$ and a prime $p > 2$, the relatively free class $2$ $p$-group of exponent $p$ is given by the presentation
\[
F_{n,p} = \langle x_{1}, \ldots, x_{n} | R(p,n)\rangle,
\]

\noindent where $R(p, n)$ consists of the following relations:
\begin{itemize}
\item For all $i \in [n]$, $x_{i}^{p} = 1$, and 
\item For all $i, j, k \in [n]$, $[[x_{i}, x_{j}], x_{k}] = 1$.
\end{itemize}

\noindent \\ Thus, $F_{n,p}$ is generated by $x_{1}, \ldots, x_{n}$, each of these generators has order $p$, and the commutator of any two generators commutes with every generator (and thus, every group element). It follows that each element of $F_{n,p}$ can be written uniquely in the following normal form:
\[
x_{1}^{d_{1}} \cdots x_{n}^{d_{n}} [x_{1}, x_{2}]^{d_{1,2}} \cdots [x_{1}, x_{n}]^{d_{1,n}} [x_{2}, x_{3}]^{d_{2,3}} \cdots [x_{n-1}, x_{n}]^{d_{n-1},n}.
\]

\noindent Here the exponents take on values in $\{0, \ldots, p-1\}$. In particular, $|F_{n,p}| = p^{n+ n(n-1)/2}.$
\end{definition}

Mekler's construction \cite{Mekler, HeQiao} allows us to encode a graph as a class $2$ $p$-group of exponent $p$ as follows.

\begin{definition}[Mekler's Construction]
Let $\Gamma(V, E)$ be a simple, undirected graph with $V = \{ v_{1}, \ldots, v_{n}\}$, and let $p > 2$ be prime. We construct a class $2$ $p$-group of exponent $p$ as follows:
\[
G_{\Gamma} = \langle x_{1}, \ldots, x_{n} \, \bigr| \, R(p,n), [x_{i}, x_{j}] = 1 : v_{i}v_{j} \in E \rangle.
\]

\noindent So two generators of $G_{\Gamma}$ commute precisely if the corresponding vertices form an edge of $\Gamma$. We identify $x_{i}$ with the vertex $v_{i}$.
\end{definition}

\begin{remark}
Mekler's construction provides a many-one reduction from $\algprobm{GI}$ to $\algprobm{GpI}$ (or in the language of category theory, an isomorphism-preserving functor from the category of graphs to the category of groups). In his original construction, Mekler first reduced arbitrary graphs to ``nice" graphs via the use of gadgets. He \& Qiao  \cite{HeQiao} subsequently showed that this gadgetry was unnecessary. In both Mekler's original construction \cite{Mekler} and the improvement due to He \& Qiao \cite{HeQiao}, this reduction is polynomial-time computable when the inputs for the groups are given in the generator-relation model (c.f., \cite{Mekler, HeQiao} and \cite[Theorem~4.13]{WLGroups}). While we will be given such groups by their Cayley (multiplication) tables, we are still able to reason about the groups using the underlying graph-theoretic structure.
\end{remark}

We now recall some key properties about the groups arising via Mekler's construction.

\begin{lemma}[{\cite[Lemma~4.3]{WLGroups}}]
We have that $\Phi(G_{\Gamma}) = G_{\Gamma}'$, and the vertices of $\Gamma$ form a minimum-cardinality generating set of $G_{\Gamma}$.
\end{lemma}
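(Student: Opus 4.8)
The plan is to establish the two assertions separately, leaning on the Burnside basis theorem together with the uniqueness of the normal form recorded in the definition of $F_{n,p}$.

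First I would prove $\Phi(G_{\Gamma}) = G_{\Gamma}'$. For any finite $p$-group $G$, the Frattini subgroup admits the standard description $\Phi(G) = G^{p}[G,G]$, where $G^{p} = \langle g^{p} : g \in G \rangle$; so it suffices to show $G_{\Gamma}^{p} = 1$, i.e.\ that $G_{\Gamma}$ has exponent $p$. Since $G_{\Gamma}$ has nilpotency class $2$, every commutator is central and the identity $(xy)^{p} = x^{p}y^{p}[y,x]^{\binom{p}{2}}$ holds. For odd $p$ we have $p \mid \binom{p}{2} = p(p-1)/2$, and $[y,x]$ has order dividing $p$, so $[y,x]^{\binom{p}{2}} = 1$; combined with $x_{i}^{p} = 1$ for the generators, an induction on word length shows every element of $G_{\Gamma}$ has order dividing $p$. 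Hence $G_{\Gamma}^{p} = 1$, and therefore $\Phi(G_{\Gamma}) = G_{\Gamma}'$.

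Second I would show that $x_{1}, \ldots, x_{n}$ is a generating set of minimum cardinality. These generate $G_{\Gamma}$ by construction, so it remains to prove $d(G_{\Gamma}) = n$. By the stated isomorphism $G/\Phi(G) \cong (\mathbb{Z}/p\mathbb{Z})^{d(G)}$ (equivalently, the Burnside basis theorem), we have $d(G_{\Gamma}) = \dim_{\mathbb{F}_{p}}(G_{\Gamma}/\Phi(G_{\Gamma}))$, which by the first part equals $\dim_{\mathbb{F}_{p}}(G_{\Gamma}/G_{\Gamma}')$, the dimension of the abelianization. The uniqueness of the normal form expresses each element uniquely as $x_{1}^{d_{1}} \cdots x_{n}^{d_{n}}$ times a product of commutators, and the commutator factor lies in $G_{\Gamma}'$; thus passing to $G_{\Gamma}/G_{\Gamma}'$ leaves precisely the free choice of $(d_{1}, \ldots, d_{n}) \in (\mathbb{Z}/p\mathbb{Z})^{n}$, giving $G_{\Gamma}/G_{\Gamma}' \cong (\mathbb{Z}/p\mathbb{Z})^{n}$ and hence $d(G_{\Gamma}) = n$.

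I expect the main obstacle to be the bookkeeping needed to justify the two structural inputs rather than any deep idea. In particular, the exponent-$p$ identity genuinely requires $p > 2$, since $p \mid \binom{p}{2}$ fails for $p = 2$; this is exactly where the hypothesis on the prime enters. The other delicate point is confirming that the images of the generators remain $\mathbb{F}_{p}$-linearly independent in the abelianization: this is where uniqueness of the normal form is essential, as without it one could not rule out hidden relations among the $x_{i}$ modulo $G_{\Gamma}'$, and it is precisely this independence that forces $d(G_{\Gamma}) = n$ and makes the vertex set a minimum-cardinality generating set.
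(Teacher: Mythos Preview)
Your proof is correct. Note, however, that the paper does not actually prove this lemma: it is quoted verbatim from \cite[Lemma~4.3]{WLGroups} with no argument given, so there is no ``paper's own proof'' to compare against. Your argument is the standard one: use $\Phi(G)=G^{p}G'$ for finite $p$-groups, verify exponent $p$ via the class-$2$ identity $(xy)^{p}=x^{p}y^{p}[y,x]^{\binom{p}{2}}$ (which indeed needs $p>2$ and the observation that $[x_i,x_j]^{p}=[x_i^{p},x_j]=1$), and then read off $d(G_\Gamma)=n$ from the abelianization using the normal form. This is exactly how one would expect \cite{WLGroups} to proceed as well.
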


\begin{lemma}[{\cite[Corollary~4.5]{WLGroups}}]
Let $\Gamma$ be a simple, undirected graph. Then we have that $|G_{\Gamma}| = p^{|V(\Gamma)| + |\binom{V}{2} - E(\Gamma)|}$. In particular, every element of $G_{\Gamma}$ can be written in the form:
\[
v_{1}^{d_{1}} \cdots v_{n}^{d_{n}} c_{1}^{d_{n+1}} \cdots c_{k}^{d_{n+k}},
\]

\noindent where $\{ c_{1}, \ldots, c_{k}\}$ is the set of non-trivial commutators between generators (i.e., the non-edges of $\Gamma$) and each $d_{i}$ is uniquely determined modulo $p$.
\end{lemma}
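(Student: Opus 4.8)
The plan is to realize $G_\Gamma$ as an explicit central quotient of the relatively free group $F_{n,p}$ and then transport the normal form of $F_{n,p}$ (recorded in the preceding definition) across the quotient map. The starting observation is that $G_\Gamma$ is obtained from $F_{n,p}$ by imposing exactly the additional relations $[x_i,x_j]=1$ for each edge $v_iv_j \in E$. Since the relations $[[x_i,x_j],x_k]=1$ make every commutator $[x_i,x_j]$ central in $F_{n,p}$, the normal closure of the set $\{[x_i,x_j] : v_iv_j \in E\}$ coincides with the (central) subgroup $N$ that it generates. Hence $G_\Gamma \cong F_{n,p}/N$ with $N$ contained in $F_{n,p}'$.

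First I would pin down the structure of $F_{n,p}'$. By the normal form in the definition, $F_{n,p}'$ is elementary abelian with the $\binom{n}{2}$ commutators $[x_i,x_j]$ ($i<j$) forming an $\mathbb{F}_p$-basis, and $|F_{n,p}| = p^{n+\binom{n}{2}}$. The edge-commutators are a subset of this basis, hence $\mathbb{F}_p$-linearly independent, so $N$ is elementary abelian of rank exactly $|E|$ and $|N| = p^{|E|}$. The order formula then follows from the index computation
\[
|G_\Gamma| = \frac{|F_{n,p}|}{|N|} = p^{\,n + \binom{n}{2} - |E|} = p^{\,|V| + |\binom{V}{2} - E|}.
\]

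Next I would read off the normal form. In $G_\Gamma = F_{n,p}/N$, the images of $v_1,\ldots,v_n$ generate the abelianization $G_\Gamma/G_\Gamma'$ (elementary abelian of rank $n$), while the images of the non-edge commutators $c_1,\ldots,c_k$ (with $k=\binom{n}{2}-|E|$) descend to a basis of $G_\Gamma' = F_{n,p}'/N$. Thus every element of $G_\Gamma$ admits an expression $v_1^{d_1}\cdots v_n^{d_n}\,c_1^{d_{n+1}}\cdots c_k^{d_{n+k}}$ with each exponent in $\{0,\ldots,p-1\}$. Uniqueness is a counting argument: there are exactly $p^{n+k}=|G_\Gamma|$ such tuples, so the surjection from tuples onto $G_\Gamma$ is a bijection and each $d_i$ is determined modulo $p$.

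The only step requiring genuine care is the claim that $N$ captures nothing beyond the span of its defining basis vectors inside $F_{n,p}'$. This is precisely where class $2$ (centrality of all commutators) is essential: it guarantees both that $N$ is the \emph{central} subgroup generated by the edge-commutators --- so no normal-closure blow-up occurs --- and that the edge- and non-edge-commutators occupy complementary coordinate subspaces of the elementary abelian group $F_{n,p}'$. Once this linear independence is secured, both the order count and the normal form follow immediately.
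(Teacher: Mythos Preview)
Your argument is correct. The key steps---identifying $G_\Gamma$ as $F_{n,p}/N$ where $N$ is the central subgroup generated by the edge-commutators, computing $|N|=p^{|E|}$ via linear independence of the basis commutators in the elementary abelian group $F_{n,p}'$, and deducing uniqueness of the normal form by counting---are all sound.

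Note, however, that the paper does not supply its own proof of this lemma: it is stated as a preliminary fact and attributed to \cite[Corollary~4.5]{WLGroups}. So there is no ``paper's proof'' to compare against here. Your write-up is essentially the natural argument one would expect for this statement (and presumably close to what appears in the cited source): pass to the quotient of the relatively free class~$2$ exponent-$p$ group, use centrality of commutators to control the normal closure, and count.
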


\begin{lemma}[{\cite[Lemma~4.7]{WLGroups}}] \label{lem:BSFrattini}
Let $\Gamma$ be a simple, undirected graph. We have $Z(G_{\Gamma}) = G_{\Gamma}' \times \langle v : N[v] = V(\Gamma)\rangle$. In particular, if no vertex of $\Gamma$ is adjacent to every other vertex, then $Z(G_{\Gamma}) = G_{\Gamma}'$.
\end{lemma}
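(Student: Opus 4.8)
The plan is to prove the two inclusions of the claimed equality together with directness of the product; the ``in particular'' clause is then immediate, since $N[v] \neq V(\Gamma)$ for every vertex exactly when $\langle v : N[v] = V(\Gamma)\rangle$ is trivial. I will write $G_\Gamma' \cdot \langle v : N[v]=V(\Gamma)\rangle$ for the internal product and reduce to showing it equals $Z(G_\Gamma)$.

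For the easy inclusion $\supseteq$, recall that $G_\Gamma$ has nilpotency class $2$, so every commutator is central and $G_\Gamma' \subseteq Z(G_\Gamma)$. If a generator $v$ satisfies $N[v] = V(\Gamma)$, then $v$ is adjacent to every other vertex, so by the defining relations $[v,w] = 1$ for every generator $w$ (and trivially for $w = v$); hence $v$ commutes with every element of $G_\Gamma$, i.e.\ $v \in Z(G_\Gamma)$. For directness I would invoke the normal form from \cite[Corollary~4.5]{WLGroups}: every element has a unique expression as a ``vertex part'' $v_1^{d_1}\cdots v_n^{d_n}$ times a ``commutator part'' lying in $G_\Gamma'$. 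Elements of $G_\Gamma'$ have trivial vertex part, while any product of universal vertices has trivial commutator part (two universal vertices are adjacent, hence commute, so such products involve no commutators). Uniqueness of the normal form then forces $G_\Gamma' \cap \langle v : N[v]=V(\Gamma)\rangle = \{1\}$, and both factors are central, hence normal, so the product is direct.

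The substance is the reverse inclusion $\subseteq$, where I expect the real work. Given $g \in Z(G_\Gamma)$, write $g = h \cdot w$ in normal form with $h = v_1^{d_1}\cdots v_n^{d_n}$ and $w \in G_\Gamma'$. Since $w$ is central, $h = g w^{-1} \in Z(G_\Gamma)$, so it suffices to show that a central element $h$ of this form is supported only on universal vertices. The key computation uses bilinearity of the commutator in a class-$2$ group (all commutators being central): $[ab,c] = [a,c][b,c]$ and $[a^k,c] = [a,c]^k$. Applying this for each generator $v_j$ yields
\[
[h, v_j] = \prod_{i=1}^n [v_i, v_j]^{d_i} = \prod_{i \,:\, v_iv_j \notin E,\ i \neq j} [v_i, v_j]^{d_i},
\]
using $[v_i,v_j] = 1$ whenever $v_iv_j \in E$ or $i = j$. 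Centrality of $h$ forces this to be trivial for every $j$. Now I again use \cite[Corollary~4.5]{WLGroups}: the nontrivial commutators, indexed by the non-edges of $\Gamma$, form an independent generating set of $G_\Gamma' \cong (\mathbb{Z}/p\mathbb{Z})^k$. In the equation for a fixed $v_j$ the factors $[v_i,v_j]$ over non-neighbors $i$ of $j$ are pairwise distinct basis elements, so each exponent must vanish: $d_i \equiv 0 \pmod p$ for every non-neighbor $i$ of $j$. Ranging over all $j$, this says exactly that $d_i \equiv 0$ whenever $v_i$ has at least one non-neighbor, i.e.\ whenever $N[v_i] \neq V(\Gamma)$. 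Hence the support of $h$ consists only of universal vertices, so $h \in \langle v : N[v]=V(\Gamma)\rangle$ and $g = hw \in G_\Gamma' \cdot \langle v : N[v]=V(\Gamma)\rangle$.

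The main obstacle is handling the commutator bookkeeping cleanly. One must rewrite each factor $[v_i,v_j]^{d_i}$ into the canonical ordering $[v_{\min(i,j)},v_{\max(i,j)}]$, which only flips the sign of the exponent, so that triviality of the product still forces $d_i \equiv 0 \pmod p$; and one must observe that for a fixed $j$ the basis elements appearing are genuinely distinct, since the unordered pairs $\{i,j\}$ differ for distinct $i$. Everything else is a routine consequence of the class-$2$, exponent-$p$ structure together with the normal form already recorded.
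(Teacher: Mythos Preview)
The paper does not prove this lemma; it is quoted verbatim from \cite[Lemma~4.7]{WLGroups} without argument, so there is no proof here to compare against. Your proposal is a correct and self-contained proof: the easy inclusion and directness follow exactly as you say from class~$2$ and the normal form, and for the reverse inclusion the bilinearity computation $[h,v_j]=\prod_i [v_i,v_j]^{d_i}$ together with the linear independence of the non-edge commutators (from \cite[Corollary~4.5]{WLGroups}) cleanly forces $d_i\equiv 0$ on every non-universal vertex. Your handling of the sign when reordering $[v_i,v_j]$ to the canonical $i<j$ form is the only subtlety, and you address it correctly.
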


\begin{definition}[{\cite[Definition~4.8]{WLGroups}}]
Let $x \in G_{\Gamma}$ be an element with normal form:
\[
x := v_{1}^{d_{1}} \cdots v_{n}^{d_{n}} c_{1}^{e_{1}} \cdots c_{m}^{e_{m}}.
\]

\noindent The \textit{support} of $x$ is $\{ v_{i} : d_{i} \not \equiv 0 \pmod{p} \}$. For a subset $S = \{ v_{i_{1}}, \ldots, v_{i_{s}}\} \subseteq V(\Gamma)$, let $x_{S}$ be the subword $v_{i_{1}}^{d_{i_{1}}} \cdots v_{i_{s}}^{d_{i_{s}}}$, with $i_{1} < \cdots < i_{s}$.
\end{definition}

\begin{lemma}[{\cite[Corollary~4.11]{WLGroups}}] \label{lem:BSCentralizers}
Let $\Gamma$ be a simple, undirected graph. Let $x := v_{i_{1}}^{d_{1}} \cdots v_{i_{r}}^{d_{r}}c$, with $i_{1} < i_{2} < \cdots < i_{r}$, $c$ central in $G_{\Gamma}$, and each $d_{i} \not \equiv 0 \pmod{p}$. Let $C_{1}, \ldots, C_{s}$ be the connected components of the complement graph $\text{co}(\Gamma[\supp(x)])$. Then:
\[
C_{G_{\Gamma}}(x) = \langle x_{C_{1}} \rangle \cdots \langle x_{C_{s}} \rangle \langle \{ v_{m} : [v_{m}, v_{i_{j}}] = 1 \text{ for all } j \} \rangle G_{\Gamma}'.
\]
 \end{lemma}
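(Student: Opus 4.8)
The plan is to reduce the centralizer computation to a linear-algebra problem on the $\F_p$-vector space $V := G_{\Gamma}/G_{\Gamma}'$, exploiting that $G_{\Gamma}$ has nilpotency class $2$. Since $G_{\Gamma}'$ is then central, every commutator is central and the commutator map is bilinear, so it descends to an alternating $\F_p$-bilinear form $\beta : V \times V \to G_{\Gamma}'$ satisfying $[a,b] = \beta(\overline a, \overline b)$, where $\overline a$ denotes the image of $a$ in $V$. Consequently $b \in C_{G_{\Gamma}}(x)$ if and only if $\beta(\overline x, \overline b) = 0$, a condition depending only on the vertex parts. Writing $S := \supp(x) = \{v_{i_1}, \ldots, v_{i_r}\}$ and $\overline x = \sum_{k} d_k v_{i_k}$, I note that the central factor $c$ of $x$ is irrelevant, since it contributes nothing to any commutator; this is why the statement allows an arbitrary central $c$.

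Next I would make $\beta$ explicit on the vertex basis. By the normal form for $G_{\Gamma}$, the non-trivial commutators $[v_a, v_b]$ (indexed by the non-edges of $\Gamma$) form an $\F_p$-basis of $G_{\Gamma}'$, while $[v_a, v_b] = 1$ exactly when $v_a v_b \in E(\Gamma)$ or $a = b$. Expanding $\beta(\overline x, \overline b) = \sum_{v \in S} \sum_{w} d_v b_w [v, w]$ and collecting terms, using $[v_b, v_a] = [v_a, v_b]^{-1}$ (written additively in $G_{\Gamma}'$), the coefficient of each basis commutator $[v_a, v_b]$ (non-edge, $a < b$) is $d_{v_a} b_{v_b} - d_{v_b} b_{v_a}$; here $p > 2$ ensures the form is genuinely alternating. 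Thus $\overline b$ centralizes $x$ precisely when
\[
d_{v_a} b_{v_b} - d_{v_b} b_{v_a} \equiv 0 \pmod p \quad \text{for every non-edge } \{v_a, v_b\} \text{ of } \Gamma.
\]

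I would then solve this system by splitting on how the non-edge meets $S$, recalling $d_v = 0$ for $v \notin S$. A non-edge disjoint from $S$ imposes no constraint; a non-edge with exactly one endpoint $v_a \in S$ forces $b_{v_b} = 0$, so a coordinate $b_w$ with $w \notin S$ can be nonzero only when $w$ commutes with all of $S$, i.e. $w \in W := \{v_m : [v_m, v_{i_j}] = 1 \text{ for all } j\}$; and a non-edge inside $S$, which is an edge of $\text{co}(\Gamma[S])$, forces the ratio $b_v/d_v$ to be constant, hence by transitivity along complement-paths a single scalar $\lambda_t$ on each connected component $C_t$, so that $\overline b|_{C_t} = \lambda_t\, \overline{x}|_{C_t}$. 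The solution space for $\overline b$ is therefore spanned by the vectors $\overline{x}|_{C_t}$ (one per component) together with the vertices lying in $W$. Matching against the claimed subgroup: the image in $V$ of $\langle x_{C_t}\rangle$ is the line through $\overline{x}|_{C_t}$ and the image of $\langle W \rangle$ is $\mathrm{span}(W)$, so — since $G_{\Gamma}'$ is central and appears as the last factor — the product $\langle x_{C_1}\rangle \cdots \langle x_{C_s}\rangle \langle W \rangle G_{\Gamma}'$ is exactly the full preimage in $G_{\Gamma}$ of this span. As $C_{G_{\Gamma}}(x) \supseteq G_{\Gamma}'$ is likewise the full preimage of its own image in $V$, the two coincide.

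The main obstacle is the bookkeeping in this final matching step. The singleton components of $\text{co}(\Gamma[S])$ are precisely the vertices of $S$ that commute with all of $S$, i.e. the elements of $S \cap W$, so the factors $\langle x_{C_t}\rangle$ for such components overlap with $\langle W \rangle$; I must confirm these generators span neither more nor less than the computed solution space — that every free direction from $W \setminus S$ and from each component is realized, that the singleton overlaps create only redundancy rather than a discrepancy, and that no spurious direction supported outside $S \cup W$ is introduced. Verifying that a vertex in a non-singleton complement-component necessarily lies outside $W$ (it has a non-neighbour in $S$) is the clean observation that keeps these two descriptions aligned.
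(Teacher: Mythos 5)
This lemma is stated in the paper as an imported result, cited from Brachter \& Schweitzer \cite[Corollary~4.11]{WLGroups} without proof, so there is no in-paper argument to compare against; judged on its own, your proof is correct and complete. Your reduction to the alternating commutator form $\beta$ on $G_{\Gamma}/G_{\Gamma}'$ is essentially the same computation that underlies the source's proof (there phrased via normal forms), and you handle the two points where such an argument could go wrong: the per-non-edge constraint $d_{v_a} b_{v_b} - d_{v_b} b_{v_a} \equiv 0$ correctly splits into the three cases, and the final matching step is sound because singleton components of $\text{co}(\Gamma[\supp(x)])$ are exactly $S \cap W$ (creating only redundancy between the $\langle x_{C_t} \rangle$ factors and $\langle W \rangle$), while any vertex of a non-singleton component has a non-neighbour in $S$ and hence lies outside $W$, so $\mathrm{span}(W)$ adds no direction violating the proportionality constraints. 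The closing observation that both sides contain $G_{\Gamma}'$ and are therefore full preimages of their common image in $G_{\Gamma}/G_{\Gamma}'$ is exactly what is needed to pass from the linear-algebra solution space back to the subgroup identity.
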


\subsection{Weisfeiler--Leman and the CFI Groups}

In this section, we establish the following.

\begin{theorem} \label{thm:CFIWL}
Let $\Gamma_{0}$ be a $3$-regular connected graph, and let $\Gamma_{1} := \text{CFI}(\Gamma_{0})$ and $\Gamma_{2} := \widetilde{\text{CFI}(\Gamma_{0})}$ be the corresponding CFI graphs. For $i = 1, 2$, denote $G_{i} := G_{\Gamma_{i}}$ be the corresponding groups arising from Mekler's construction. We have that the $(3, O(\log  \log n))$-WL Version I distinguishes $G_{1}$ from $G_{2}$. If furthermore $\Gamma_{0}$ is identified by the graph $(3, r)$-WL algorithm, then the $(3, \max\{r, O(\log \log n)\})$-WL Version I algorithm identifies $G_{1}$ as well as $G_{2}$.
\end{theorem}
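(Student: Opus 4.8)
The plan is to show that group Weisfeiler--Leman, run on the Mekler group $G_\Gamma$, recovers enough of the combinatorial structure of the underlying CFI graph $\Gamma$ that the global twist---invisible to bounded-dimensional graph WL---becomes detectable. The engine for this is the centralizer dictionary of \Lem{lem:BSCentralizers}: for an element $x$ with support $S$, the order of $C_{G_\Gamma}(x)$ (equivalently $\log_p |C_{G_\Gamma}(x)|$) is determined by the number of connected components of $\mathrm{co}(\Gamma[\supp(x)])$. Since $(3,\cdot)$-WL Version I can test, for a pebbled pair, whether two elements commute (this is exactly the ternary multiplication relation of the Version I logic) and can compare the sizes of color classes realizing centralizers, WL effectively gains access to induced-subgraph data on supports up to the available dimension. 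First I would establish, using \Lem{lem:BSFrattini}, that WL isolates the center $Z(G_i) = G_i' = \Phi(G_i)$ after $O(1)$ rounds---here one uses that a CFI graph built on a $3$-regular base has no universal vertex, so $Z(G_\Gamma) = G_\Gamma'$---and thereby separates the ``commutator layer'' from the ``generator layer'' $G_i/Z(G_i) \cong \mathbb{F}_p^{|V(\Gamma_i)|}$.

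Second, I would control the number of rounds. In a class $2$ group of exponent $p$ every element is a product of (at most quadratically many) generators and commutators, so there are no long words to resolve; the only nontrivial cost is assigning unique colors to the powers $v, v^2, \ldots, v^{p-1}$ of an individualized generator and to short products over individualized elements. By \cite[Lemma~7.12]{GLWL1} and \cite[Lemma~7.9]{GLWL1} these powers receive unique colors after $O(\log\log n)$ rounds of count-free WL Version I, and by \cite[Lemma~7.3]{GLWL1} the same holds for bounded-length words over individualized elements. This is precisely what upgrades the $O(\log n)$-round Version~II argument of Brachter \& Schweitzer \cite{WLGroups} to an $O(\log\log n)$-round Version~I argument, and it is the source of the $O(\log\log n)$ term in the stated bound.

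The heart of the argument---and the step I expect to be the \emph{main obstacle}---is showing that $3$-dimensional WL, together with the centralizer dictionary above, suffices to detect the twist that defeats graph WL on $\mathrm{CFI}(\Gamma_0)$ itself. Here I would follow Brachter \& Schweitzer's analysis \cite{WLGroups}: the point is that the group encodes each CFI gadget as a configuration of commuting and non-commuting generators, and the orientation data of the gadgets (the strings in $\{0,1\}^{d}$ indexing the internal vertices) are recorded multiplicatively rather than merely adjacency-wise. Consequently the induced-subgraph information WL extracts from centralizer orders is strictly richer than what graph WL sees on $\mathrm{CFI}(\Gamma_0)$, and it exposes the parity invariant distinguishing $\mathrm{CFI}(\Gamma_0)$ from $\widetilde{\mathrm{CFI}(\Gamma_0)}$. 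The delicate part is verifying that this detection goes through in only three dimensions and that stabilization occurs within the claimed round budget; I would isolate a small family of ``witness'' triples whose color, computed from the commutator and centralizer structure of the two gadgets joined by the twisted edge, already differs between $G_1$ and $G_2$.

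Finally, for the identification statement I would combine twist-detection with the hypothesis that $\Gamma_0$ is identified by graph $(3,r)$-WL. Against an arbitrary group $H$, WL first certifies (using the order and commutator information already computed) that $H$ is a class $2$ group of exponent $p$ of the correct order with $Z(H) = H' = \Phi(H)$ of the correct size---otherwise the color multisets already differ---so that $H$ is itself a Mekler group $G_\Delta$ for some graph $\Delta$ with matching parameters. The recovered generator/commutator coloring then induces a WL-coloring of $\Delta$ refining that of graph $(3,\cdot)$-WL; since each $\Gamma_i$ is obtained from $\Gamma_0$ by the CFI construction and $\Gamma_0$ is $(3,r)$-WL identified, the base graph is pinned down within $r$ rounds, and the residual choice between $\mathrm{CFI}(\Gamma_0)$ and $\widetilde{\mathrm{CFI}(\Gamma_0)}$ is resolved by the twist-detection of the previous paragraph. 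Taking the maximum of the two round requirements yields the bound $(3,\max\{r,O(\log\log n)\})$.
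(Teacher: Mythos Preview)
Your proposal tracks the paper's high-level plan but misses the concrete mechanism at the two places that matter most.

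First, your twist-detection paragraph (the step you flag as the ``main obstacle'') is not yet a proof. You propose to find witness triples localized at ``the two gadgets joined by the twisted edge,'' but the twisted edge is not canonical: twisting any edge of $\Gamma_0$ yields isomorphic graphs, so no local $O(1)$-pebble configuration can see it. The paper's actual witness is a single \emph{global} group element $v$ (\Def{def:GadgetProducts}): for each gadget choose one internal vertex and its three adjacent external vertices, and let $v$ be the product of all these generators. Spoiler pebbles $v$ with one pebble; Duplicator must respond with an element $f(v)$ of the same support size (\Lem{lem:Support}) whose support has the same gadget structure (\Lem{lem:GadgetStructure}). The Brachter--Schweitzer combinatorial fact is then that the induced subgraphs $\Gamma_1[\supp(v)]$ and $\Gamma_2[\supp(f(v))]$ have edge counts of opposite parity, hence different degree sequences; Spoiler pebbles a single-support element at a vertex of mismatched degree and wins in $O(1)$ further rounds via commutation. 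Your centralizer-order heuristic does not by itself produce this witness, and ``orientation data recorded multiplicatively'' is not what is being used.

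Second, the $O(\log\log n)$ round bound does not come from the power-coloring lemmas you cite. With only three pebbles you are not individualizing a generating set of $G_i$, so \cite[Lemmas~7.3, 7.9, 7.12]{GLWL1} are not the operative tools here. The rounds come from \Lem{lem:Support}: to force Duplicator to preserve support size, Spoiler runs a binary search on the word $x = x_{i_1}^{d_1}\cdots x_{i_r}^{d_r}c$, pebbling its two halves and recursing on whichever half carries the support mismatch, reusing pebbles, until reaching a single-support element versus a larger-support element (which is then handled by \Lem{lem:Centralizer}). This halving takes $\log|\supp(x)| \le \log|V(\Gamma_1)| = O(\log\log|G_1|)$ rounds and never uses more than three pebbles on the board.

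Your identification paragraph is broadly aligned with the paper, though the paper does not argue that $H$ must literally be a Mekler group. It argues instead that the commutation graph on the elements of $H$ with centralizer of size $p^4|Z(H)|$, taken modulo $Z(H)$, is WL-indistinguishable from $\Gamma_i$; it then recovers a base graph using the internal/external distinction of \Lem{lem:GadgetStructure}(c) and simulates the graph $(3,r)$-WL strategy on $\Gamma_0$ by pebbling the corresponding single-support group elements.
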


\noindent
\begin{remark}
Brachter \& Schweitzer \cite[Theorem~6.1]{WLGroups} previously showed that $3$-WL Version II sufficed to identify the CFI groups. In closely analyzing their work, we see that their proof shows that only $O(\log n)$ rounds suffice. In light of the parallel WL implementation for graphs due to Grohe \& Verbitsky \cite{GroheVerbitsky}, this places isomorphism testing of $G_{1}$ and $G_{2}$ into $\textsf{TC}^{1}$. If furthermore, the base graph is identified by the graph $3$-WL algorithm in $O(\log n)$ rounds, then we can decide whether $G_{i}$ ($i = 1, 2$) and an arbitrary group $H$ are isomorphic in $\textsf{TC}^{1}$.

WL Version I and WL Version II admit different parallel complexities, due to their initial colorings. Namely, the initial coloring of WL Version I is $\textsf{TC}^{0}$-computable, while the best known bound for the initial coloring of WL Version II is $\textsf{L}$ due to Tang's marked isomorphism test \cite{TangThesis}. The refinement step for both WL Versions I and II is $\textsf{TC}^{0}$-computable. So while the number of rounds can be improved, the initial coloring is still a barrier for improving the parallel complexity of WL Version II. However, the analysis of the CFI groups in \cite{WLGroups} relies almost exclusively on the underlying graph-theoretic structure of the CFI graphs. That is, we do not need the full power of deciding whether two $3$-tuples of group elements generate isomorphic subgroups. Thus, we may use the weaker Weisfeiler--Leman Version I algorithm, whose initial coloring is $\textsf{TC}^{0}$-computable. We obtain the following corollary. 
\end{remark}

\begin{corollary}
\noindent Using the same notation as in \Thm{thm:CFIWL}, we have the following.
\begin{enumerate}[label=(\alph*)]
\item We may decide whether $G_{1} \cong G_{2}$ using a logspace uniform $\textsf{TC}$ circuit of depth $O(\log \log n)$ (which is contained in $\textsf{TC}^{o(1)}$). 
\item If furthermore, the base graph $\Gamma_{0}$ is identifiable by the graph $3$-WL in $r$ rounds, then we can decide isomorphism between $G_{i}$ ($i = 1, 2$) and an arbitrary group $H$ using a logspace uniform $\textsf{TC}$ circuit of depth $\max\{r, O(\log \log n)\}$.
\end{enumerate}
\end{corollary}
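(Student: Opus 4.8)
The plan is to read off both parts of the corollary as immediate consequences of \Thm{thm:CFIWL} together with the parallel implementation of Weisfeiler--Leman Version I recorded in Section~\ref{sec:ParallelWL}. Recall from there (following Grohe \& Verbitsky \cite{GroheVerbitsky}) that for fixed dimension $k$ the initial coloring of $k$-WL Version I is computable by a logspace-uniform $\textsf{TC}^{0}$ circuit, and each refinement round is likewise a logspace-uniform $\textsf{TC}^{0}$ circuit. Hence running $(3,T)$-WL Version I amounts to composing $T$ constant-depth $\textsf{TC}^{0}$ circuits, yielding a logspace-uniform $\textsf{TC}$ circuit of depth $O(T)$. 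Comparing the two resulting multisets of colors at the end--counting, for each color, how many $3$-tuples receive it in each group and checking that these counts agree--is a counting task costing only $O(1)$ further $\textsf{TC}^{0}$ layers, so it does not affect the asymptotic depth.

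For part (a), I would invoke the first assertion of \Thm{thm:CFIWL}: the $(3,O(\log\log n))$-WL Version I algorithm distinguishes $G_{1}$ from $G_{2}$. Thus it suffices to run $(3,O(\log\log n))$-WL Version I on each of $G_{1},G_{2}$ and compare the stable colorings, which by the previous paragraph is a logspace-uniform $\textsf{TC}$ circuit of depth $O(\log\log n)$. To see that this lies in $\textsf{TC}^{o(1)}$, note that for every $\varepsilon>0$ we have $\log\log n = 2^{\log\log\log n}$ whereas $\log^{\varepsilon} n = 2^{\varepsilon \log\log n}$, and since $\log\log\log n = o(\varepsilon\log\log n)$ the depth $O(\log\log n)$ is $\log^{o(1)} n$.

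For part (b), I would instead invoke the second assertion of \Thm{thm:CFIWL}: if $\Gamma_{0}$ is identified by the graph $(3,r)$-WL algorithm, then $(3,\max\{r,O(\log\log n)\})$-WL Version I identifies each $G_{i}$ ($i=1,2$), that is, distinguishes $G_{i}$ from every non-isomorphic group. Consequently, to decide whether an arbitrary input group $H$ is isomorphic to $G_{i}$, I run $(3,\max\{r,O(\log\log n)\})$-WL Version I on $G_{i}$ and on $H$ and compare their multisets of colors; the colorings agree if and only if $G_{i}\cong H$. By the first paragraph this is a logspace-uniform $\textsf{TC}$ circuit of depth $\max\{r,O(\log\log n)\}$, as claimed.

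Because \Thm{thm:CFIWL} is assumed, there is essentially no substantive obstacle here beyond bookkeeping; the only points requiring care are (i) confirming that the final comparison of color multisets stays within $\textsf{TC}^{0}$ and hence does not increase the depth, and (ii) the elementary growth-rate estimate placing depth-$O(\log\log n)$ $\textsf{TC}$ inside $\textsf{TC}^{o(1)}$ for part (a). Both are routine.
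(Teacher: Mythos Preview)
Your proposal is correct and takes essentially the same approach as the paper. In fact, the paper does not write out a separate proof for this corollary at all: it is stated immediately after a remark explaining that the initial coloring and each refinement round of WL Version I are $\textsf{TC}^{0}$-computable (Section~\ref{sec:ParallelWL}, via Grohe--Verbitsky), so the corollary is meant to follow directly from \Thm{thm:CFIWL} together with that observation---precisely the argument you spell out.
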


\noindent \\ We also improve the descriptive complexity of the CFI groups, by showing that the quantifier depth can be reduced from $O(\log n)$ to $O(\log \log n)$. Furthermore, we obtain this improvement in the weaker Version I logic, whereas Brachter \& Schweitzer \cite{WLGroups} utilized the stronger Version II logic.

\begin{corollary}
\noindent Let $\mathcal{C}^{I}_{k,r}$ be the Version I fragment of first-order logic with counting quantifiers, where formulas are permitted at most $k$ variables and quantifier depth at most $r$ (see Section~\ref{sec:Logics}). Using the same notation as in \Thm{thm:CFIWL}, we have the following.
\begin{enumerate}[label=(\alph*)]
\item There exists a formula $\varphi$ in $\mathcal{C}^{I}_{3, O(\log \log n)}$ such that $G_{1} \models \varphi$ and $G_{2} \not \models \varphi$, or vice-versa.

\item If furthermore, the base graph $\Gamma_{0}$ is identifiable by the graph $3$-WL in $r$ rounds, then for any group $H \not \cong G$, there is a formula $\varphi_{i}$ in $\mathcal{C}^{I}_{3, \max\{r, O(\log \log n)\}}$ such that $G_{i} \models \varphi_{i}$ and $H \not \models \varphi_{i}$ (or vice-versa).
\end{enumerate}
\end{corollary}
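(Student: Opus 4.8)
The plan is to derive both parts directly from \Thm{thm:CFIWL} together with the equivalence, recalled in Section~\ref{sec:Logics}, between the \emph{counting} Weisfeiler--Leman Version I algorithm and the counting logic $\mathcal{C}^{I}$. That equivalence (Brachter \& Schweitzer \cite{WLGroups}) states that a pair of groups is distinguished by $(k,r)$-WL Version I exactly when it is separated by some $\mathcal{C}^{I}$-sentence whose quantifier depth is the round count $r$ and whose number of variables is controlled by the WL dimension (equivalently, the pebble parameter, under the convention fixed in the Remark identifying $k$-WL with $k$ pebbles on the board). Since \Thm{thm:CFIWL} concerns the counting WL Version I and $\mathcal{C}^{I}_{k,r}$ is by definition the counting fragment, this is precisely the equivalence to invoke; we do \emph{not} appeal to the count-free analogue of Grochow \& Levet \cite{GLWL1}.

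For part~(a): \Thm{thm:CFIWL} gives that the $(3, O(\log\log n))$-WL Version I distinguishes $G_{1}$ from $G_{2}$. Passing through the equivalence turns the $O(\log\log n)$ rounds into quantifier depth $O(\log\log n)$ and the dimension parameter into the stated variable bound, producing a sentence $\varphi \in \mathcal{C}^{I}_{3, O(\log\log n)}$ that separates $G_{1}$ and $G_{2}$. Since WL-distinguishability is symmetric in the two inputs, the separating sentence may put either group on the satisfying side, which yields the ``or vice-versa'' clause.

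For part~(b): the relevant clause of \Thm{thm:CFIWL} asserts that when $\Gamma_{0}$ is identified by the graph $(3,r)$-WL, the $(3, \max\{r, O(\log\log n)\})$-WL Version I \emph{identifies} $G_{i}$ --- that is, it distinguishes $G_{i}$ from every group not isomorphic to it. So fix an arbitrary $H \not\cong G_{i}$; this WL run distinguishes the pair $(G_{i}, H)$, and applying the equivalence to \emph{that} pair yields a separating sentence $\varphi_{i} \in \mathcal{C}^{I}_{3, \max\{r, O(\log\log n)\}}$ with $G_{i} \models \varphi_{i}$ and $H \not\models \varphi_{i}$ (or vice-versa). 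The sentence is allowed to depend on $H$, matching the ``for any group $H \not\cong G$ there is a formula'' quantification in the statement; no single uniform sentence is asserted.

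The work here is almost entirely a translation, so there is no real obstacle --- the combinatorial content lives in \Thm{thm:CFIWL}. The only points needing care are bookkeeping: confirming that the round budget maps to quantifier depth and that the dimension/pebble parameter maps to the variable count under the paper's indexing, and being sure to use the counting equivalence (consistent with the counting WL of \Thm{thm:CFIWL} and the counting fragment $\mathcal{C}^{I}_{k,r}$) rather than its count-free version.
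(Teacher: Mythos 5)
Your proposal is correct and matches the paper's (implicit) justification exactly: the corollary is stated without separate proof precisely because it follows from \Thm{thm:CFIWL} together with the Brachter--Schweitzer equivalence between $(k,r)$-WL Version I and the counting fragment $\mathcal{C}^{I}$ recalled in Section~\ref{sec:Logics}, with rounds becoming quantifier depth and the pebble count becoming the variable bound, and with part~(b) yielding a formula that may depend on $H$. Your care in using the counting equivalence (not the count-free analogue) and in checking the round/depth and pebble/variable bookkeeping is exactly the right due diligence.
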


\noindent \\ We begin with some preliminary lemmas.

\begin{lemma} \label{lem:Centralizer}
Let $G, H$ be groups. Suppose Duplicator selects a bijection $f : G \to H$ such that $|C_{G}(x)| \neq |C_{H}(f(x))|$. Then Duplicator can win with $3$ pebbles and $3$ rounds.
\end{lemma}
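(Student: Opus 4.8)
The plan is to describe an explicit winning strategy for Spoiler (the player trying to separate $G$ and $H$) in the bijective Version~I pebble game corresponding to counting WL Version~I, using exactly three pebble pairs over three rounds; the centralizer-size hypothesis is precisely the resource this strategy needs. In the first round Spoiler picks up a pebble pair, lets Duplicator commit to the given bijection $f$, and places $p_1$ on the element $x$ witnessing $|C_G(x)| \neq |C_H(f(x))|$, so that $p_1$ sits on $x$ and $p_1'$ on $f(x)$. No marked-equivalence condition can fail yet, so the real work happens in the next two rounds.

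The heart of the argument is a pigeonhole step in the second round. When Duplicator announces a fresh bijection $g$, I would observe that $g$ cannot restrict to a bijection between $C_G(x)$ and $C_H(f(x))$, since these sets have different cardinalities. There are two symmetric cases. If $|C_G(x)| > |C_H(f(x))|$, then some $y \in C_G(x)$ has $g(y) \notin C_H(f(x))$, so $xy = yx$ holds in $G$ while $f(x)g(y) \neq g(y)f(x)$ in $H$; Spoiler pebbles this $y$. If instead $|C_G(x)| < |C_H(f(x))|$, then $C_H(f(x))$ is not contained in $g(C_G(x))$, so choosing $w \in C_H(f(x)) \setminus g(C_G(x))$ and setting $y := g^{-1}(w)$ gives $xy \neq yx$ in $G$ but $f(x)g(y) = g(y)f(x)$ in $H$; Spoiler pebbles this $y$. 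Either way, after round two the pebbled pair $(x,y) \mapsto (f(x),g(y))$ records a commutativity relation that holds on exactly one side.

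The third round makes this discrepancy visible to Version~I, whose only multiplicative relation is the ternary $xy = z$, so commutativity must be exposed by pebbling an actual product. In the first case Spoiler pebbles $z := xy = yx \in G$; then both $xy = z$ and $yx = z$ hold in $G$, so marked equivalence would force $f(x)g(y) = h(z)$ and $g(y)f(x) = h(z)$ at once, contradicting $f(x)g(y) \neq g(y)f(x)$. In the second case, setting $w := f(x)g(y) = g(y)f(x) \in H$ and letting $h$ be Duplicator's round-three bijection, Spoiler pebbles $z := h^{-1}(w) \in G$, whose image is $w$; then $f(x)g(y) = w$ and $g(y)f(x) = w$ both hold in $H$, so marked equivalence would require $xy = z$ and $yx = z$ in $G$, contradicting $xy \neq yx$. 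In both cases some instance of condition~(ii) of marked equivalence fails, so Spoiler wins after the third round with three pebbles.

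I expect the only genuinely delicate points to be these two. First, one must recognize that two pebbles are insufficient: Version~I exposes only $xy = z$, not commutativity directly, so the product has to be pebbled explicitly as a third element, which is exactly why the pebble budget is three rather than two. Second, one must handle both directions of the cardinality inequality, where in one direction the witness $y$ is produced on the $G$-side and in the other it is cleaner to locate the witness on the $H$-side and pull it back along $g$. Everything else reduces to a routine verification of the two marked-equivalence conditions on the pebbled triple, and the argument is uniform in $G$ and $H$, requiring no structural assumptions beyond the stated centralizer-size mismatch.
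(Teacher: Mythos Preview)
Your proof is correct and follows essentially the same three-round strategy as the paper: pebble $x$, then a centralizer witness $y$, then the product to expose the commutativity mismatch under the Version~I ternary relation. The paper handles only the case $|C_G(x)| > |C_H(f(x))|$ by a WLOG, whereas you spell out both directions and are more explicit about why a third pebble on the product is needed; otherwise the arguments coincide.
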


\begin{proof}
Without loss of generality, suppose that $|C_{G}(x)| > |C_{H}(f(x))|$. Spoiler begins by pebbling $x \mapsto f(x)$. Let $f' : G \to H$ be the bijection that Duplicator selects the next round. As $|C_{G}(x)| > |C_{H}(f(x))|$, there exists $y \in C_{G}(x)$ such that $f'(y) \not \in C_{H}(f(x))$. At the next two rounds, Spoiler pebbles $y, xy$ and wins.
\end{proof}

\begin{remark}
Brachter \& Schweitzer \cite{WLGroups} established that for the CFI groups $G_{1}, G_{2}$,  group elements with single-vertex support have centralizers of size $p^{4} \cdot |Z(G)|$, while all other group elements have centralizers of size at most $p^{3} \cdot |Z(G)|$. So by \Lem{lem:Centralizer}, if Duplicator does not preserve single-support vertices, then Spoiler can win with $2$ additional pebbles and $2$ additional rounds.
\end{remark}

\begin{lemma}[Compare rounds c.f. {\cite[Lemma~6.3]{WLGroups}}] \label{lem:Support}
Let $G_{1}, G_{2}$ be the CFI groups. Let $f : G_{1} \to G_{2}$ be the bijection that Duplicator selects. If there exists $x \in G_{1}$ such that $|\supp(x)| \neq |\supp(f(x))|$, then Spoiler can win with $3$ pebbles and $O(\log \log |G_{1}|)$ rounds.
\end{lemma}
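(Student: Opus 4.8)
The plan is to reduce to the single-vertex-support base case, where \Lem{lem:Centralizer} together with the Brachter--Schweitzer centralizer computation (the remark following \Lem{lem:Centralizer}) lets Spoiler win in $O(1)$ further rounds: an element of single-vertex support has centralizer of order $p^{4}|Z(G)|$, while an element of any other support size has centralizer of order at most $p^{3}|Z(G)|$ (or order $|G|$ if it is central), so the two centralizer orders disagree and \Lem{lem:Centralizer} applies. To reach this base case I would have Spoiler repeatedly \emph{shrink} the support of the pebbled element while preserving a support-size discrepancy between the groups. The crucial quantitative point is that, because the CFI graphs have bounded degree, $|G_{i}| = p^{\Theta(N^{2})}$, where $N$ is the number of vertices of the base (hence of the CFI) graph; consequently $\supp(x)$, being a subset of the vertex set, has size $O(N) = O(\sqrt{\log |G_{i}|})$. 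Thus shrinking the support to size $1$ by a divide-and-conquer that roughly halves it each time takes only $O(\log \sqrt{\log|G_{i}|}) = O(\log\log|G_{i}|)$ rounds, matching the target; here the same word-length doubling that underlies \Thm{thm:BoundedGenSolvable} (via \cite{BKLM}) is what lets a single refinement stage cut the relevant length in half. Since each stage reuses the same three pebbles, three pebbles suffice throughout.

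For one shrinking stage, suppose Spoiler has pebbled $x \mapsto f(x)$ with $|\supp(x)| \neq |\supp(f(x))|$. First I would have Spoiler test, via \Lem{lem:Centralizer}, whether $|C_{G_{1}}(x)| \neq |C_{G_{2}}(f(x))|$; if so, Spoiler wins at once. Otherwise the centralizer orders agree, i.e.\ the exponents $s+t$ from \Lem{lem:BSCentralizers} coincide even though the supports differ in size, and Spoiler cuts the support: choosing $S \subseteq \supp(x)$ with $|S|$ about half of $|\supp(x)|$, Spoiler pebbles $y := x_{S}$ and $z := y^{-1}x$, so that $yz = x$, $\supp(y) = S$, and $\supp(z) = \supp(x)\setminus S$. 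The Version~I relation forces Duplicator's responses to satisfy $f(y)f(z) = f(x)$; passing to the abelianization $G_{i}/G_{i}' \cong (\Z/p\Z)^{n}$ this reads $\overline{f(y)} + \overline{f(z)} = \overline{f(x)}$ on vertex-parts, which relates the child supports to the fixed set $\supp(f(x))$. The goal of this stage is to exhibit a child inheriting a support-size discrepancy whose support has size at most $|\supp(x)|/2 + O(1)$, whereupon Spoiler lifts the pebbles on $x$ and $y$, keeps the pebble on that child, and recurses.

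The hard part will be the propagation argument, and it is genuinely delicate, because the constraint $\overline{f(y)} + \overline{f(z)} = \overline{f(x)}$ determines the child supports only up to cancellation in $(\Z/p\Z)^{n}$. Duplicator, who sees Spoiler's bijection before answering, can tune the overlap of $\supp(f(y))$ with the fixed set $\supp(f(x))$: for a cut of size $|S|$ the achievable values of $|\supp(f(z))|$ sweep the interval $[\,\bigl|\,|S| - |\supp(f(x))|\,\bigr|,\ |S| + |\supp(f(x))|\,]$, so Duplicator can always match the size of whichever single child it chooses and sacrifice only the other. Thus support size alone does not force the discrepancy onto the \emph{smaller} child, and a naive halving stalls. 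The plan to overcome this is to use that the Version~I game also enforces the enforceable commuting relations $R(v,w,vw)$ among pebbled elements---which encode the CFI graph itself---and the centralizer orders $p^{s+t}|Z(G)|$ of \Lem{lem:BSCentralizers}, rather than support cardinality in isolation. Concretely, I would combine the support-size monovariant with the centralizer exponent $s+t$ into a single quantity that strictly decreases each stage and guarantees, at every step, either an immediate centralizer win or a discrepant child of at most half the support, using the bounded degree of the CFI graphs (which caps the number $t$ of common neighbours, and hence both the centralizer gap and the amount of cancellation Duplicator can hide behind). Verifying that this combined invariant behaves as required against an adaptive Duplicator is the crux of the proof.
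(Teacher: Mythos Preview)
Your overall plan---halve the support of the pebbled element until reaching single support, then invoke the centralizer discrepancy via \Lem{lem:Centralizer}---is exactly the paper's approach, and your round count is correct. But the ``hard part'' you flag is an illusion created by not fixing the direction of the inequality and then measuring progress on the wrong side.

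The paper first assumes without loss of generality that $|\supp(x)| < |\supp(f(x))|$; this is legitimate because the game is symmetric in $G_1$ and $G_2$, so Spoiler can always arrange to be splitting the element of \emph{smaller} support. Now Spoiler splits $x$ into $y = x_S$ and $z = y^{-1}x$ with $|\supp(y)|,|\supp(z)| \leq \lceil |\supp(x)|/2 \rceil$. If Duplicator's responses $u,v$ satisfy $uv = f(x)$ (otherwise Spoiler wins at once), then subadditivity of support under multiplication gives
\[
|\supp(u)| + |\supp(v)| \;\geq\; |\supp(uv)| \;=\; |\supp(f(x))| \;>\; |\supp(x)| \;=\; |\supp(y)| + |\supp(z)|,
\]
so at least one of $|\supp(u)| > |\supp(y)|$ or $|\supp(v)| > |\supp(z)|$ holds. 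Whichever it is, the corresponding $G_1$-side element already has support at most $\lceil |\supp(x)|/2 \rceil$, and the inequality direction is preserved for the recursion. There is nothing for Duplicator to ``tune'': even if Duplicator matches one child's support exactly, the other child inherits the strict inequality, and \emph{both} children on the $G_1$ side are small by construction. The propagation is thus a one-line pigeonhole, and no combined invariant with centralizer exponents is needed.

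Two smaller points. First, you have the bijective game mechanics reversed: Duplicator commits to the bijection \emph{before} Spoiler places a pebble, not after---though, as the inequality above shows, the argument is robust to adaptivity anyway. Second, the case you worry about (Duplicator padding both children with extra cancelling support so that neither child shows a discrepancy) can only arise when $|\supp(x)| > |\supp(f(x))|$, which is precisely the case the WLOG eliminates; with the inequality in the stated direction, the lower bound $|\supp(u)|+|\supp(v)| \geq |\supp(f(x))|$ is the binding constraint, and there is no room to hide the discrepancy.
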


\begin{proof}
Without loss of generality, suppose that $|\supp(x)| < |\supp(f(x))|$. Spoiler begins by pebbling $x \mapsto f(x)$. Write $x = x_{i_{1}}^{d_{1}} \cdots x_{i_{r}}^{d_{r}} \cdot c$, where the $x_{i_{1}}, \ldots, x_{i_{r}}$ correspond to vertices $v_{i_{1}}, \ldots, v_{i_{r}} \in V(\Gamma_{1})$ as per Mekler's construction, and $c \in Z(G_{1})$. Let $m := \lceil r/2 \rceil$. At the next two rounds, Spoiler pebbles $x_{i_{1}}^{d_{1}} \cdots x_{i_{m}}^{d_{m}} \mapsto u$ and $x_{i_{m+1}}^{d_{i_{m+1}}} \cdots x_{i_{r}}^{d_{r}} \cdot c \mapsto v$. Now if $f(x) \neq uv$, then Spoiler wins. So suppose $f(x) = uv$. As $|\supp(x)| \neq |\supp(f(x))|$, we have that either:
\begin{align*}
&|\supp(x_{i_{1}}^{d_{1}} \cdots x_{i_{m}}^{d_{m}})| < |\supp(u)| \text{ or, } \\ 
&|\supp(x_{i_{m+1}}^{d_{i_{m+1}}} \cdots x_{i_{r}}^{d_{r}} \cdot c)| < |\supp(v)|.
\end{align*}

\noindent Without loss of generality, suppose that: $|\supp(x_{i_{1}}^{d_{1}} \cdots x_{i_{m}}^{d_{m}})| < |\supp(u)|$. Spoiler iterates on the above argument starting from $x_{i_{1}}^{d_{1}} \cdots x_{i_{m}}^{d_{m}} \mapsto u$ and reusing the pebbles on $x, x_{i_{m+1}}^{d_{i_{m+1}}} \cdots x_{i_{r}}^{d_{r}} \cdot c$. Now as $|\supp(x_{i_{1}}^{d_{1}} \cdots x_{i_{m}}^{d_{m}})| \leq |\supp(x)/2|$, we iterate on this argument at most $\log_{2}(|\supp(x)|) + 1 \leq \log |V(\Gamma_{1})| \leq \log \log |G_{1}| + O(1)$ times until we reach a base case where our pebbled element $x' \in G_{1}$ has support size $1$, but the corresponding pebbled element $y' \in H$ has support size $> 1$. In this case, Spoiler at the next round reuses one of the other two pebbles on the board to pebble some element in $C_{G_{1}}(x')$ whose image does not belong to $C_{G_{2}}(y')$. The result now follows.
\end{proof}

\begin{lemma}[Compare rounds c.f. {\cite[Lemma~6.4]{WLGroups}}] \label{lem:GadgetStructure}
Let $u \in V(\Gamma_{1})$, and let $g_{u} \in G_{1}$ be a single-support element that is supported by $u$. 
\begin{enumerate}[label=(\alph*)]
\item Suppose $v \in V(\Gamma_{1})$ belongs to the same gadget as $u$, and let $g_{v} \in G_{1}$ be a single-support element that is supported by $v$. Let $f : G_{1} \to G_{2}$ be the bijection that Duplicator selects. If $f(g_{u}g_{v})$ is not supported by exactly two vertices $x, y$ on the same gadget of $V(\Gamma_{2})$, then Spoiler can win with $3$ pebbles and $O(1)$ rounds.

\item Suppose that $g_{u} \mapsto x$ has been pebbled. Let $f : G_{1} \to G_{2}$ be the bijection that Duplicator selects at the next round. Let $v \in V(\Gamma_{1})$ be a vertex on the same gadget as $u$, and suppose that for some single-support element $g_{v} \in G_{1}$ that is supported by $v$, that $f(g_{v})$ belongs to a different gadget than $f(g_{u}) = x$. Then Spoiler can win with $3$ pebbles and $O(1)$ rounds.

\item Suppose that $u \in V(\Gamma_{1})$ is an internal vertex on some gadget, and let $g_{u} \in G_{1}$ be a single-support element that is supported by $u$. Suppose that Duplicator selects a bijection $f : G_{1} \to G_{2}$ where $f(g_{u})$ is a single-support vertex is supported by an external vertex of some gadget. Then Spoiler can win with $3$ pebbles and $O(1)$ rounds.
\end{enumerate}
\end{lemma}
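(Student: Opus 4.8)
The plan is to re-derive the Version~II argument of Brachter \& Schweitzer \cite[Lemma~6.4]{WLGroups} in the count-free Version~I setting, while tracking the number of refinement rounds each step consumes. The single engine behind all three parts is \Lem{lem:Centralizer} together with \Lem{lem:BSCentralizers}: the latter lets us read the centralizer of any single-support element $g_w$ directly off the CFI graph as $\langle g_w\rangle \langle N(w)\rangle G_i'$, so that $|C_{G_i}(g_w)| = p^{1+\deg(w)}\,|Z(G_i)|$ (using $Z(G_i)=G_i'$, valid since a CFI graph has no universal vertex), and the former converts any mismatch in centralizer size into a three-pebble, three-round win. Two structural facts drive the translation to Version~I: (i) every invariant we use is either a centralizer size or a commuting/product relation among \emph{single-support} elements, and such data is already visible to Version~I through the ternary relation $R$, so we never need the subgroup-generation power of Version~II; and (ii) since every vertex of a $3$-regular CFI graph has degree $3$, all single-support elements share the baseline centralizer size $p^4|Z(G_i)|$, which is exactly how single-support (versus support $\geq 2$) is recognized in $O(1)$ rounds and is the yardstick against which the finer gadget invariants are read.

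For part~(a), I would have Spoiler individualize $g_u\mapsto f(g_u)$ and $g_v\mapsto f(g_v)$; the centralizer-size test forces both images to be single-support (else Spoiler wins in $O(1)$ rounds), say supported at $x,y$. Pebbling the product then forces, under marked equivalence, $f(g_ug_v)=f(g_u)f(g_v)$, whose support is exactly $\{x,y\}$, so ``$|\supp|=2$'' is automatic and costs only $O(1)$ rounds. What remains is to certify that $x,y$ share a gadget, which is precisely the contrapositive of part~(b); thus (a) reduces to (b) plus a single use of the product relation.

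Part~(b) is where the real work lies, and the main obstacle is the \emph{local symmetry} of the CFI gadgets that makes these graphs hard for Weisfeiler--Leman in the first place. Naive radius-$1$ or radius-$2$ invariants are not enough: an internal vertex and an external vertex have the same degree, and (for instance) an internal vertex $v_1$ of one gadget and an external vertex of an adjacent gadget can be non-adjacent yet still share a common neighbor, so ``adjacency plus existence of a common neighbor'' does not cleanly separate same-gadget from different-gadget pairs, and the twist-partner pairs $(a_i,b_i)$ are locally indistinguishable from certain cross-gadget pairs. The plan is therefore to exploit the \emph{anchored} pebble $g_u\mapsto x$ already on the board: with $x$ fixed, Spoiler navigates the bounded-size gadget of $x$ by pebbling a witnessing common neighbor or a commuting/non-commuting discrepancy read off via \Lem{lem:BSCentralizers}, forcing Duplicator to keep $f(g_v)$ inside $x$'s gadget. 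The subtle twist-partner subcase is handled by following the two parallel cross-edges into the shared neighbouring gadget; because the base graph is $3$-regular and each gadget has bounded size, only a radius-$O(1)$ region around the anchors is inspected, so the win still costs $O(1)$ rounds.

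For part~(c) the difficulty is sharpest, since an internal and an external vertex have equal centralizer size and isomorphic radius-$2$ neighbourhoods. The separating feature is global-in-the-gadget but local overall: every neighbour of an internal vertex lies in its own gadget, whereas an external vertex has exactly one neighbour---its cross-partner---lying in a different gadget. The plan is to pebble two neighbours of $f(g_u)$ and invoke the gadget-membership machinery of part~(b) to expose this inter-gadget link, which $f(g_u)$ (external) must possess but which no pair of neighbours of the internal $g_u$ can match. The crux, and the place where the Brachter--Schweitzer analysis must be reworked most carefully, is verifying that the group-theoretic centralizer data of \Lem{lem:BSCentralizers} already breaks the graph-level CFI symmetry at bounded radius---so that this bootstrapping terminates in $O(1)$ rounds---rather than requiring the global twist-counting, which is precisely what is deferred to \Lem{lem:Support} and to the proof of \Thm{thm:CFIWL}.
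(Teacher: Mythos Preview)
Your framework is sound—reducing everything to centralizer-size mismatches via \Lem{lem:Centralizer} and \Lem{lem:BSCentralizers}, and recognising that adjacency in the CFI graph is read off as commutation in the Mekler group—but the heart of part~(b), and hence of~(a), is missing a concrete combinatorial ingredient. You correctly diagnose the difficulty: with only three pebbles, naive degree or common-neighbour invariants do not separate same-gadget from cross-gadget pairs. However, your proposed remedy (``navigate the bounded-size gadget,'' ``follow the two parallel cross-edges,'' handle a ``subtle twist-partner subcase'') never pins down a distinguishing invariant, and your closing worry about whether the centralizer data ``breaks the graph-level CFI symmetry at bounded radius'' suggests you have not found one. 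The paper's argument is much more direct: it uses the specific structural fact that in a $3$-regular CFI graph, \emph{every $6$-cycle and every $8$-cycle is contained in a single gadget, and conversely any two vertices of the same gadget lie on a common $6$- or $8$-cycle}. With pebbles fixed on $g_u$ and $g_v$, Spoiler uses the one remaining pebble to walk this short cycle step by step (each step enforced by a commutation/adjacency check), arriving at a neighbour of $g_v$ in at most four rounds; if $x,y$ lie on different gadgets there is no such short cycle through them, and Duplicator cannot keep pace. This is the entire content of~(b), and~(a) then reduces to it exactly as you outlined.

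Two smaller points. First, your worry about ``CFI symmetry'' is misdirected here: the global twist is what makes the \emph{graphs} hard for WL, but the local gadget structure—which is all that is at stake in this lemma—is rigid enough that short cycles detect gadget membership. The genuinely hard global step is deferred to the proof of \Thm{thm:CFIWL}, not this lemma. Second, for~(c) the paper pebbles a \emph{single} neighbour of $f(g_u)$, namely its unique cross-gadget external partner $y$; part~(b) then forces Duplicator's preimage $g$ onto a different gadget from $g_u$, whence $g_u$ and $g$ cannot commute (internal vertices have no cross-gadget neighbours) while $f(g_u)$ and $h_y$ do. Your ``two neighbours'' variant would also work, but one suffices.
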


\begin{proof}
We proceed as follows.
\begin{enumerate}[label=(\alph*)]
\item Spoiler begins by pebbling $g_{u}g_{v} \mapsto f(g_{u}g_{v})$. Now if $|\supp(f(g_{u}g_{v}))| > 2$, Spoiler pebbles $g_{u} \mapsto x, g_{v} \mapsto y$ at the next two rounds. If $f(g_{u}g_{v}) \neq xy$, then Spoiler immediately wins. So suppose $f(g_{u}g_{v}) = xy$. As $|\supp(f(g_{u}g_{v}))| > 2$, either $|\supp(x)| > 1$ or $|\supp(y)| > 1$. Without loss of generality, suppose $|\supp(x)| > 1$. In this case, Spoiler wins by the argument in the proof of \Lem{lem:Centralizer}, reusing the pebbles on $g_{v}, g_{u}g_{v}$. 

So suppose now that $|\supp(f(g_{u}g_{v}))| = 2$. At the next two rounds, Spoiler pebbles $g_{u} \mapsto x, g_{v} \mapsto y$. Now by the CFI construction \cite{CFI}, the graphs $\Gamma_{1}, \Gamma_{2}$ have the property that for every $6$-cycle and every $8$-cycle, there exists a single gadget that contains said cycle. That is, no $6$-cycle and no $8$-cycle span two gadgets. Moreover, every pair of vertices lying on the same gadget lie on some common $6$-cycle or same $8$-cycle. Thus, Spoiler may reuse the pebble on $g_{u}g_{v}$ and trace along the cycle containing $g_{u}, g_{v}$ starting from $g_{u}$. Within at most $4$ additional rounds, Spoiler will have moved this third pebble to a neighbor of $g_{v}$, while the corresponding pebble will not be along a neighbor of $y$. Spoiler now wins.

\item Spoiler begins by pebbling $g_{v} \mapsto f(g_{v})$. Using a third pebble, we now proceed identically as in the second paragraph of part (a). The result follows.

\item Spoiler begins by pebbling $g_{u} \mapsto f(g_{u})$. We note that if $f(g_{u})$ is not supported by a single vertex, then by \Lem{lem:Centralizer}, Spoiler can win with $2$ additional pebbles and $2$ additional rounds. So suppose $f(g_{u})$ is supported by the vertex $x \in V(\Gamma_{2})$. Let $f' : G_{1} \to G_{2}$ be the bijection that Duplicator selects at the next round. Let $y\in V(\Gamma_{2})$ be an external vertex adjacent to $x$, and let $h_{y} \in G_{2}$ be a single-support group element that is supported by $y$. Let $g  \in G_{1}$ such that $f'(g) = h_{y}$. Spoiler pebbles $g \mapsto h_{y}$. 

We may again assume that $g$ has single support, or Spoiler wins in one additional pebble (beyond reusing the pebble on $g_{u}$) and two additionals rounds. By part (b), we may assume that $g_{u}, g$ belong to different gadgets, or Spoiler wins with $1$ additional pebble and $O(1)$ additional rounds. But as $g_{u}$ is supported by an internal vertex, so the vertices supporting $g_{u}, g$ are not adjacent in $\Gamma_{1}$. By Mekler's construction, this implies that $g_{u}, g$ do not commute. However, $f(g_{u}), h_{y}$ are adjacent and do commute. So Spoiler pebbles $g_{u}g$ and wins.
\end{enumerate}
\end{proof}

\noindent For convenience, we pull out the following construction.

\begin{definition} \label{def:GadgetProducts}
Let $G_{i}$ ($i = 1, 2$) be a CFI group. We first define a set $\mathcal{V}$ of vertices in $\Gamma_{1}$ as follows. For each gadget, we include a single, arbitrary internal vertex and all adjacent external vertices. Let $v \in G_{1}$ denote the ordered product of all the vertices in $\mathcal{V}$. 
\end{definition}

\noindent We now prove \Thm{thm:CFIWL}.

\begin{proof}[Proof of \Thm{thm:CFIWL}]
We follow the strategy of \cite[Theorem~6.1]{WLGroups}, carefully analyzing the number of rounds. We first define a set $\mathcal{V}$ of vertices in $\Gamma_{1}$ according to \Def{def:GadgetProducts}. Let $v \in G_{1}$ denote the ordered product of all the vertices in $\mathcal{V}$. By \Lem{lem:Support}, Duplicator must choose a bijection $f : G_{1} \to G_{2}$ in which $f(v)$ has the same support size as $v$. Otherwise, Spoiler can win with $3$ pebbles and $O(\log \log n)$ rounds. 

Spoiler begins by pebbling $v \mapsto f(v)$. Now by \Lem{lem:Support}, Duplicator must select bijections that map (setwise) $\supp(v) \mapsto \supp(f(v))$; otherwise, Spoiler can win with $2$ additional pebbles and $O(\log \log n)$ rounds. By \Lem{lem:GadgetStructure}, $\supp(f(v))$ must be composed exactly as $\supp(v)$; otherwise, Spoiler wins with $2$ additional pebbles and $O(1)$ rounds. That is, $\mathcal{V}' := \supp(f(v))$ must contain exactly one internal vertex and all adjacent external vertices from each gadget of $\Gamma_{2}$ (i.e., $\mathcal{V}'$ must also satisfy \Def{def:GadgetProducts}). 

Now in the proof of \cite[Theorem~6.1]{WLGroups}, Brachter \& Schweitzer showed that the induced subgraphs $\Gamma_{1}[\mathcal{V}]$ and $\Gamma_{2}[\mathcal{V}']$ have a different number of edges modulo $2$. In particular, $\Gamma_{1}[\mathcal{V}]$ and $\Gamma_{2}[\mathcal{V}']$ disagree in exactly one edge: the twisted link.

Let $f' : G_{1} \to G_{2}$ be the bijection that Duplicator selects at the next round. As the number of edges in $\Gamma_{1}[\mathcal{V}]$ and $\Gamma_{2}[\mathcal{V}']$ disagree, there exists a single-support vertex $g \in G_{1}$ such that the vertex supporting $g$ has degree in $\Gamma_{1}[\mathcal{V}]$ that is different than the degree of the vertex supporting $f'(g)$ in $\Gamma_{2}[\mathcal{V}']$. Spoiler pebbles $g \mapsto f'(g)$. At the next round, Duplicator must select a bijection $f'' : G_{1} \to G_{2}$ that maps some element $u$ of $\mathcal{V}$ that commutes with $g$ to some element $f''(u)$ of $\mathcal{V}'$ that does not commute with $f'(g)$ (or vice-versa). Spoiler pebbles $u \mapsto f''(u)$. Then at the next round, moves the pebble on $v$ to $gu$ and wins. In total, Spoiler used at most $3$ pebbles on the board and $O(\log \log n)$ rounds.

Furthermore, suppose that $\Gamma_{0}$ is identified by the graph $(3,r)$-WL. Brachter \& Schweitzer \cite{WLGroups} previously established that all single-support group elements of $G_{1}, G_{2}$ have centralizers of size $p^{4} \cdot |Z(G_{1})|$, and all other group elements have centralizers of size at most $p^{3} \cdot |Z(G_{1})|$. Now let $H$ be an arbitrary group, and suppose $3$-WL Version I fails to distinguish $G_{i}$ $(i = 1, 2)$ from $H$ in $\max\{r, O(\log \log n)\}$ rounds. Then $H$ has the same number of group elements with centralizers of size $p^{4} \cdot |Z(G_{1})|$. Furthermore, as $3$-WL Version I fails to distinguish $G_{i}$ $(i = 1, 2)$ from $H$ in $\max\{r, O(\log \log n)\}$ rounds, the induced commutation graph on these elements in $H/Z(H)$ is indistinguishable from $\Gamma_{i}$. Furthermore, by \Lem{lem:GadgetStructure}, $(3, O(1))$-WL Version I identifies internal vertices. So given $G_{i}$ ($i = 1, 2$), we can recover the base graph $\Gamma_{0}$. Furthermore, we can reconstruct the base graph $\Gamma$ underlying $H$. Precisely, any bijection $f : G_{i} \to H$ induces a bijection $\tilde{f} : V(\Gamma_{0}) \to V(\Gamma)$, and so we may simulate the $4$-pebble, $r$-round strategy to identify $\Gamma_{0}$ in the graph pebble game, by pebbling the appropriate elements of $G_{i}$ ($i = 1, 2$). But since $\Gamma_{0}$ is identified by the graph $(3,r)$-WL, we have that $\Gamma_{0} \cong \Gamma$. So $H$ is isomorphic to either $G_{1}$ or $G_{2}$. The result now follows.
\end{proof}

\subsection{Count-Free Strategy and the CFI Groups}

In this section, we establish the following theorem.

\begin{theorem} \label{thm:SecCountFreeCFI}
Let $\Gamma_{0}$ be a $3$-regular connected graph, and let $\Gamma_{1} := \text{CFI}(\Gamma_{0})$ and $\Gamma_{2} := \widetilde{\text{CFI}(\Gamma_{0})}$ by the corresponding CFI graphs. For $i = 1, 2$, denote $G_{i} := G_{\Gamma_{i}}$ to be the corresponding groups arising from Mekler's construction. 
\begin{enumerate}[label=(\alph*)]
\item The multiset of colors computed by the count-free $(O(1), O(\log \log n))$-WL Version I distinguishes $G_{1}$ from $G_{2}$. In particular, we can decide whether $G_{1} \cong G_{2}$ in $\beta_{1}\textsf{MAC}^{0}(\textsf{FOLL})$. 

\item If furthermore $\Gamma_{0}$ is identified by the graph count-free $(3, r)$-WL algorithm, then the multiset of colors computed by the count-free $(O(1), \max\{r, O(\log \log n)\})$-WL Version I will distinguish $G_{i}$ ($i = 1, 2$) from an arbitrary graph $H$. In particular, we can decide whether $G_{i} \cong H$ in $\beta_{1}\textsf{MAC}^{0}(\textsf{FOLL})$. 
\end{enumerate}
\end{theorem}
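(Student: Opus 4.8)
The plan is to combine the pebble-game analysis behind \Thm{thm:CFIWL} with the Abelian-group template of Grochow \& Levet \cite{GLWL1}. Concretely, I would run the count-free $(O(1),O(\log\log n))$-WL Version I to produce a coloring $\chi$ of $O(1)$-tuples of group elements and then argue that the \emph{multiset} $\{\!\{\chi(\bar g)\}\!\}$ differs between $G_1$ and $G_2$. Granting that, the complexity bound is immediate and identical to the Abelian case: the refinement lives in $\textsf{FOLL}$, since each of the $O(\log\log n)$ rounds is $\textsf{AC}^0$-computable by Grohe \& Verbitsky; one then spends $O(\log n)$ nondeterministic bits (the $\beta_1$) to guess the color class whose multiplicity differs, feeds $+1$ for each tuple of $G_1$ in that class and $0$ for each tuple of $G_2$ into a single $\textsf{Majority}$ gate (the $\textsf{MAC}^0$ layer), and obtains $\beta_1\textsf{MAC}^0(\textsf{FOLL})$. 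Note that because we work at dimension $O(1)$, fixing a bounded set of coordinates of a tuple already plays the role of individualizing $O(1)$ elements, so no separate individualization step is needed.

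The bulk of the work is the claim that the multiset differs, and here I would first isolate the \emph{set-based} (local) ingredients of the proof of \Thm{thm:CFIWL} from its single genuinely \emph{counting} ingredient. The predicates ``$g$ is a single-support element,'' ``two single-support elements lie on a common gadget,'' ``a vertex is internal versus external,'' and ``$g_u,g_v$ commute'' are all expressible through the Version I multiplication relation together with the order information that count-free WL Version I computes in $O(\log\log n)$ rounds. Indeed, once a bounded set of coordinates has been fixed, Lemmas~7.3, 7.9, and 7.12 of \cite{GLWL1} guarantee that powers and length-$O(\log n)$ words over those coordinates receive unique colors after $O(\log\log n)$ rounds, which is exactly what is needed to reconstruct, at the level of colors, the gadget-structure arguments of \Lem{lem:GadgetStructure} and the support-halving argument of \Lem{lem:Support}. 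Thus after $O(\log\log n)$ rounds each single-support element carries a color recording its gadget role and its local commutation pattern, and tuples anchored at the gadget-product element $v$ of \Def{def:GadgetProducts} carry colors recording the induced structure on $\supp(v)=\mathcal V$.

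The main obstacle is that \Lem{lem:Centralizer}, the one place where the counting game exploits $|C_G(x)|\neq|C_H(f(x))|$, has \emph{no} count-free analogue: in the count-free pebble game Duplicator is not forced to commit to a bijection, so a single round can never convert a centralizer-size gap into a win. I would therefore push precisely this counting into the final multiset comparison. Following Brachter \& Schweitzer, the twist forces the induced subgraphs $\Gamma_1[\mathcal V]$ and $\Gamma_2[\mathcal V']$ to disagree in exactly one edge, hence in the degrees of two vertices; the task is to show that this degree discrepancy surfaces as a genuine multiplicity difference among the count-free color classes of tuples anchored at $v$. The delicate point, and where I expect the real care, is that $v$ is not canonical and count-free refinement only propagates \emph{sets} of neighbor-colors, so one must verify that the parity/degree difference is not washed out across the many choices of anchor and survives to produce a strict multiplicity difference once (and only once) we pass to the multiset.

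For part (b) I would argue as in the second half of \Thm{thm:CFIWL}. The count-free coloring already identifies internal vertices and reconstructs the commutation graph on single-support elements, so from $G_i$ one recovers the base graph $\Gamma_0$ and from an arbitrary $H$ a candidate base graph $\Gamma$; any bijection underlying the comparison induces a bijection of base-graph vertices. One can then simulate the graph count-free $(3,r)$-WL on $\Gamma_0$ versus $\Gamma$ by coloring the corresponding single-support elements, so that running count-free WL Version I for $\max\{r,O(\log\log n)\}$ rounds and comparing multisets distinguishes $G_i$ from any $H\not\cong G_i$, again within $\beta_1\textsf{MAC}^0(\textsf{FOLL})$.
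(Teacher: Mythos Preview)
Your high-level plan and the complexity bookkeeping are correct and match the paper. The genuine gap is in your treatment of single-support detection. You first assert that ``$g$ is a single-support element'' is count-free expressible, citing Lemmas~7.3, 7.9, and 7.12 of \cite{GLWL1}, but those lemmas only give unique colors to powers and short words over \emph{already individualized} elements; they say nothing about centralizer structure. You then (correctly) observe that \Lem{lem:Centralizer} has no generic count-free analogue and propose to defer this to the final multiset comparison. The problem is that the support-halving argument of \Lem{lem:Support} and the gadget-structure arguments of \Lem{lem:GadgetStructure} all \emph{bottom out} in single-support detection: without it in the count-free game, the chain collapses and you have no clean color class whose multiplicity isolates the degree discrepancy. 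Your worry that the parity difference might be ``washed out'' is exactly the failure mode you would hit.

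The paper resolves this not by deferring the counting but by proving a count-free replacement for \Lem{lem:Centralizer} that exploits the specific structure of these groups. Since $G_i$ is class~$2$ of exponent~$p$, the quotient $C_{G_i}(x)/Z(G_i)$ is elementary abelian, of rank exactly $4$ when $|\supp(x)|=1$ and rank at most $3$ otherwise. In the count-free game Spoiler pebbles four coset representatives $g_1,\dots,g_4$ of the rank-$4$ side; Duplicator's responses $h_1,\dots,h_4$ must either leave the centralizer (an immediate win) or satisfy a dependence $h_4=h_1^{e_1}h_2^{e_2}h_3^{e_3}\cdot z$ with $z$ central. Spoiler then pebbles the $g_j^{e_j}$ and $z$; uniqueness of powers in $O(\log\log n)$ rounds (\cite[Lemma~3.12]{GLWL1}) forces a contradiction with the independence of the $g_j$. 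This gives a count-free single-support test with $O(1)$ pebbles and $O(\log\log n)$ rounds. Once that is in hand, the count-free analogues of \Lem{lem:Support} and \Lem{lem:GadgetStructure} go through essentially as you outlined, and the final multiset argument only needs the degree discrepancy between $\Gamma_1[\mathcal V]$ and $\Gamma_2[\mathcal V']$, with no deferred centralizer counting.
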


\noindent \\ We proceed similarly as in the case of counting WL. We begin with the following lemma.

\begin{lemma} \label{lem:CountFreeSingleSupport}
Let $G_{i}$ be a (twisted) CFI group ($i = 1, 2$). Let $u, v \in G_{i}$ where $|\supp(u)| = 1$ and $|\supp(v)| > 1$. Suppose that $u \mapsto v$ has been pebbled. Spoiler can win with $O(1)$ additional pebbles and $O(\log \log n)$ additional rounds.
\end{lemma}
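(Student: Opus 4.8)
The plan is to establish the count-free analogue of the counting-game strategy used in \Lem{lem:Centralizer} and the subsequent remark, where single-support elements are separated from the rest by comparing centralizer \emph{sizes}. Since the count-free game forbids counting, I would replace the cardinality comparison by an explicit structural witness, and the engine for producing it is a support-reduction (halving) recursion in the spirit of \Lem{lem:Support}, now run in the count-free pebble game (where Spoiler chooses which group to play in and Duplicator answers in the other). The invariant I would maintain is that the board carries a pebbled pair $(x,y)$ whose support sizes disagree; we begin with $(u,v)$, for which $|\supp(u)|=1<|\supp(v)|$.

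At each stage Spoiler plays inside the group carrying the element of larger support, say $y$ with $|\supp(y)|=s$. Spoiler factors $y=y'y''$ by splitting $\supp(y)$ into two nonempty halves; since the two factors have disjoint supports there is no cancellation, so $|\supp(y')|+|\supp(y'')|=s$ with each piece of size roughly $s/2$. Spoiler pebbles $y'$ and then $y''$; to avoid losing immediately Duplicator must answer with $x',x''$ satisfying $x'x''=x$ (else the product relation $y'y''=y$ is not preserved). I would then argue that one of the sub-pairs $(x',y')$ or $(x'',y'')$ again exhibits a support-size discrepancy, and Spoiler recurses on that pair, which now has strictly smaller maximal support. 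Because the larger support is (at least) halved each time, after $O(\log|\supp(v)|)\le O(\log|V(\Gamma)|)=O(\log\log n)$ rounds we reach a base case in which a single-support element is matched to a central element or to a two-support element.

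In the base case I would finish in $O(1)$ additional rounds by producing a \emph{witnessed} product relation that Duplicator cannot honor. Using \Lem{lem:BSCentralizers} and \Lem{lem:GadgetStructure} to locate, in the group with the single-support element, a single-support neighbor that commutes with it, Spoiler pebbles this neighbor and then the corresponding product; as in the proof of \Lem{lem:Centralizer}, the forced images must then commute on both sides, and the mismatch in the commutation pattern between a single vertex and a two-vertex support (dictated by Mekler's construction, where two generators commute iff the vertices are adjacent) yields a violated product relation. This is exactly the count-free substitute for the centralizer-size comparison.

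The main obstacle is the \emph{cancellation} phenomenon: a factorization $x=x'x''$ of a single-support element need not have $x',x''$ of small support, since two large supports can cancel down to a single vertex, so matching support sizes alone does not by itself force the recursion's discrepancy. To handle this I would strengthen the invariant to track commutation relations alongside support sizes, using the bound $|\supp(x'x'')|\ge \big|\,|\supp(x')|-|\supp(x'')|\,\big|$ to control the cancellation: whenever Duplicator answers with two heavily overlapping supports in order to match sizes, the overlap forces a commutation pattern incompatible with the disjoint-support factorization $y'y''$ on the other side, and Spoiler exposes this by pebbling the relevant product. Verifying that this combined support/commutation invariant is genuinely maintained at every step of the recursion, and that the base case is always reachable within the round budget, is the delicate part of the argument.
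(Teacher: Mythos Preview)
Your halving recursion runs in the wrong direction, and this is a genuine gap rather than a delicacy. When Spoiler splits the \emph{larger}-support element $v=y'y''$ into disjoint halves of sizes $\lceil s/2\rceil$ and $\lfloor s/2\rfloor$, Duplicator must only produce $x',x''$ with $x'x''=u$. Nothing forces a support-size mismatch: Duplicator can choose $x',x''$ with $|\supp(x')|=|\supp(y')|$ and $|\supp(x'')|=|\supp(y'')|$ by taking heavily overlapping supports whose exponents cancel outside the single vertex supporting $u$. So your invariant ``supports disagree'' need not survive one step, and the claimed strictly smaller maximal support is false (indeed the $x$-side supports can exceed $s$). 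Your proposed commutation-based patch is not a proof: whether $[y',y'']=1$ is a single bit Duplicator can match, and you give no concrete relation among the pebbled elements in Version~I that is violated. Note, by contrast, that the paper's count-free support lemma (\Lem{lem:CountFreeSupport}) follows \Lem{lem:Support} by splitting on the \emph{smaller}-support side; that direction works because $|\supp(uv)|\le|\supp(u)|+|\supp(v)|$ forces a discrepancy, but it is useless here since $|\supp(u)|=1$ leaves nothing to split.

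The paper's argument is entirely different and avoids any recursion on support. It uses that $C_{G_i}(u)/Z(G_i)\cong(\Z/p\Z)^4$ while $C_{G_i}(v)/Z(G_i)\cong(\Z/p\Z)^d$ with $d\le 3$. Spoiler pebbles four representatives $g_1,\dots,g_4\in C_{G_i}(u)$ that are independent modulo $Z(G_i)$; Duplicator's responses $h_1,\dots,h_4$ either include some $h_j\notin C_{G_i}(v)$ (an immediate commutation violation), or all lie in $C_{G_i}(v)$ and hence satisfy a linear relation $h_4=h_1^{e_1}h_2^{e_2}h_3^{e_3}z$ with $z\in Z(G_i)$. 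Spoiler then pins down the powers $h_j^{e_j}$ using \cite[Lemma~3.12]{GLWL1} in $O(\log\log n)$ rounds and exhibits the relation on the $v$-side; the corresponding product on the $u$-side is non-central by independence of the $g_j$, yielding a violated relation in $O(1)$ further rounds. The $O(\log\log n)$ budget is spent on power verification, not on a support-halving recursion.
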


\begin{proof}
Brachter \& Schweitzer \cite{WLGroups} previously established that $|C_{G_{i}}(u)| = p^{4} \cdot |Z(G_{i})|$ and $|C_{G_{i}}(v)| \leq p^{3} \cdot |Z(G)|$. Now by \cite[Lemma~4.7]{WLGroups} (recalled as \Lem{lem:BSFrattini}), we have that $Z(G_{i}) = \Phi(G_{i}) = [G_{i}, G_{i}]$. In particular, as $G_{i}$ is a class $2$ $p$-group of exponent $p > 2$, we have that $G_{i}/Z(G_{i})$ is elementary Abelian. So $C_{G_{i}}(u)/Z(G_{i}) \cong (\mathbb{Z}/p\mathbb{Z})^{4}$ and $C_{G_{i}}(v)/Z(G_{i}) \cong (\mathbb{Z}/p\mathbb{Z})^{d}$ for some $d \leq 3$. Spoiler now pebbles a representative $g_{1}, g_{2}, g_{3}, g_{4}$ of each coset for $C_{G_{i}}(u)/Z(G_{i})$. Let $h_{1}, h_{2}, h_{3}, h_{4}$ be the corresponding elements Duplicator pebbles. Necessarily, one such element either does not belong to $C_{G_{i}}(v)$ or belongs to $\langle h_{1}, h_{2}, h_{3} \rangle \cdot Z(G_{i})$. Without loss of generality, suppose this element is $h_{4}$. If $h_{4} \not \in C_{G_{i}}(v)$, Spoiler wins by pebbling $h_{4}, h_{4}v$ at the next two rounds. 

So suppose that $h_{4} \in C_{G_{i}}(v)$. Now each element of $\langle h_{1}, h_{2}, h_{3} \rangle \cdot Z(G_{i})$ can be written as $h_{1}^{e_{1}}h_{2}^{e_{2}}h_{3}^{e_{3}} \cdot z$ for some $z \in Z(G_{i})$. Using $3$ additional pebbles, Spoiler can pebble $g_{j}^{e_{j}}$ ($j = 1, 2, 3$). If Duplicator does not respond by pebbling $h_{j}^{e_{j}}$, then by \cite[Lemma~3.12]{GLWL1}, Spoiler can win with $O(1)$ additional pebbles in $O(\log \log n)$ rounds. Spoiler now pebbles $z \in Z(G_{i})$ such that $h_{4} = h_{1}^{e_{1}}h_{2}^{e_{2}}h_{3}^{e_{3}} \cdot z$ and wins. The result now follows.
\end{proof}

We now show that the count-free WL algorithm will distinguish group elements with different support sizes.

\begin{lemma} \label{lem:CountFreeSupport}
Let $G_{i}$ be a (twisted) CFI group ($i = 1, 2$). Let $u, v \in G_{i}$ where $|\supp(u)| \neq |\supp(v)| > 1$. Suppose that $u \mapsto v$ has been pebbled. Spoiler can win with $4$ additional pebbles and $O(\log \log n)$ additional rounds.
\end{lemma}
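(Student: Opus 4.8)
The plan is to mimic the strategy of the counting version (Lemma~\ref{lem:Support}), replacing the use of the counting winning condition with the count-free tools we have just developed. The key structural fact is that in the count-free game, once $u \mapsto v$ has been pebbled with $|\supp(u)| \neq |\supp(v)|$, Spoiler wants to isolate a single pair of vertices where the support sizes genuinely disagree, and then invoke \Lem{lem:CountFreeSingleSupport} to finish. First I would have Spoiler split the element $u$ into two subwords of roughly equal support size, just as in the proof of \Lem{lem:Support}. Precisely, write $u = x_{i_{1}}^{d_{1}} \cdots x_{i_{r}}^{d_{r}} \cdot c$ with $c \in Z(G_{i})$, set $m := \lceil r/2 \rceil$, and have Spoiler pebble the two halves $a := x_{i_{1}}^{d_{1}} \cdots x_{i_{m}}^{d_{m}}$ and $b := x_{i_{m+1}}^{d_{m+1}} \cdots x_{i_{r}}^{d_{r}} \cdot c$, to images $a', b'$ chosen by Duplicator. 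If $v \neq a' b'$, Spoiler wins immediately by reusing the pebble on $u$. Otherwise, since $|\supp(u)| \neq |\supp(v)|$ and support is subadditive under the product, at least one half must exhibit a strict support-size discrepancy, i.e.\ $|\supp(a)| < |\supp(a')|$ or $|\supp(b)| < |\supp(b')|$.

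Next, Spoiler recurses on whichever half witnesses the discrepancy, reusing the pebbles that are no longer needed (we have a constant surplus of pebbles, which is why the bound is ``$4$ additional pebbles''). Because each halving step reduces the support size of the tracked element by a factor of roughly two, and the support sizes are bounded by $|V(\Gamma_i)| = O(\log |G_i|)$, after $O(\log\log n)$ rounds we reach a base case in which the tracked element $u'$ on the $G_i$ side has $|\supp(u')| = 1$ while the corresponding image $v'$ has $|\supp(v')| > 1$. At this point, precisely one of $u' \mapsto v'$ has single support on one side and multi-support on the other, which is exactly the hypothesis of \Lem{lem:CountFreeSingleSupport}. Spoiler then applies that lemma to win with $O(1)$ further pebbles and $O(\log\log n)$ further rounds.

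The one genuine subtlety, which distinguishes this from the counting proof, is the base case: in the counting game the base case was dispatched by a single centralizer argument (Spoiler pebbles some element in $C_{G_i}(u')$ whose image escapes $C_{G_i}(v')$), but the count-free game cannot simply count centralizer sizes. This is exactly why we must route the base case through \Lem{lem:CountFreeSingleSupport} rather than \Lem{lem:Centralizer}: that earlier lemma already packaged up the count-free argument for exhibiting a centralizer-size discrepancy without counting, by pebbling a full coset transversal of $C_{G_i}(u')/Z(G_i) \cong (\mathbb{Z}/p\mathbb{Z})^4$ and forcing Duplicator into an inconsistency. I expect the main obstacle to be bookkeeping the pebbles correctly through the recursion: I must verify that at each halving step the two ``anchor'' pebbles (on the current element and on the discarded half) can be recycled so that the total never exceeds the claimed budget of $4$ additional pebbles beyond the pebble already committed to $u \mapsto v$, and that the round count telescopes to $O(\log\log n)$ since each of the $O(\log\log n)$ recursion levels costs only $O(1)$ rounds and the terminal application of \Lem{lem:CountFreeSingleSupport} also costs $O(\log\log n)$ rounds, keeping the total within the stated bound.
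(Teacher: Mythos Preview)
Your proposal is correct and matches the paper's approach, which simply observes that the proof of \Lem{lem:Support} does not rely on Duplicator's bijection except in the initial selection of the element $x$, and hence carries over verbatim once $u \mapsto v$ is already pebbled. You are in fact more explicit than the paper about routing the base case through \Lem{lem:CountFreeSingleSupport} rather than the centralizer argument of \Lem{lem:Centralizer}; this is the right fix and is presumably what the paper intends, given that \Lem{lem:CountFreeSingleSupport} appears immediately prior.
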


\begin{proof}
Aside from Spoiler selecting an element $x$ to pebble, the proof of \Lem{lem:Support} did not rely on Duplicator selecting a bijection at each round. Thus, as $u \mapsto v$ has been pebbled, we may proceed identically as in \Lem{lem:Support}. The result now follows.
\end{proof}

Our next goal is to show that count-free WL can detect the gadget structure of the underlying CFI graphs.

\begin{lemma} \label{lem:CountFreeGadget}
Let $u \in V(\Gamma_{i})$ ($i = 1, 2$), and let $g_{u} \in G_{i}$ be a single-support element that is supported by $u$.
\begin{enumerate}[label=(\alph*)]
\item Let $v \in V(G_{i})$ be on the same gadget as $u$, and let $g_{v} \in G_{i}$ be a single-support element that is supported by $v$. Suppose that $(g_{u}, g_{v}) \mapsto (h_{u}, h_{v})$ have been pebbled, and that $h_{u}h_{v}$ is not supported by exactly two vertices on the same gadget. Then Spoiler can win with $O(1)$ additional pebbles and $O(1)$ rounds.

\item Suppose now that $u$ is an internal vertex. Suppose that $g_{u} \mapsto h_{u}$ has been pebbled, and that $h_{u}$ is a single-support element supported by some $x$ that is an external vertex. Then Spoiler can win with $O(1)$ additional pebbles and $O(1)$ rounds.
\end{enumerate}
\end{lemma}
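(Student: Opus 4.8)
The plan is to adapt the corresponding counting-WL arguments from \Lem{lem:GadgetStructure} to the count-free setting, exploiting the fact that once Spoiler has pebbled a pair, the count-free game no longer requires Duplicator to commit to a global bijection, so we only need to track the pebbled correspondence locally. For both parts, the key observation is that we have already built up the machinery to detect support-size discrepancies (\Lem{lem:CountFreeSupport}) and to catch single-support elements being mapped to higher-support elements (\Lem{lem:CountFreeSingleSupport}). These two lemmas let us ``clean up'' whenever Duplicator's response deviates from preserving the support structure, each time costing only $O(1)$ additional pebbles and $O(\log\log n)$ additional rounds. However, since the hypotheses in both (a) and (b) of the present lemma already fix a pebbled correspondence between single-support elements and elements on (or off) the expected gadget, the core of each argument will in fact close in $O(1)$ rounds once the reductions are in place.

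For part (a), I would first check whether $|\supp(h_uh_v)| > 2$. If so, then since $h_uh_v = h_uh_v$ and $g_ug_v$ has support size exactly $2$, Spoiler reuses pebbles to pebble $g_u \mapsto h_u$ and $g_v \mapsto h_v$ (already in place) and observes that at least one of $h_u, h_v$ must have support size $> 1$ or their product fails to match; in the former case \Lem{lem:CountFreeSingleSupport} finishes. So we reduce to $|\supp(h_uh_v)| = 2$ but with the two supporting vertices lying on \emph{different} gadgets. Here I would invoke the same cycle-tracing argument used in the second paragraph of the proof of \Lem{lem:GadgetStructure}(a): by the CFI construction, every $6$-cycle and $8$-cycle lives within a single gadget, and every pair of same-gadget vertices lies on a common short cycle. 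Spoiler traces along the cycle through $g_u, g_v$ using a spare pebble; within at most $4$ rounds the traced pebble reaches a neighbor of $g_v$ whose image is not a neighbor of the corresponding vertex in $H$, since the two images are not co-gadget. The only adjustment from the counting proof is that Duplicator's freedom to re-choose a bijection is already handled by the pebbled correspondence, so no new difficulty arises.

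For part (b), the setup is that an internal vertex $u$ is mapped to an external vertex $x$. I would proceed exactly as in \Lem{lem:GadgetStructure}(c), translated to the count-free game: pick an external neighbor $y$ of $x$ in $\Gamma_{i}$, let $h_y$ be a single-support element supported by $y$, and have Spoiler pebble the preimage $g$ with $g \mapsto h_y$. Using \Lem{lem:CountFreeSingleSupport} we may assume $g$ has single support, and using part (a) together with \Lem{lem:CountFreeSupport} we may assume $g_u, g$ lie on distinct gadgets. Since $u$ is internal, its supporting vertex is not adjacent to the vertex supporting $g$ across gadget boundaries, so $g_u$ and $g$ do \emph{not} commute in $G_i$; but $h_u = x$ and $h_y$ are external neighbors and \emph{do} commute. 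Spoiler then pebbles $g_ug$ and wins by the commutator discrepancy.

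The main obstacle I anticipate is making precise that the absence of a Duplicator-chosen bijection in the count-free game does not weaken these arguments. In the counting game, Duplicator commits to a fresh bijection $f'$ each round and Spoiler exploits a single bad pair; in the count-free game Spoiler names an element and Duplicator responds pointwise, so I must ensure that each step above only uses the \emph{already-pebbled} correspondence plus one new pebbling, rather than any global property of a bijection. Concretely, the subtle point is the cycle-tracing in part (a): I must verify that at each of the $O(1)$ trace steps, the winning condition (a marked-equivalence failure via a commutation relation) is detectable from the pebbled tuple alone, which it is, since commutation of two pebbled elements is a ternary relation checkable in the Version I marked equivalence. Once this bookkeeping is in place, the count-free arguments mirror their counting counterparts verbatim.
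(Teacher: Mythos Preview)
Your proposal is correct and follows essentially the same route as the paper: for (a), reduce to both $h_u,h_v$ single-support via \Lem{lem:CountFreeSingleSupport} and then trace a short cycle through $u,v$; for (b), pebble a single-support element on an external neighbor $y$ of $x$ and win either by a commutation discrepancy or by reducing to (a). Your case split in (b) on whether $g_u,g$ lie on the same gadget is just the contrapositive of the paper's split on whether $uv\in E(\Gamma_i)$, since an internal vertex has neighbors only within its own gadget.

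One remark worth recording: you are right to flag that the appeal to \Lem{lem:CountFreeSingleSupport} costs $O(\log\log n)$ rounds rather than $O(1)$. The paper's own proof invokes the same lemma while asserting an $O(1)$ round bound, so the stated bound in \Lem{lem:CountFreeGadget} is in tension with its proof; your more conservative accounting is the honest one, and it is harmless for \Thm{thm:SecCountFreeCFI}, whose overall bound is $O(\log\log n)$ anyway.
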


\begin{proof}
We proceed as follows.
\begin{enumerate}[label=(\alph*)]
\item We note that if $|\supp(h_{u}h_{v})| \neq 2$, then either $h_{u}$ or $h_{v}$ are not single-support elements. In this case, Spoiler can win with $O(1)$ additional pebbles and $O(1)$ additional rounds by \Lem{lem:CountFreeSingleSupport}. So suppose $|\supp(h_{u}h_{v})| = 2$. Let $\supp(h_{u}) = \{x\}$ and $\supp(h_{v}) = \{y\}$. Suppose that $x, y$ belong to different gadgets. Now the CFI graphs have the property that two vertices belong to a common gadget if and only if they are on common $6$-cycle or common $8$-cycle \cite{CFI}. Using two additional pebbles and $O(1)$ additional rounds, Spoiler wins by tracing around the cycle containing $u, v$. 

\item As $x$ is an external vertex, $x$ is adjacent to some other external vertex $y$. Let $h_{y}$ be a single-support element that is supported by $y$. Spoiler pebbles $h_{y}$, and Duplicator responds by pebbling some single-support element $g_{v}$ that is supported by the vertex $v$. If $uv \not \in E(\Gamma_{i})$, Spoiler immediately wins, as $h_{u}h_{y}$ commute, and $g_{u}g_{v}$ do not. So suppose $uv \in E(\Gamma_{i})$. But as $u$ is internal, $u, v$ belong to the same gadget, while $x, y$ do not. Spoiler now wins by part (a). 
\end{enumerate}
\end{proof}

\noindent We now establish the relationship between the group elements arising from the construction in \Def{def:GadgetProducts} and the induced subgraphs from $\mathcal{V}$. 

\begin{lemma} \label{lem:GadgetProducts}
Let $v \in G_{i}$ ($i = 1, 2$) such that $v$ satisfies the construction in \Def{def:GadgetProducts}. We have the following.
\begin{enumerate}[label=(\alph*)]
\item Let $v' \in G_{i'}$ ($i' = 1, 2$) such that $v'$ is not constructed according to \Def{def:GadgetProducts}. Then the count-free $(O(1), O(\log \log n))$-WL Version I will distinguish $v$ from $v'$.

\item Let $v' \in G_{i'}$ ($i' = 1, 2$) such that $v'$ is constructed according to \Def{def:GadgetProducts}. Let $u \in \supp(v)$, and let $g_{u} \in G_{i}$ be a single-support element that is supported by $u$. Let $h \in G_{i'}$. If $h$ is not a single-support element satisfying $\supp(h) \subseteq \supp(v')$, then the count-free $(O(1), O(\log \log n))$-WL Version I will distinguish $(v, g_{u})$ from $(v', h)$.

\item Let $v', g_{u}, h$ be as defined in part (b). Let $\supp(h) = \{ u'\}$, and relabel $h_{u'} := h$. Suppose that $|N(u) \cap \supp(v)| \neq |N(u') \cap \supp(v')|$. That is, suppose that the degree of $u$ in $\Gamma_{i}[\supp(v)]$ is different than the degree of $u'$ in $\Gamma_{i'}[\supp(v')]$. Then the count-free $(O(1), O(\log \log n))$-WL Version I will distinguish $(v, g_{u})$ from $(v', h_{u'})$.
\end{enumerate}
\end{lemma}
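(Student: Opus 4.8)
The overall strategy is to convert each count-free distinguishing statement into a statement about the count-free pebble game (using the equivalence recalled in Section~\ref{sec:WLPrelims}), and then reuse the structural lemmas already proved in this section. The key conceptual point is that all three parts are really assertions that Spoiler has a short winning strategy in the count-free pebble game once the relevant initial pebbles $(v, g_u) \mapsto (v', h)$ (or $v \mapsto v'$) have been placed, so in each case I would describe Spoiler's play and bound the number of pebbles and rounds.

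For part (a), I would argue that if $v'$ is \emph{not} constructed according to \Def{def:GadgetProducts}, then either $|\supp(v')| \neq |\supp(v)|$, or the support of $v'$ fails to consist of exactly one internal vertex together with all its adjacent external vertices per gadget. In the first case, Spoiler wins by \Lem{lem:CountFreeSupport} (which handles differing support sizes with $O(1)$ pebbles and $O(\log\log n)$ rounds after pebbling $v \mapsto v'$). In the second case, the support of $v'$ either contains a vertex whose single-support preimage maps incorrectly across gadgets, or contains an internal-versus-external mismatch; here I would invoke \Lem{lem:CountFreeGadget}(a) and (b) to produce a win in $O(1)$ additional rounds. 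The point is that \Def{def:GadgetProducts} is characterized entirely by support size plus the internal/external gadget structure, both of which the preceding lemmas already let Spoiler detect.

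For parts (b) and (c), I would first pebble $(v, g_u) \mapsto (v', h)$. In part (b), if $h$ is not single-support, Spoiler wins by \Lem{lem:CountFreeSingleSupport} (possibly after first using \Lem{lem:CountFreeSupport} to handle the support-size count); and if $h$ is single-support but $\supp(h) \not\subseteq \supp(v')$, then the vertex $u'$ supporting $h$ lies outside $\supp(v')$, which Spoiler can expose by pebbling a single-support generator of $v'$ and testing commutation, again in $O(1)$ rounds via \Lem{lem:CountFreeGadget}. For part (c), having reduced to the case where $h_{u'}$ is single-support with $u' \in \supp(v')$, the degree mismatch $|N(u)\cap\supp(v)| \neq |N(u')\cap\supp(v')|$ means the number of support-vertices of $v$ commuting with $g_u$ differs from the number commuting with $h_{u'}$. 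Since Spoiler can already pebble single-support elements supported on $\supp(v)$ (by \Lem{lem:CountFreeGadget}) and test commutation, Spoiler can pin down a neighbor of $u$ in $\Gamma_i[\supp(v)]$ whose image is a non-neighbor of $u'$ (or vice versa), winning in $O(1)$ further rounds.

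The main obstacle I anticipate is part (c): unlike the counting game, the count-free game gives Duplicator the freedom to respond to each individual pebble placement without committing to a bijection, so a raw \emph{count} of commuting neighbors is not directly actionable—Spoiler cannot simply ``compare two cardinalities.'' The careful step is to argue that the degree discrepancy forces the existence of a specific witnessing vertex (a neighbor on one side whose every candidate image on the other side is a non-neighbor), and then to verify that detecting the commutation/non-commutation relation between two pebbled single-support elements is something Spoiler can force in $O(1)$ rounds via Mekler's construction (two generators commute iff the vertices are adjacent). I would lean on \Lem{lem:CountFreeGadget} to guarantee that Spoiler retains control of which single-support elements are genuinely single-support and correctly located on $\supp(v')$, so that the commutation test is meaningful; the round bound stays $O(\log\log n)$ because the only super-constant cost is the initial support-size reduction inherited from \Lem{lem:CountFreeSupport}.
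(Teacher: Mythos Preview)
Your plan for parts (a) and (b) is essentially the paper's argument: reduce to a support-size mismatch via \Lem{lem:CountFreeSupport}, and otherwise use \Lem{lem:CountFreeGadget} and \Lem{lem:CountFreeSingleSupport} to expose a gadget or internal/external discrepancy. The paper is terser in part (b) (it simply cites \Lem{lem:CountFreeSupport} for the case $\supp(h)\not\subseteq\supp(v')$ as well), but your commutation-test variant is fine.

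Part (c) is where your plan has a genuine gap. You correctly flag the obstacle: in the count-free game Duplicator does not commit to a bijection, so a raw degree comparison is not available to Spoiler. But your proposed fix---``the degree discrepancy forces the existence of a specific witnessing vertex (a neighbor on one side whose every candidate image on the other side is a non-neighbor)''---does not hold. If, say, $u$ has three neighbours in $\Gamma_i[\supp(v)]$ and $u'$ has two in $\Gamma_{i'}[\supp(v')]$, then \emph{every individual} neighbour of $u$ can be answered by some neighbour of $u'$; Duplicator can even reuse the same neighbour of $u'$ for different Spoiler moves. No single pebble placement witnesses the discrepancy.

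The paper sidesteps this entirely by observing that the CFI graphs $\Gamma_i,\Gamma_{i'}$ have maximum degree at most $4$. Hence Spoiler can pebble \emph{all} neighbours of $u$ in $\Gamma_i[\supp(v)]$ simultaneously using $O(1)$ pebbles. Duplicator must respond with the same number of elements; forcing these (via \Lem{lem:CountFreeSingleSupport} and part (b)) to be single-support on $\supp(v')$ and to commute with $h_{u'}$ leaves Duplicator with too few distinct neighbours of $u'$ to use, and the resulting coincidence or non-adjacency is detected by a single commutation check (adjacency in $\Gamma$ is exactly commutation in $G_\Gamma$). The bounded-degree observation is the missing ingredient that converts a cardinality statement into an $O(1)$-pebble move; you should add it.
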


\begin{proof}
We proceed as follows.
\begin{enumerate}[label=(\alph*)]
\item By \Lem{lem:CountFreeSupport}, if $v' \in G_{i}$ ($i = 1, 2$) satisfies $|\supp(v')| \neq |\supp(v)|$, then the $(O(1), O(\log \log n))$-WL Version I algorithm will distinguish $v$ and $v'$.

Now suppose $\supp(v')$ contains two vertices $a, b$ on the same gadget. We claim that the count-free $(O(1), O(\log \log n))$-WL Version I will distinguish $v$ from $v'$. Consider the pebble game, starting from the configuration $v \mapsto v'$. Spoiler pebbles group elements $h_{a}, h_{b}$ supported by $a, b$ respectively. Duplicator responds by pebbling $g_{x}, g_{y}$. By \Lem{lem:CountFreeSupport}, we may assume that $g_{x}, g_{y}$ are single-support elements; otherwise, Spoiler wins with $O(1)$ pebbles and $O(1)$ rounds. Let $x, y$ be the vertices of $\Gamma_{i}$ supporting $g_{x}, g_{y}$ respectively. We may assume that $x, y \in \supp(v)$. Otherwise, by \Lem{lem:CountFreeSupport}, Spoiler wins with $O(1)$ pebbles and $O(\log \log n)$ rounds. By construction of $v$, $x, y$ lie on different gadgets. Thus, by \Lem{lem:CountFreeGadget}, Spoiler can win with $O(1)$ additional pebbles and $O(1)$ additional rounds. It now follows that any group element $v'$ whose support does not consist of a single arbitrary internal vertex from each gadget and all  external vertices adjacent to the internal vertices selected, that the count-free $(O(1), O(\log \log n))$-WL Version I will distinguish $v$ and $v'$.

\item If $h$ is not single support or $\supp(h) \not \subseteq \supp(v')$, then by \Lem{lem:CountFreeSupport}, the count-free $(O(1), O(\log \log n))$-WL Version I algorithm will distinguish $(v, g_{u})$ from $(v', h)$.

\item By \Lem{lem:CountFreeSupport}, if $\supp(h_{u}) \not \subseteq \supp(v')$, then the count-free $(O(1), O(\log \log n))$-WL Version I will distinguish $(v, g_{u})$ and $(v', h_{u'})$. Now by the CFI construction \cite{CFI}, both $\Gamma_{i}, \Gamma_{i'}$ have maximum degree at most $4$. So with $O(1)$ pebbles, Spoiler can pebble the neighbors of $u$ in $\Gamma_{i}$. By construction, the adjacency relation in $\Gamma_{i}, \Gamma_{i'}$ determines the commutation relation in $G_{i}, G_{i'}$. Thus, only $O(1)$ additional pebbles and $O(1)$ additional rounds are needed to determine commutation. Thus, the count-free $(O(1), O(\log \log n))$-WL Version I will distinguish $(v, g_{u})$ and $(v', h_{u'})$.
\end{enumerate}
\end{proof}

\noindent We now prove \Thm{thm:SecCountFreeCFI}.

\begin{proof}[Proof of \Thm{thm:SecCountFreeCFI}]
We proceed similarly as in the proof of \Thm{thm:CFIWL}. 

\begin{enumerate}[label=(\alph*)]
\item Let $v \in G_{1}$ be defined as in \Def{def:GadgetProducts}. Now suppose that $G_{1} \not \cong G_{2}$. Now take an arbitrary $v' \in G_{2}$ such that $v'$ is defined according to \Def{def:GadgetProducts}. Again take $\mathcal{V} = \supp(v)$ and $\mathcal{V}' = \supp(v')$. Brachter \& Schweitzer \cite[Proof of Theorem~6.1]{WLGroups} argued that the induced subgraphs $\Gamma_{1}[\mathcal{V}]$ and $\Gamma_{2}[\mathcal{V}']$ have different degree sequences. 

Now suppose that $u \in \mathcal{V}$ and $u' \in \mathcal{V}'$ have different degrees. Let $g_{u} \in G_{1}$ be a single-support element supported by $u$, and let $h_{u'} \in G_{2}$ be a single-support element supported by $u'$. As $u, u'$ have different degrees, we have by \Lem{lem:GadgetProducts}(c) that the count-free $(O(1), O(\log \log n))$-WL Version I will distinguish $(v, g_{u})$ from $(v', h_{u'})$. In particular, it follows that the multiset of colors produced by the count-free $(O(1), O(\log \log n))$-WL Version I algorithm will be different for $G_{1}$ than for $G_{2}$.

By Grohe \& Verbitsky, we have that the count-free $(O(1), O(\log \log n))$-WL Version I can be implemented using an $\textsf{FOLL}$ circuit. We will now use a $\beta_{1}\textsf{MAC}^{0}$ circuit to distinguish $G_{1}$ from $G_{2}$. Using $O(\log n)$ non-deterministic bits, we guess a color class $C$ where the multiplicity differs. At each iteration, the parallel WL implementation due to Grohe \& Verbitsky records indicators as to whether two $k$-tuples receive the same color. As we have already run the count-free WL algorithm, we may in $\textsf{AC}^{0}$ decide whether two $k$-tuples have the same color. For each $k$-tuple in $G_{1}^{k}$ having color $C$, we feed a $1$ to the $\textsf{Majority}$ gate. For each $k$-tuple in $G_{2}^{k}$ having color $C$, we feed a $0$ to the $\textsf{Majority}$ gate. The result now follows.

\item Suppose furthermore that the base graph $\Gamma_{0}$ is identified by the count-free $(O(1), r)$-WL algorithm for graphs. Brachter \& Schweitzer \cite{WLGroups} previously established that single-support elements of $G_{1}, G_{2}$ have centralizers of size $p^{4} \cdot |Z(G_{1})|$, and all other group elements have centralizers of size at most $p^{3} \cdot |Z(G_{1})|$. By \Lem{lem:CountFreeSingleSupport}, the count-free $(O(1), O(1))$-WL Version I will distinguish in $G_{i}$ ($i = 1, 2$) single-support group elements from those group elements $g$ with $|\supp(g)| > 1$. Now let $H$ be an arbitrary group, and suppose the multiset of colors produced by the count-free $(O(1), O(\log \log n))$-WL Version I is the same for $G_{i}$ ($i = 1, 2$) as for $H$. Then $G_{i}$ ($i = 1, 2$) and $H$ has the same number of elements of order $p^{4} \cdot |Z(G_{1})|$. Furthermore, as the multiset of colors arising from the count-free $(O(1), O(\log \log n))$-WL Version I fails to distinguish $G_{i}$ ($i = 1, 2$) and $H$, the induced commutation graph on these elements in $H/Z(H)$ is indistinguishable from $\Gamma_{i}$. Furthermore, by \Lem{lem:CountFreeGadget}(b), the count-free $(O(1), O(1))$-WL Version I will distinguish internal and external vertices. So given $G_{i}$ ($i = 1, 2$), we can reconstruct the base graph $\Gamma_{0}$. Furthermore, we can reconstruct the base graph $\Gamma$ underlying $H$. Precisely, the vertices of $\Gamma_{0}$ correspond to gadgets of $\Gamma_{i}$. Now the count-free WL Version I for groups can simulate the count-free WL for graphs in the following manner. When Spoiler or Duplicator pebble a single-support element of $G_{i}$, that induces placing a pebble on the corresponding vertex $v$ of $\Gamma_{i}$. In turn, this induces placing a pebble on the vertex corresponding to the gadget of $\Gamma_{0}$ containing $v$. So we may simulate the $(3,r)$-pebble strategy to identify $\Gamma_{0}$ in the graph pebble game, by pebbling the appropriate elements of $G_{i}$ ($i = 1, 2$). But since $\Gamma_{0}$ is identified by the graph $(3,r)$-WL, we have that $\Gamma_{0} \cong \Gamma$. So $H$ is isomorphic to either $G_{1}$ or $G_{2}$. The result now follows.
\end{enumerate}
\end{proof}

%%%%%Coprime Extensions%%%%%%
\section{Coprime Extensions}

In this section, we consider groups that admit a Schur--Zassenhaus decomposition of the form $G = H \ltimes N$, where $N$ is Abelian, $H$ is $O(1)$-generated and solvable with solvability class $\poly \log \log n$, and $|H|$ and $|N|$ are coprime. Precisely, we establish the following.

\begin{theorem} \label{thm:QST}
Let $G = H \ltimes N$, where $H$ is $O(1)$-generated, $N$ is Abelian, and $\text{gcd}(|H|, |N|) = 1$. Let $K$ be arbitrary. 
\begin{enumerate}[label=(\alph*)]
\item Fix $c \geq 0$. If $H$ is solvable with solvability class $O((\log \log n)^{c})$, then can decide whether $G$ and $K$ are isomorphic in $\beta_{1}\textsf{MAC}^{0}(\textsf{FO}((\log \log n)^{c+1}))$. 

\item If $H$ is a finite simple group, we can decide isomorphism in $\beta_{1}\textsf{MAC}^{0}(\textsf{FOLL})$.
\end{enumerate}
\end{theorem}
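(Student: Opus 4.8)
The plan is to follow the Abelian template of Grochow \& Levet, dividing the work between a count-free Weisfeiler--Leman phase (handling the complement $H$ and the combinatorial skeleton of the action) and a single $\textsf{Majority}$ gate (supplying the counting that count-free WL provably lacks, since by \Thm{thm:LowerBound} it cannot distinguish even Abelian groups). Write $m := |N|$. First I would identify $N$ canonically inside $G$: because $\gcd(|H|, |N|) = 1$, an element $g$ lies in $N$ if and only if $|g|$ divides $m$, since otherwise its image in $G/N$ would have order dividing both $|g|$ and $|H|$. As count-free WL Version I distinguishes elements of distinct orders after $O(\log\log n)$ rounds (Grochow \& Levet), it isolates the color classes comprising $N$; the same holds in $K$, and any mismatch in order statistics between $G$ and $K$ is caught by the final counting phase.

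Next I would canonize the complement. For part (a), $H$ is $O(1)$-generated and solvable of class $O((\log\log n)^c)$, so by \Thm{thm:BoundedGenSolvable} and the canonization procedure of \Rmk{rmk:SolvableCanonization}, individualizing a canonical generating tuple and running the count-free $(O(1), O((\log\log n)^{c+1}))$-WL Version I assigns a unique color to every element of $H$; this is precisely the $\textsf{FO}((\log\log n)^{c+1})$ inner phase. For part (b), $H$ is simple, and I would instead invoke \Cor{cor:FOLLSimple} with the simple-group canonization built from the short-word generating sequence of \Thm{thm:BabaiKantorLubotzky}, coloring all of $H$ uniquely after only $O(\log\log n)$ rounds, i.e. in $\textsf{FOLL}$. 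In either case, once $H$ is individualized the ternary multiplication relation lets count-free WL read off the conjugation action $\theta_h : n \mapsto h n h^{-1}$ of each $h \in H$ on each $n \in N$, so the stable coloring records the $H$-module structure of $N$ up to the multiplicities of its isotypic pieces.

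I would then reduce isomorphism to equivalence of these module structures. Because the extension is coprime with $N$ Abelian, Schur--Zassenhaus gives a complement unique up to conjugacy and no cohomological obstruction, so $G \cong K$ iff $K$ is itself such an extension $H' \ltimes N'$ with $H \cong H'$, $N \cong N'$, and actions equivalent under $\Aut(H) \times \Aut(N)$; Grochow \& Levet already established that the (counting, Version~II) WL coloring is a complete invariant for this class, so the real task is to transfer completeness to count-free Version~I within the round budget. By coprimality $N$ decomposes as an $H$-module into isotypic components, and the complete invariant is the multiset of component types with their multiplicities. The count-free coloring, having canonized $H$ and read off the action, separates elements of $N$ by component type into distinct color classes; what it cannot do is compare the \emph{multiplicities} of those classes, precisely because $N$ is Abelian. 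This is supplied as in the Abelian case: using $O(\log n)$ nondeterministic bits I would guess a representative tuple for a color class $C$ whose multiplicity differs between $G$ and $K$, feed a $1$ for each $C$-colored tuple of $G$ and a $0$ for each such tuple of $K$ to one $\textsf{Majority}$ gate, and output its value. This final stage is $\beta_1\textsf{MAC}^0$, so composing yields $\beta_1\textsf{MAC}^0(\textsf{FO}((\log\log n)^{c+1}))$ for part (a) and $\beta_1\textsf{MAC}^0(\textsf{FOLL})$ for part (b).

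The main obstacle I anticipate is the completeness transfer in the third step: showing that the count-free coloring, after canonizing $H$ and identifying $N$, is fine enough that matching multisets of colors forces the $H$-module structures, and hence $G$ and $K$, to be isomorphic. This requires (i) checking that individualizing the canonical generators of $H$ pins down the action uniformly across both groups despite the Schur--Zassenhaus conjugacy freedom and the $\Aut(H)$-ambiguity in the generating tuple, which the canonical-labeling step of \Rmk{rmk:SolvableCanonization} should absorb; and (ii) confirming that the per-component data is genuinely captured by color classes, so that a single $\textsf{Majority}$ gate certifies module isomorphism. A secondary check is robustness against arbitrary $K$: if $K$ fails to be a coprime extension of the right shape, its order statistics or its coloring must already diverge from $G$, again detected in the final counting phase.
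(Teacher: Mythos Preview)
Your outline follows the paper's template closely, and the overall architecture---isolate $N$ by orders, control $H$ via \Thm{thm:BoundedGenSolvable} or \Thm{thm:BabaiKantorLubotzky}, reduce to comparing $H$-module data on $N$, then finish with $O(\log n)$ guessed bits and one \textsf{Majority} gate---is exactly what the paper does in Lemma~\ref{CheckCoprime}, Theorem~\ref{thm:QSTSec5}, and the subsequent corollary. Two points of divergence are worth flagging.

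First, you invoke the canonization of $H$ from \Rmk{rmk:SolvableCanonization}, whereas the paper never canonizes $H$: it simply observes that for \emph{every} generating tuple $(h_1,\dots,h_d)$ in $G$ and \emph{every} $(h_1',\dots,h_d')$ in $K$, Taunt's lemma (\Lem{LemTaunt}) guarantees some indecomposable summand whose multiplicity differs, so some color class of the form $\chi(h_1,\dots,h_d,n)$ has different multiplicity. This sidesteps the $\Aut(H)$-ambiguity you worry about in obstacle~(i) without needing a canonical choice.

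Second, and this is the genuine gap: your step ``the count-free coloring separates elements of $N$ by component type'' is precisely where the work lies, and the missing ingredient is \Lem{LemThevenaz} (Thev\'enaz): in the coprime setting every indecomposable $\Z H$-module is \emph{cyclic}, i.e.\ generated as an $H$-module by a single element $n$. This is why the color of a tuple $(h_1,\dots,h_d,n)$---with just one coordinate from $N$---determines the isomorphism type of the indecomposable $\langle n\rangle_H$, and hence why $O(1)$ pebbles suffice. Your language of ``isotypic components'' suggests a semisimple picture, but $N$ need not be elementary Abelian, and without cyclicity there is no a priori reason a constant-dimensional coloring captures each indecomposable summand. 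You correctly identify this as ``the main obstacle''; the resolution is exactly \Lem{LemThevenaz} combined with Remak--Krull--Schmidt.
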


\noindent 
\begin{remark}
Prior to this paper, isomorphism testing of the groups considered in \Thm{thm:QST} were first shown to be in $\textsf{P}$ by Qiao, Sarma, \& Tang \cite{QST11}. Grochow \& Levet \cite{GLWL1} subsequently improved this bound to $\textsf{L}$ using Weisfeiler--Leman. We note that Qiao, Sarma, \& Tang and Grochow \& Levet considered a more general family, where where $H$ was only assumed to be $O(1)$-generated.
\end{remark}

%Qiao, Sarma, and Tang \cite{QST11} previously exhibited a polynomial-time isomorphism test for this family of groups, as well as the family where $H$ and $N$ are arbitrary Abelian groups. This was extended by Babai \& Qiao \cite{BQ} to groups where $|H|$ and $|N|$ are coprime, $N$ is Abelian, and $H$ is an arbitrary group given by generators in any class of groups for which isomorphism can be solved efficiently. Among the class of such groups where $H$ is $O(1)$-generated, Grochow \& Levet \cite{GLWL1} improved the parallel complexity to $\cc{L}$ via WL Version II. 

\subsection{Additional preliminaries for groups with Abelian normal Hall subgroup}
Here we recall additional preliminaries needed for our algorithm in the next section. None of the results in this section are new, though in some cases we have rephrased the known results in a form more useful for our analysis. We recall the preliminaries here and refer the reader to \cite[Section~3.1]{GLWL1}  for the proofs.

A \emph{Hall} subgroup of a group $G$ is a subgroup $N$ such that $|N|$ is coprime to $|G/N|$. When a Hall subgroup is normal, we refer to the group as a coprime extension. Coprime extensions are determined entirely by their actions: 

\begin{lemma}[{Taunt \cite{Taunt1955}}]  \label{LemmaSemidirect}
Let $G = H \ltimes_{\theta} N$ and $\hat{G} = \hat{H} \ltimes_{\hat{\theta}} \hat{N}$. If $\alpha : H \to \hat{H}$ and $\beta : N \to \hat{N}$ are isomorphisms such that for all $h \in H$ and all $n \in N$,
\[
\hat{\theta}_{\alpha(h)}(n) = (\beta \circ \theta_{h} \circ \beta^{-1})(n),
\]
then the map $(h, n) \mapsto (\alpha(h), \beta(n))$ is an isomorphism of $G \cong \hat{G}$. Conversely, if $G$ and $\hat{G}$ are isomorphic and $|H|$ and $|N|$ are coprime, then there exists an isomorphism of this form.
\end{lemma}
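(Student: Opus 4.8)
The plan is to prove the two directions separately: the forward implication by a direct homomorphism check, and the converse by exploiting that a normal Hall subgroup is characteristic together with the conjugacy half of Schur--Zassenhaus. For the forward direction I would set $\phi(h,n) := (\alpha(h),\beta(n))$, which is visibly a bijection since $\alpha$ and $\beta$ are, so the only thing to verify is that $\phi$ is a homomorphism. Expanding $\phi\bigl((h_1,n_1)(h_2,n_2)\bigr)$ and $\phi(h_1,n_1)\phi(h_2,n_2)$ using the semidirect-product multiplication $(h_1,n_1)(h_2,n_2)=(h_1h_2,\theta_{h_2^{-1}}(n_1)n_2)$ and the fact that $\alpha,\beta$ are homomorphisms, both sides agree in the first coordinate, and the second coordinates agree precisely when $\beta(\theta_{h_2^{-1}}(n_1))=\hat\theta_{\alpha(h_2)^{-1}}(\beta(n_1))$. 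This is exactly the hypothesis $\hat\theta_{\alpha(h)}=\beta\circ\theta_h\circ\beta^{-1}$ evaluated at $h=h_2^{-1}$, after using $\alpha(h_2^{-1})=\alpha(h_2)^{-1}$ and applying both sides to $\beta(n_1)$. So this direction is a one-line verification once the bookkeeping is in place.

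For the converse, the first step is to observe that coprimality makes $N$ the \emph{unique} subgroup of $G$ of order $|N|$: any subgroup $N'$ with $|N'|=|N|$ satisfies that $|N'N/N|$ divides $\gcd(|N|,[G:N])=1$, so $N'\le N$ and hence $N'=N$. Thus $N$ is characteristic in $G$, and likewise $\hat N$ is characteristic in $\hat G$. Given an abstract isomorphism $\psi\colon G\to\hat G$, the image $\psi(N)$ is then a normal Hall subgroup of $\hat G$ of order $|N|=|\hat N|$, so by uniqueness of the normal Hall subgroup of that order we get $\psi(N)=\hat N$, and $\psi$ restricts to an isomorphism $N\to\hat N$. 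The remaining issue is that $\psi$ need not send the complement $H$ to $\hat H$: it sends $H$ to \emph{some} complement $\psi(H)$ of $\hat N$ in $\hat G$. Here I would invoke Schur--Zassenhaus: all complements of $\hat N$ are conjugate, so there is $\hat n\in\hat N$ with $\hat n\,\psi(H)\,\hat n^{-1}=\hat H$. Replacing $\psi$ by its composite with conjugation by $\hat n$ (an inner automorphism of $\hat G$, which fixes the normal subgroup $\hat N$ setwise) yields an isomorphism $\psi'$ with $\psi'(H)=\hat H$ and $\psi'(N)=\hat N$. Setting $\alpha:=\psi'|_H$ and $\beta:=\psi'|_N$, the homomorphism property $\psi'\bigl((h,1)(1,n)\bigr)=\psi'(h,1)\,\psi'(1,n)$ forces $\psi'(h,n)=(\alpha(h),\beta(n))$, and reversing the computation from the forward direction shows that $\alpha,\beta$ satisfy the required compatibility identity.

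I expect the main obstacle to be the complement-alignment step: the naive restriction of $\psi$ to $H$ lands in an arbitrary complement rather than in $\hat H$, and it is precisely the coprimality hypothesis---through the conjugacy half of Schur--Zassenhaus---that lets me correct this by an inner automorphism of $\hat G$ without disturbing $\hat N$. A secondary point requiring care is justifying $\psi(N)=\hat N$ rather than merely that $\psi(N)$ is some normal Hall subgroup; this rests on $N$ and $\hat N$ being Hall subgroups for the same set of primes (forced once $G\cong\hat G$ with matching decomposition orders), after which uniqueness of the normal Hall $\pi$-subgroup closes the gap.
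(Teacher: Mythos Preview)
Your proof is correct and follows the standard argument for Taunt's lemma. The paper does not actually give its own proof of this statement: it is recalled as a classical result attributed to Taunt, with the reader referred to \cite[Section~3.1]{GLWL1} for details. Your forward direction is the routine verification, and your converse---using uniqueness of the normal Hall subgroup to get $\psi(N)=\hat N$, then the conjugacy half of Schur--Zassenhaus to align complements---is exactly the expected route; your caveat about needing $|N|=|\hat N|$ (equivalently, that $N$ and $\hat N$ are Hall subgroups for the same prime set) is a genuine hypothesis implicit in how the lemma is applied in the paper.
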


\begin{remark}
\Lem{LemmaSemidirect} can be significantly generalized to arbitrary extensions where the subgroup is characteristic. The setting where the characteristic subgroup is Abelian has been quite useful for practical algorithms in isomorphism testing (see, e.g., \cite{Holt2005HandbookOC}) and is standard in group theory. More generally, the equivalence of group extensions deals with both \algprobm{Action Compatibility} and \algprobm{Cohomology Class Isomorphism}. For generalizations of cohomology to the setting of non-Abelian group coefficients, see for instance \cite{dedecker, inassaridze}. Grochow \& Qiao re-derived parts of this work in the setting of $H^{2}$---the cohomology most immediately relevant to group extensions and the isomorphism problem--- generalizing Taunt's lemma. They also provided algorithmic applications \cite[Lemma~2.3]{GQcoho}. In the setting of coprime extensions, the Schur--Zassenhaus Theorem provides that the cohomology is trivial. Thus, in our setting we need only consider \algprobm{Action Compatibility}.
\end{remark}

A $\Z H$-module is an abelian group $N$ together with an action of $H$ on $N$, given by a group homomorphism $\theta\colon H \to \Aut(N)$. Sometimes we colloquially refer to these as ``$H$-modules.'' A \emph{submodule} of an $H$-module $N$ is a subgroup $N' \leq N$ such that the action of $H$ on $N'$ sends $N'$ into itself, and thus the restriction of the action of $H$ to $N'$ gives $N'$ the structure of an $H$-module compatible with that on $N$. Given a subset $S \subseteq N$, the smallest $H$-submodule containing $S$ is denoted $\langle S \rangle_H$, and is referred to as the $H$-submodule \emph{generated by} $S$. An $H$-module generated by a single element is called \emph{cyclic}. Note that a cyclic $H$-module $N$ need not be a cyclic Abelian group.

Two $H$-modules $N,N'$ are isomorphic (as $H$-modules), denoted $N \cong_H N'$, if there is a group isomorphism $\varphi\colon N \to N'$ that is $H$-equivariant, in the sense that $\varphi(\theta(h)(n)) = \theta'(h)(\varphi(n))$ for all $h \in H, n \in N$. An $H$-module $N$ is \emph{decomposable} if $N \cong_H N_1 \oplus N_2$ where $N_1, N_2$ are nonzero $H$-modules (and the direct sum can be thought of as a direct sum of Abelian groups); otherwise $N$ is \emph{indecomposable}. An equivalent characterization of $N$ being decomposable is that there are nonzero $H$-submodules $N_1, N_2$ such that $N = N_1 \oplus N_2$ as Abelian groups (that is, $N$ is generated as a group by $N_1$ and $N_2$, and $N_1 \cap N_2 = 0$). The Remak--Krull--Schmidt Theorem says that every $H$-module decomposes as a direct sum of indecomposable modules, and that the multiset of $H$-module isomorphism types of the indecomposable modules appearing is independent of the choice of decomposition, that is, it depends only on the $H$-module isomorphism type of $N$. We may thus write 
\[
N \cong_H N_1^{\oplus m_1} \oplus N_2^{\oplus m_2} \oplus \dotsb \oplus N_k^{\oplus m_k}
\]
unambiguously, where the $N_i$ are pairwise non-isomorphic indecomposable $H$-modules. When we refer to the multiplicity of an indecomposable $H$-module as a direct summand in $N$, we mean the corresponding $m_i$.\footnote{For readers familiar with (semisimple) representations over fields, we note that the multiplicity is often equivalently defined as $\dim_\F \text{Hom}_{\F H}(N_i, N)$. However, when we allow $N$ to be an Abelian group that is not elementary Abelian, we are working with $(\Z/p^k \Z)[H]$-modules, and the characterization in terms of hom sets is more complicated, because one indecomposable module can be a submodule of another, which does not happen with semisimple representations. }

The version of Taunt's Lemma that will be most directly useful for us is:
\begin{lemma} \label{LemTaunt}
Let $G_i = H \ltimes_{\theta_i} N$ for $i=1,2$ be two semi-direct products with $|H|$ coprime to $|N|$. Then $G_1 \cong G_2$ if and only if there is an automorphism $\alpha \in \Aut(H)$ such that each indecomposable $\Z H$-module appears as a direct summand in $(N, \theta_1)$ and in $(N, \theta_2 \circ \alpha)$ with the same multiplicity.
\end{lemma}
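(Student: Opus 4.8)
The plan is to specialize the general form of Taunt's Lemma (\Lem{LemmaSemidirect}) to the case $\hat{H} = H$ and $\hat{N} = N$, and then to recognize its compatibility condition as an $H$-module isomorphism statement, after which \Lem{LemTaunt} follows immediately from Remak--Krull--Schmidt. Applying \Lem{LemmaSemidirect} with $\theta = \theta_1$ and $\hat{\theta} = \theta_2$, and using that $|H|$ is coprime to $|N|$ for the converse direction, we obtain that $G_1 \cong G_2$ if and only if there exist $\alpha \in \Aut(H)$ and $\beta \in \Aut(N)$ such that $(\theta_2)_{\alpha(h)}(n) = (\beta \circ (\theta_1)_h \circ \beta^{-1})(n)$ for all $h \in H$ and $n \in N$. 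Note that here $\beta$ is genuinely an automorphism of $N$, since $\hat{N} = N$.

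The key step is to observe that, for a fixed $\alpha \in \Aut(H)$, the existence of such a $\beta$ is precisely the assertion that the two $\Z H$-modules $(N, \theta_1)$ and $(N, \theta_2 \circ \alpha)$ are isomorphic. Indeed, the homomorphism $\theta_2 \circ \alpha \colon H \to \Aut(N)$ sends $h \mapsto (\theta_2)_{\alpha(h)}$, so a group automorphism $\beta \colon N \to N$ is an $H$-module isomorphism $(N, \theta_1) \to (N, \theta_2 \circ \alpha)$ exactly when $\beta((\theta_1)_h(n)) = (\theta_2)_{\alpha(h)}(\beta(n))$ for all $h, n$, which rearranges to $(\theta_2)_{\alpha(h)} = \beta \circ (\theta_1)_h \circ \beta^{-1}$. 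This matches the condition from the previous paragraph verbatim, so $G_1 \cong G_2$ if and only if there is some $\alpha \in \Aut(H)$ with $(N, \theta_1) \cong_H (N, \theta_2 \circ \alpha)$. The main thing to be careful about here is the direction of composition, so that the action on the second module is twisted by $\alpha$ (rather than $\alpha^{-1}$) and the equivariance of $\beta$ is stated with the two actions lining up on the correct sides.

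Finally, I would invoke the Remak--Krull--Schmidt Theorem for $H$-modules, recalled in the preliminaries: two $\Z H$-modules are isomorphic if and only if, for every indecomposable $\Z H$-module, its multiplicity as a direct summand agrees in both. Applying this to $(N, \theta_1)$ and $(N, \theta_2 \circ \alpha)$ converts the module isomorphism into the stated equality of multiplicities, completing the equivalence. I do not expect any genuine obstacle beyond the bookkeeping in the middle step; the one point worth stating explicitly is that $\beta$ ranges over all of $\Aut(N)$ and that an $H$-module isomorphism between two module structures on the same underlying group $N$ is, by definition, exactly such a $\beta$, so no generality is lost in passing between the two formulations.
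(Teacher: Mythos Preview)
Your proposal is correct: you specialize \Lem{LemmaSemidirect} to $\hat H=H$, $\hat N=N$, identify the compatibility condition as $(N,\theta_1)\cong_H(N,\theta_2\circ\alpha)$, and then invoke Remak--Krull--Schmidt; the bookkeeping on the direction of $\alpha$ and the equivariance of $\beta$ is handled correctly. The paper itself does not give an in-text proof but simply cites \cite[Lemma~3.2]{GLWL1}, so your argument is exactly the expected derivation and there is nothing further to compare.
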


%The lemma and its proof are standard but we include it for completeness.

\begin{proof}
See \cite[Lemma~3.2]{GLWL1}.
\end{proof}

When $N$ is elementary Abelian the following lemma is not necessary. A $(\Z/p^k \Z)[H]$-module is an $\Z H$-module $N$ where the exponent of $N$ (the LCM of the orders of the elements of $N$) divides $p^k$. 

\begin{lemma}[{see, e.\,g., Thev\'{e}naz \cite{Thevenaz}}] \label{LemThevenaz}
Let $H$ be a finite group. If $p$ is coprime to $|H|$, then any indecomposable $(\Z/p^k \Z)[H]$-module is generated (as an $H$-module) by a single element.
\end{lemma}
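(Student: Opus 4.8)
The plan is to prove the lemma by analyzing the ring-theoretic structure of $R[H]$ where $R := \Z/p^k\Z$, showing that it decomposes into blocks each of which is a matrix ring over a finite \emph{chain ring}, and then reading off that indecomposable modules are cyclic. First I would record the basic features of $R$: it is an Artinian local ring with maximal ideal $pR$ and residue field $\F_p$, and $R[H]$ is free of rank $|H|$ as an $R$-module. Since $p \nmid |H|$, Maschke's theorem applies to the reduction $R[H]/pR[H] \cong \F_p[H]$, which is therefore semisimple; as $pR[H]$ is nilpotent (because $p^k = 0$ in $R$), this forces $\rad(R[H]) = pR[H]$, and hence $\rad(R[H])^i = p^iR[H]$ for all $i$.

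Next I would pass to blocks. Because $R$ is complete (Artinian) local, idempotents lift along $R[H] \twoheadrightarrow \F_p[H]$; lifting the central primitive idempotents of the semisimple algebra $\F_p[H]$ yields a decomposition $R[H] = \prod_j A_j$ with $A_j/\rad(A_j) \cong M_{n_j}(\F_{q_j})$ (the $\F_{q_j}$ are finite fields by Wedderburn). Since each such quotient is simple Artinian, all primitive idempotents inside a fixed block $A_j$ are isomorphic, so $A_j \cong M_{n_j}(\Lambda_j)$ for the corner ring $\Lambda_j := f A_j f$ attached to a primitive idempotent $f$.

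The crux of the argument is to show that each corner $\Lambda_j$ is a finite chain ring, i.e. that its ideals form a single chain $\Lambda_j \supsetneq p\Lambda_j \supsetneq \cdots \supsetneq p^k\Lambda_j = 0$. For this I would use $R$-freeness: multiplication by $p^i$ induces an $R[H]$-module isomorphism $R[H]/pR[H] \xrightarrow{\sim} p^iR[H]/p^{i+1}R[H]$ for $0 \le i < k$ (injectivity uses that the annihilator of $p^i$ in the free module $R[H]$ is $p^{k-i}R[H] \subseteq pR[H]$). Restricting to the corner via $f$ shows each radical layer $\rad(\Lambda_j)^i/\rad(\Lambda_j)^{i+1} \cong \F_{q_j}$ is a \emph{simple} $\Lambda_j$-module. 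A local ring whose every radical layer is simple is uniserial, so $\Lambda_j$ is a chain ring as claimed.

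Finally I would conclude via Morita equivalence. Over a finite chain ring the indecomposable modules are precisely the cyclic quotients $\Lambda_j/p^i\Lambda_j$. An indecomposable $R[H]$-module is supported on a single block $A_j = M_{n_j}(\Lambda_j)$, and under the Morita equivalence between $\Lambda_j$ and $M_{n_j}(\Lambda_j)$ the module $\Lambda_j/p^i\Lambda_j$ corresponds to $(\Lambda_j/p^i\Lambda_j)^{n_j}$, which is generated over $M_{n_j}(\Lambda_j)$ by the single column vector $(\bar 1, 0, \ldots, 0)^{\mathsf T}$. Hence every indecomposable $R[H]$-module is generated by one element as an $H$-module. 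The main obstacle is the chain-ring step: it is exactly where coprimality enters (through $\rad(R[H]) = pR[H]$) and where $R$-freeness must be exploited to collapse all radical layers to the top one; one must also be slightly careful that ``generated by a single element'' transports correctly across Morita equivalence (it is not a Morita invariant in general, but it does transfer in the small-ring-to-matrix-ring direction used here). I note that invoking Thév\'enaz directly, or the theory of maximal orders in $\mathbb{Q}_p[H]$, gives alternative routes, but the block/chain-ring argument is self-contained.
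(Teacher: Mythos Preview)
Your argument is correct. The identification $\rad(R[H]) = pR[H]$ via Maschke, the lifting of central idempotents through the nilpotent kernel (using that $Z(R[H])$ surjects onto $Z(\F_p[H])$ via class sums), the passage to matrix rings over the corners $\Lambda_j$, and the chain-ring verification through the $R$-freeness isomorphisms $\Lambda_j/p\Lambda_j \xrightarrow{\sim} p^i\Lambda_j/p^{i+1}\Lambda_j$ all go through as you describe. Your care about the Morita step is also warranted and correctly handled: cyclicity is not Morita-invariant in general, but a cyclic $\Lambda_j$-module $\Lambda_j/p^i\Lambda_j$ does become the cyclic $M_{n_j}(\Lambda_j)$-module $(\Lambda_j/p^i\Lambda_j)^{n_j}$ under the standard equivalence, generated by a single column.

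As for comparison with the paper: there is nothing to compare against. The paper does not prove this lemma; it simply cites \cite[Lemma~3.3]{GLWL1}, which in turn points back to Th\'evenaz. Your write-up is therefore strictly more informative than what appears here. If you want a shorter alternative in the same spirit, one can bypass the explicit block decomposition by noting that $R[H]$ is a separable $R$-algebra (separability idempotent $|H|^{-1}\sum_h h\otimes h^{-1}$), hence any $R[H]$-submodule of a module that splits off as an $R$-summand also splits off as an $R[H]$-summand via averaging; combined with the structure of finite abelian $p$-groups this forces an indecomposable $N$ to be generated by any element of maximal order. But your block/chain-ring route is clean, makes the structure of $R[H]$ completely explicit, and is the argument closest to what one would extract from Th\'evenaz.
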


\begin{proof}
See \cite[Lemma~3.3]{GLWL1}.
\end{proof}

\subsection{Coprime Extensions with an $O(1)$-Generated Complement}

Our goal is to show that if $G = H \ltimes N$ and $K \not \cong G$, then the multiset of colors computed by the count-free $(O(1), \poly \log \log n)$-WL Version I algorithm will differ for $G$ and $K$. We first show that if $G = H \ltimes N$ and $K$ does not admit a decomposition of the form $H \ltimes N$, then the multiset of colors computed by the count-free $(O(1), O(\log \log n))$-WL Version I will differ for $G$ and $K$. 

So now suppose that $K$ also admits a decomposition of the form $H \ltimes N$. For groups that decompose as a coprime extension of $H$ and $N$, the isomorphism type is completely determined by the multiplicities of the indecomposable $H$-module direct summands (\Lem{LemTaunt}). Thus, if $G \not \cong K$, then for any fixed generators $h_{1}, \ldots, h_{d}$ of $H \leq G$ and $h_{1}', \ldots, h_{d}'$ of $H \leq K$, there exists an indecomposable $\langle h_{1}, \ldots, h_{d} \rangle$-module in $N \leq G$ whose isomorphism type appears with multiplicity different than in $N \leq K$. We leverage the fact that, in the coprime case, indecomposable $H$-modules are generated by single elements (\Lem{LemThevenaz}), which allows us to utilize the constant-dimensional count-free WL algorithm.

\begin{lemma} \label{CheckCoprime}
Let $G = H \ltimes N$, where $H$ is $O(1)$-generated, $N$ is Abelian, and $\text{gcd}(|H|, |N|) = 1$. Let $K$ be arbitrary. 
\begin{enumerate}[label=(\alph*)]
\item Suppose that $H$ is solvable with solvability class $O((\log \log n)^{c})$. If $K$ does not decompose as $H \ltimes N$, then the multiset of colors computed by the count-free $(O(1), O(\log \log n)^{c+1})$-WL Version I will be different for $G$ than for $K$. 

\item Suppose that $H$ is a finite simple group. If $K$ does not decompose as $H \ltimes N$, then the multiset of colors computed by the count-free $(O(1), O(\log \log n))$-WL Version I will be different for $G$ than for $K$.
\end{enumerate} 

\noindent 
\end{lemma}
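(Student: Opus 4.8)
The plan is to prove Lemma~\ref{CheckCoprime} by establishing that if $K$ fails to decompose as $H \ltimes N$, then count-free WL Version~I detects some group-theoretic invariant on which $G$ and $K$ differ. The strategy mirrors the preceding CFI argument: we identify a canonical substructure of $G$ that count-free WL can pin down with few pebbles, show that $K$ either lacks a corresponding substructure or produces a distinguishable color, and then invoke the nondeterministic-guess-plus-single-\textsf{Majority}-gate framework to convert a multiset discrepancy into a $\beta_{1}\textsf{MAC}^{0}(\textsf{FOLL})$ decision. Concretely, the first move is to note that $G = H \ltimes N$ with $\gcd(|H|,|N|)=1$ forces a very rigid structure via Schur--Zassenhaus: $N$ is the unique normal Hall subgroup of its order, and is recoverable as the set of elements whose order divides $|N|$ (equivalently, the set of elements coprime in order to $[G:N]$). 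Since count-free WL can distinguish elements of different orders after $O(\log\log n)$ rounds (by \cite[Lemma~7.12]{GLWL1} together with \cite[Lemma~7.9]{GLWL1}, as used in \Thm{thm:BoundedGenSolvable}), WL can effectively recognize the subset playing the role of $N$, and likewise recognize whether a complement of the appropriate order exists.

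The heart of the argument is handling the complement $H$. First I would individualize a generating tuple $(h_{1},\dots,h_{d})$ for $H$, where $d = d(H) \in O(1)$. By \Thm{thm:BoundedGenSolvable}, once these generators are individualized, count-free WL Version~I assigns a unique color to every element of $H$ after $O(r\log\log n)$ rounds, where $r$ is the solvability class of $H$ --- so $O((\log\log n)^{c+1})$ rounds in case~(a), and $O(\log\log n)$ rounds in case~(b) via the finite-simple-group canonization of \Cor{cor:FOLLSimple}. This canonizes the $H$-part. Next, because the action $\theta\colon H \to \Aut(N)$ is what determines the isomorphism type (\Lem{LemmaSemidirect} and \Lem{LemTaunt}), I would leverage \Lem{LemThevenaz}: every indecomposable $\Z H$-submodule of $N$ is cyclic, generated by a single element $n \in N$. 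Thus the entire submodule $\langle n\rangle_{H}$ is the set $\{\theta_{w}(n) : w \text{ a word in the } h_{i}\}$, which is reachable by conjugating $n$ by short words over the individualized generators. Since individualized-generator words of length $O(\log n)$ already receive unique colors (\cite[Lemma~7.3]{GLWL1}, \cite[Lemma~7.9]{GLWL1}), count-free WL can canonically label each cyclic $H$-submodule generator together with its orbit, making the indecomposable-summand structure visible to the coloring.

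If $K$ does not decompose as $H \ltimes N$, then one of these recognizable features must fail in $K$: either $K$ has no normal Hall subgroup of order $|N|$ that is Abelian, or no complement isomorphic (as an abstract group, and compatibly with orders) to $H$, or the order statistics simply differ. In every case there is an isomorphism-invariant, WL-stable coloring on which the \emph{multiset} of colors of $G$ and $K$ disagree; this is precisely because the individualized coloring is canonical, so a structural absence in $K$ forces a color class present in $G$ (say, the class of a canonical generator tuple witnessing the $H \ltimes N$ structure) to be absent or of different multiplicity in $K$. Finally, I would convert this multiset discrepancy into the stated complexity bound exactly as in the proof of \Thm{thm:SecCountFreeCFI}(a): run the count-free WL (which is \textsf{FOLL}- or \textsf{FO}$((\log\log n)^{c+1})$-computable by Grohe--Verbitsky), use $O(\log n)$ nondeterministic bits to guess the differing color class $C$, feed a $1$ to a single \textsf{Majority} gate for each tuple of $G$ colored $C$ and a $0$ for each such tuple of $K$.

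The main obstacle I anticipate is the case analysis for exactly \emph{why} a non-decomposing $K$ is distinguished, rather than the easy regime where $K$ is an incompatible coprime extension. The subtle point is ensuring that ``$K$ does not decompose as $H \ltimes N$'' is genuinely detectable by the \emph{set}-based (count-free) coloring after individualizing only $O(1)$ generators: we must verify that the canonical recovery of $N$ (via element orders) and of a complement of order $|H|$ generated by $O(1)$ elements of the right orders is faithful, and that failure of \emph{any} of these yields a WL-visible difference. I would expect to reduce this to the fact, used implicitly in \cite{GLWL1} and \cite{QST11}, that Schur--Zassenhaus makes $N$ characteristic-up-to-conjugacy and the complement unique up to conjugacy, so that the individualized count-free coloring canonizes the pair $(H,N,\theta)$ whenever such a decomposition exists --- and hence detects its absence.
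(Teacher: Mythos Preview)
Your outline has the right skeleton---recover $N$ via element orders, check for a complement isomorphic to $H$---but you have folded in machinery that does not belong to this lemma. The entire second paragraph (Taunt's lemma, the action $\theta$, \Lem{LemThevenaz}, cyclic $H$-submodules, conjugation orbits) is the content of the \emph{next} result, \Thm{thm:QSTSec5}, which handles the case where $K$ \emph{does} decompose as $H \ltimes N$ but with an inequivalent action. \Lem{CheckCoprime} only asks you to show that if $K$ fails to have the \emph{form} $H \ltimes N$ (for some action), then the color multisets differ. No module theory is needed.

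The paper's proof is a short direct case analysis that you gesture at in your last paragraph but never make concrete. First, if $H \not\leq K$, Spoiler pebbles generators for $H$ and wins by \Thm{thm:BoundedGenSolvable} (or \Cor{cor:FOLLSimple} for part~(b)). Second, if the multiset of element orders in $G$ and $K$ differ, the color multisets differ by \cite[Lemma~7.12]{GLWL1}. Otherwise, the set $N_K := \{k \in K : |k| \text{ divides } |N|\}$ has exactly $|N|$ elements. Third, if $N_K$ is not a subgroup, there exist $a,b \in N_K$ with $ab \notin N_K$; since count-free WL detects orders, the triple $(a,b,ab)$ gets a color absent from $G$. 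So $N_K \leq K$, and being the set of all elements of order dividing $|N|$, it is characteristic hence normal. Fourth, if $N_K$ is not Abelian, Spoiler pebbles non-commuting $h_1, h_2 \in N_K$; Duplicator must respond with $g_1, g_2$ of matching orders, hence in the Abelian group $N$, and Spoiler wins by pebbling $g_1 g_2$. If none of these cases fire, then $N_K$ is an Abelian normal Hall subgroup isomorphic to $N$ (same order spectrum, both Abelian), $H \leq K$ with $H \cap N_K = 1$ by coprimality, and $|H||N_K| = |K|$, so $K = H \ltimes N_K \cong H \ltimes N$ as required.

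Finally, the $\beta_1\textsf{MAC}^0$ conversion you sketch at the end is not part of this lemma's statement; that packaging happens in the corollary after \Thm{thm:QSTSec5}.
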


\begin{proof}
Suppose $H \not \leq K$. If $H$ is solvable with solvability class $O((\log \log n)^{c})$, then by \Thm{thm:BoundedGenSolvable}, Spoiler can pebble generators for $H$ and win with $O(1)$ additional pebbles and $O((\log \log n)^{c+1})$ additional rounds. In the case when $H$ is simple, the result still holds via \Cor{cor:FOLLSimple} (using \Thm{thm:BabaiKantorLubotzky}).

Now by \cite[Lemma~7.12]{GLWL1}, we note that if the multiset of orders in $G$ differs from $K$, then the multiset of colors computed by the count-free $(O(1), O(\log \log n))$-WL will distinguish $G$ from $K$. Now as $N \leq G$ is the set of elements of order dividing $|N|$, we have that if multiset of colors computed by the count-free $(O(1), O(\log \log n))$-WL fails to distinguish $G$ and $K$, then $K$ set $N_{K}$ of order $|N|$ containing all elements of $K$ with order dividing $|N|$. Furthermore, the multiset of orders in $N_{K}$ must be identical to $N$. Now if $N_{K}$ is not a subgroup of $K$, then there exist $a, b \in N_{K}$ such that $ab \not \in N_{K}$. So again as count-free WL detects orders in $O(\log \log n)$ rounds (see \cite[Lemma~7.12]{GLWL1}), count-free WL will distinguish $(a,b,ab)$ from any pair $a', b', a'b' \in N$ after $O(\log \log n)$ rounds. In this case, the multiset of colors for $G$ and $K$ will differ after $O(\log \log n)$ rounds. Thus, we may assume that $N_{K}$ is a subgroup of $K$. As $N_{K}$ is the set of all elements of order dividing $|N|$, we have that $N_{K}$ is characteristic and thus normal.

It remains to show that if $N_{K}$ is not Abelian, then the multiset of colors computed by the count-free $(O(1), O(\log \log n))$-WL fails to distinguish $G$ and $K$. We consider the count-free pebble game. Spoiler begins in the first two rounds by pebbling two elements $h_{1}, h_{2} \in N_{K}$ that don't commute. Duplicator responds by pebbling $g_{1}, g_{2}$. We note that if $|g_{i}| \neq |h_{i}|$, then by \cite[Lemma~7.12]{GLWL1} Spoiler can win with $O(1)$ additional pebbles and $O(\log \log n)$ rounds. So assume that $|g_{i}| = |h_{i}|$. Thus, $g_{1}, g_{2} \in N$. As $N$ is Abelian, $[g_{1}, g_{2}] = 1$. So Spoiler pebbles $g_{1}g_{2}$ and wins. The result now follows.
\end{proof}

\begin{theorem} \label{thm:QSTSec5}
Let $G = H \ltimes N$, where $H$ is $O(1)$-generated, $N$ is Abelian, and $\text{gcd}(|H|, |N|) = 1$. Let $K$ be arbitrary. 
\begin{enumerate}[label=(\alph*)]
\item Suppose that $H$ is solvable with solvability class $O((\log \log n)^{c})$.  If $G \not \cong K$, then the multiset of colors computed by the count-free $(O(1), O((\log \log n)^{c+1}))$-WL Version I will be different for $G$ than for $K$.

\item Suppose that $H$ is a finite simple group. If $G \not \cong K$, then the multiset of colors computed by the count-free $(O(1), O(\log \log n))$-WL Version I will be different for $G$ than for $K$.
\end{enumerate} 
\noindent
\end{theorem}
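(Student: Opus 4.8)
The plan is to prove both parts at once, treating the solvability class $r := (\log\log n)^c$ in part (a) and the simple case of part (b) uniformly, and only distinguishing the two in the final round count. The skeleton is: reduce to the case that $K$ has the same coprime-extension shape as $G$ (already handled by \Lem{CheckCoprime}), individualize a generating set of the complement $H$ so that the count-free coloring ``names'' the entire $H$-action on $N$, and then read the multiplicities of the indecomposable $\Z H$-summands of $N$ off the multiset of colors, using Taunt's Lemma to convert $G \not\cong K$ into a genuine multiplicity discrepancy.

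Concretely, I would first apply \Lem{CheckCoprime}: if $K$ does not decompose as $H \ltimes N$ (same $H$, and same $N$ up to isomorphism), the multiset of colors already differs within $O(r\log\log n)$ rounds in case (a), or $O(\log\log n)$ rounds in case (b). So I may assume $G = H \ltimes_{\theta_1} N$ and $K = H \ltimes_{\theta_2} N$. Next, using \Thm{thm:BoundedGenSolvable} (resp.\ \Cor{cor:FOLLSimple}), I would individualize a fixed generating tuple $h_1,\dots,h_d$ of $H \le G$; after $O(r\log\log n)$ (resp.\ $O(\log\log n)$) rounds this assigns a unique color to every element of $H$, so conjugation $n \mapsto h n h^{-1}$, and hence the whole action of $H$ on $N$, is recorded by the coloring in both groups. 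Any color-preserving match of the tuple $(h_1,\dots,h_d)$ to a tuple in $K$ is exactly a choice of automorphism $\alpha \in \Aut(H)$ (the ambiguity of which complement one lands in is irrelevant, since by Schur--Zassenhaus complements are conjugate).

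The heart of the argument is that, relative to such an individualization, the color of an element $n \in N$ records the isomorphism type of the cyclic $\Z H$-module $\langle n\rangle_H$, together with the marked position of $n$ inside it. Because $H$ is $O(1)$-generated, every element of $\langle n\rangle_H$ is a combination $\sum_{h} a_h\, \theta_h(n)$ expressible by a word of length $O(\log n)$ in $h_1,\dots,h_d,n$, so by the word-evaluation lemmas \cite[Lemmas~7.3,~7.9]{GLWL1} the module relations stabilize in $O(\log\log n)$ further rounds, which stays within budget in both cases. Now I would invoke \Lem{LemThevenaz}: in the coprime setting every indecomposable summand of $N$ is cyclic, hence of the form $\langle n\rangle_H$, so the multiset of colors of the elements of $N$ determines the full list of indecomposable multiplicities. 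Finally, since $G\not\cong K$, \Lem{LemTaunt} says that for \emph{every} $\alpha\in\Aut(H)$ some indecomposable occurs with different multiplicity in $(N,\theta_1)$ than in $(N,\theta_2\circ\alpha)$; thus no color-preserving matching of generating tuples of $H$ extends to a color-preserving matching of the attached module elements, and the two global multisets of colors must differ.

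\textbf{Main obstacle.} The step I expect to be most delicate is precisely the central claim of the previous paragraph, namely that the count-free coloring captures the $\Z H$-module type of $\langle n\rangle_H$ \emph{within} a $\poly\log\log n$ round budget. One must check that witnessing a non-isomorphism of two cyclic modules only ever requires pebbling conjugates and products reachable by $O(\log n)$-length words (so that $O(\log\log n)$ rounds suffice), and --- the subtler half --- that equality of the two \emph{global} multisets would force a \emph{single} automorphism $\alpha$ simultaneously matching all indecomposable multiplicities, which is exactly what Taunt's Lemma forbids. The reason \Lem{LemThevenaz} is indispensable here (rather than a field-coefficient multiplicity count) is that $N$ need not be elementary abelian, so one works over $(\Z/p^k\Z)[H]$, where one indecomposable can embed in another and multiplicities cannot be read off hom-spaces.
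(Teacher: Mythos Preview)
Your proposal is correct and follows essentially the same route as the paper's proof: reduce via \Lem{CheckCoprime}, individualize a generating tuple for $H$ so that every element of $H$ (and then every element of each cyclic module $\langle n\rangle_H$) gets a unique color within the stated round budget via the word-evaluation lemmas, invoke \Lem{LemThevenaz} for cyclicity of indecomposables, and use \Lem{LemTaunt} to convert $G\not\cong K$ into a multiplicity discrepancy that survives every choice of $\alpha\in\Aut(H)$. Your discussion of the ``for every $\alpha$'' quantifier and why it forces a global multiset mismatch is, if anything, slightly more explicit than the paper's, but the underlying argument is the same.
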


\begin{proof}
By \Lem{CheckCoprime}, we may assume that $K = H \ltimes N$. Let $h_{1}, \ldots, h_{d} \in G$ be generators for a prescribed copy of $H$, and let $h_{1}', \ldots, h_{d}' \in K$ be corresponding generators for a copy of $H$ in $K$. Now let $n$ be an element of $N \leq G$, and let $n'$ be an element of $N \leq K$. Suppose $\langle n \rangle_{\langle h_{1}, \ldots, h_{d} \rangle}$ and $\langle n' \rangle_{\langle h_{1}' ,\ldots, h_{d}'\rangle}$ are inequivalent $H$-modules. Now $\langle n \rangle_{\langle h_{1}, \ldots, h_{d} \rangle}$ is generated as a group by the $\langle h_{1}, \ldots, h_{d} \rangle$-conjugates of $N$. Once $h_{1}, \ldots, h_{d}$ have been individualized, the elements of $\langle h_{1}, \ldots, h_{d} \rangle$ receive unique colors after (a) $O((\log \log n)^{c+1})$ additional rounds (see \Thm{thm:BoundedGenSolvable} for when $H$ is solvable with solvability class $O((\log \log n)^{c})$, and (b) $O(\log \log n)$ rounds when $H$ is simple \Cor{cor:FOLLSimple} using \Thm{thm:BabaiKantorLubotzky}). In particular, this implies that count-free WL will detect $\langle h_{1}, \ldots, h_{d} \rangle$ after (a) $O((\log \log n)^{c+1})$ additional rounds (see \Thm{thm:BoundedGenSolvable} for when $H$ is solvable with solvability class $O((\log \log n)^{c}))$, and (b) $O(\log \log n)$ rounds when $H$ is simple (see \Cor{cor:FOLLSimple} using \Thm{thm:BabaiKantorLubotzky}).

Let $n_{1}, \ldots, n_{\ell}$ be the $\langle h_{1}, \ldots, h_{d} \rangle$-conjugates of $n$. By \cite[Proposition~7.13]{GLWL1}, we have that for all $i, j$, $n_{i}^{j}$ will receive a unique color after $O(\log \log n)$ rounds. As $N$ is Abelian, each element of $\langle n \rangle_{\langle h_{1}, \ldots, h_{d} \rangle}$ can be written as an element of the form $n_{1}^{e_{1}} \cdots n_{\ell}^{e_{\ell}}$, where $e_{i} \in \{1, \ldots, |n_{\ell}|-1\}$. As $\ell \leq \lceil \log |G| \rceil$, we have by \cite[Lemma~7.3(c)]{GLWL1} that each element of $\langle n \rangle_{\langle h_{1}, \ldots, h_{d} \rangle}$ will receive a unique color after an additional $O(\log \log n)$ rounds. Thus, by considering the count-free pebble game from the initial configuration $(h_{1}, \ldots, h_{d}) \mapsto (h_{1}', \ldots, h_{d}')$, we may assume- at the cost of $O(1)$ pebbles and $O(\log \log n)$ rounds- that Spoiler and Duplicator respect the map $\langle n \rangle_{\langle h_{1}, \ldots, h_{d} \rangle} \mapsto \langle n' \rangle_{\langle h_{1}, \ldots, h_{d}\rangle}$ induced by $(n, h_{1}, \ldots, h_{d}) \mapsto (n', h_{1}', \ldots, h_{d}')$. 

Now as $G, K$ are non-isomorphic groups of the form $H \ltimes N$, they differ only in their actions. Now the actions are determined by the multiset of indecomposable $H$-modules in $N$. As $|H|, |N|$ are coprime, we have by \Lem{LemThevenaz} that the indecomposable $H$-modules are cyclic. As $G \not \cong K$, we have by \Lem{LemTaunt} that for any fixed choice of generators $h_{1}, \ldots, h_{d}$ for $H \leq G$ and any fixed choice of generators $h_{1}', \ldots, h_{d}'$ for $H \leq K$, there exists an $n \in N \leq G$ such that the multiplicity of the isomorphism class of $\langle n \rangle_{\langle h_{1}, \ldots, h_{d} \rangle}$ is different in $G$ than in $K$. As $h_{1}, \ldots, h_{d}$ and $h_{1}', \ldots, h_{d}'$ were arbitrary, it follows that the multiset of colors computed by the count-free $O(1)$-WL will be different for $G$ than for $K$ after $O((\log \log n)^{c+1})$ rounds for (a) and $O(\log \log n)$ rounds for (b).
\end{proof}

\begin{corollary}
Let $G = H \ltimes N$, where $H$ is $O(1)$-generated, $N$ is Abelian, and $\text{gcd}(|H|, |N|) = 1$. Let $K$ be arbitrary. 
\begin{enumerate}[label=(\alph*)]
\item Suppose that $H$ is solvable with solvability class $O((\log \log n)^{c})$.  We can decide whether $G \cong K$ in $\beta_{1}\textsf{MAC}^{0}(\textsf{FO}((\log \log n)^{c+1}))$.

\item Suppose that $H$ is a finite simple group. We can decide whether $G \cong K$ in $\beta_{1}\textsf{MAC}^{0}(\textsf{FOLL})$.
\end{enumerate} 
\noindent 
\end{corollary}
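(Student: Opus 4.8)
The plan is to combine the Weisfeiler--Leman distinguishability result of \Thm{thm:QSTSec5} with the parallel implementation of count-free WL due to Grohe \& Verbitsky \cite{GroheVerbitsky}, and then to apply the nondeterminism-plus-single-\textsf{Majority}-gate technique exactly as in the proof of \Thm{thm:SecCountFreeCFI}(a). The overall structure mirrors the Grochow \& Levet strategy for Abelian groups recalled in the introduction: run enough rounds of count-free WL Version I so that $G$ and $K$ receive different multisets of colors whenever $G \not\cong K$, then use $O(\log n)$ nondeterministic bits to guess a witnessing color class and a single \textsf{Majority} gate to certify that its multiplicity differs.

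First I would fix the dimension to the relevant constant and run the count-free WL Version I algorithm for $R$ rounds, where $R = O((\log \log n)^{c+1})$ in case (a) and $R = O(\log \log n)$ in case (b). By Grohe \& Verbitsky \cite{GroheVerbitsky}, the initial coloring and each refinement round of count-free WL Version I are $\textsf{AC}^{0}$-computable, so the coloring after $R$ rounds is computable by an $\textsf{FO}(R)$ circuit; that is, by an $\textsf{FO}((\log \log n)^{c+1})$ circuit in case (a) and by an $\textsf{FOLL}$ circuit in case (b). By \Thm{thm:QSTSec5}, if $G \not\cong K$ then the multiset of colors produced by this coloring differs between $G$ and $K$.

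It remains to detect a difference in the color multisets with a $\beta_{1}\textsf{MAC}^{0}$ circuit layered on top of this $\textsf{FO}$ coloring. I would first check in $\textsf{AC}^{0}$ that $|G| = |K|$, declaring non-isomorphism immediately if not; so we may assume both groups have order $n$ and hence both have exactly $n^{k}$ many $k$-tuples. Using $O(\log n)$ nondeterministic bits, guess a color class $C$. For each $k$-tuple of $G^{k}$ whose (already-computed) color is $C$ feed a $1$ to the output \textsf{Majority} gate, and for each $k$-tuple of $K^{k}$ whose color is $C$ feed a $0$. This wiring is $\textsf{AC}^{0}$ given the coloring, so composing with the $\textsf{FO}(R)$ coloring places the whole procedure in $\beta_{1}\textsf{MAC}^{0}(\textsf{FO}((\log \log n)^{c+1}))$ and $\beta_{1}\textsf{MAC}^{0}(\textsf{FOLL})$, respectively.

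The step I expect to require the most care is verifying that the single \textsf{Majority} gate correctly witnesses non-isomorphism. Because $|G^{k}| = |K^{k}| = n^{k}$, the total number of colored tuples agrees, so $\sum_{C}\bigl(\mathrm{mult}_{G}(C) - \mathrm{mult}_{K}(C)\bigr) = 0$; hence if the two multisets differ at all, then some color class $C$ satisfies $\mathrm{mult}_{G}(C) > \mathrm{mult}_{K}(C)$ strictly. On such a guessed $C$ the \textsf{Majority} gate sees strictly more $1$'s than $0$'s and fires, whereas for isomorphic $G, K$ no guess can produce a strict majority; this is exactly the direction that the existential nondeterminism of $\beta_{1}$ captures. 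The remaining routine points are that $O(\log n)$ bits suffice to index a color class and that testing color equality of two tuples is $\textsf{AC}^{0}$ once the coloring has been computed, both of which follow as in \Thm{thm:SecCountFreeCFI}(a).
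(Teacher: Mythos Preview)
Your proposal is correct and follows essentially the same approach as the paper's own proof: run the appropriate number of rounds of count-free WL Version~I (via Grohe--Verbitsky), invoke \Thm{thm:QSTSec5} to conclude the color multisets differ when $G \not\cong K$, then guess a color class with $O(\log n)$ nondeterministic bits and compare multiplicities with a single \textsf{Majority} gate. If anything, you are slightly more explicit than the paper in justifying why some color class must have strictly greater multiplicity in $G$ than in $K$ (via the $\sum_{C}(\mathrm{mult}_{G}(C)-\mathrm{mult}_{K}(C))=0$ observation), but the argument is otherwise identical.
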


\begin{proof}
We apply the count-free $O(1)$-WL Version I algorithm for $O((\log \log n)^{c+1})$ rounds in (a) and $O(\log \log n)$ rounds for (b). By the parallel WL implementation due to Grohe \& Verbitsky \cite{GroheVerbitsky}, each round of count-free WL Version I is $\textsf{AC}^{0}$-computable. Now by \Thm{thm:QSTSec5}, if $G \not \cong K$, then in (a) the multiset of colors computed by $(O(1), O((\log \log n)^{c+1}))$-WL Version I will differ between $G$ and $K$. For (b), the multiset of colors computed by $(O(1), O(\log \log n))$-WL Version I will differ between $G$ and $K$. We will now use a $\beta_{1}\textsf{MAC}^{0}$ circuit to distinguish $G$ from $K$. Using $O(\log n)$ non-deterministic bits, we guess a color class $C$ where the multiplicity differs. At each iteration, the parallel WL implementation due to Grohe \& Verbitsky records indicators as to whether two $k$-tuples receive the same color. As we have already run the count-free WL algorithm, we may in $\textsf{AC}^{0}$ decide whether two $k$-tuples have the same color. For each $k$-tuple in $G^{k}$ having color $C$, we feed a $1$ to the $\textsf{Majority}$ gate. For each $k$-tuple in $K^{k}$ having color $C$, we feed a $0$ to the $\textsf{Majority}$ gate. The result now follows.
\end{proof}

\section{Direct Products of Non-Abelian Simple Groups}

In this section, we establish the following.

\begin{theorem} \label{thm:DirectProductSimple}
Let $G$ be a direct product of non-Abelian finite simple groups, and let $H$ be arbitrary. We can decide isomorphism between $G$ and $H$ in $\beta_{1}\textsf{MAC}^{0}(\textsf{FOLL})$.
\end{theorem}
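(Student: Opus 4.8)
The plan is to follow the same two-phase template used for \Thm{thm:SecCountFreeCFI} and \Thm{thm:QSTSec5}: first argue that the count-free $(O(1), O(\log \log n))$-WL Version I algorithm produces a complete isomorphism invariant for $G$ in the form of its \emph{multiset} of stable colors, and then observe that a $\beta_{1}\textsf{MAC}^{0}$ circuit detects a discrepancy between two such multisets. The second phase is routine and identical to the earlier proofs: running count-free WL for $O(\log \log n)$ rounds is $\textsf{FOLL}$-computable by Grohe \& Verbitsky \cite{GroheVerbitsky}; we then use $O(\log n)$ nondeterministic bits to guess a color class $C$ whose multiplicity differs, feeding a $1$ to a single \textsf{Majority} gate for each $O(1)$-tuple of $G$ with color $C$ and a $0$ for each such tuple of $H$. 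So essentially all of the content is the first phase.

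For the first phase, write $G = S_{1} \times \cdots \times S_{k}$ with each $S_{i}$ non-Abelian simple. The complete invariant here is the multiset of isomorphism types of the simple direct factors (Remak--Krull--Schmidt), so it suffices to show that the count-free WL colors encode, for each factor, its isomorphism type and multiplicity. The key tool is \Thm{thm:BabaiKantorLubotzky}: every non-Abelian simple group $S$ admits a generating $7$-tuple over which every element is a word of length $C \log |S| = O(\log n)$, with $C$ and the length $7$ uniform across all such $S$. I would consider an individualized $7$-tuple $(g_{1}, \ldots, g_{7})$ lying inside a single factor $S_{i}$; then, exactly as in the proof of \Thm{thm:BoundedGenSolvable}, the word-length lemmas \cite[Lemma~7.3]{GLWL1} and \cite[Lemma~7.9]{GLWL1} together with the doubling argument of \cite{BKLM} show that after $O(\log \log n)$ rounds every element of $S_{i}$ receives a unique color, so that the color of the individualized tuple records the full multiplication table, hence the isomorphism type, of $S_{i}$. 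Since isomorphic factors yield identical colors and non-isomorphic factors yield distinct colors, the number of factor-generating $7$-tuples carrying the color-type of a simple group $S$ equals (the number of copies of $S$ in $G$) times (the number of generating $7$-tuples of $S$); as the latter depends only on $S$, the multiset of colors recovers the multiplicity of each factor.

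Crucially, only $O(1)$ pebbles are ever on the board, even though $G$ may have $k = \Theta(\log \log n)$ factors: we identify factors one tuple at a time, and it is the \emph{multiset} over all $7$-tuples, rather than any single simultaneous individualization of all of $G$, that carries the global information. To finish against an arbitrary $H$, I would argue that if the color multisets of $G$ and $H$ agree then $H$ must also be a direct product of non-Abelian simple groups with exactly the same factor multiset: the color of a factor-generating tuple encodes a genuine simple subgroup via its reconstructed multiplication table (Version~I suffices precisely because we read off multiplication relations rather than abstract generated-subgroup types), matching these colors forces $H$ to contain the corresponding simple subgroups with the same multiplicities, and matching the element-order multiset \cite[Lemma~7.12]{GLWL1} and the commuting structure forces these subgroups to be normal, pairwise commuting, and to exhaust $H$.

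The main obstacle is the tension between the weakness of the Version~I initial coloring---which, unlike Version~II, cannot directly see the isomorphism type of $\langle g_{1}, \ldots, g_{7}\rangle$---and the need to read off factor isomorphism types using only $O(1)$ pebbles and $O(\log \log n)$ rounds. I expect this to be resolved exactly as in \Thm{thm:BoundedGenSolvable}: the Babai--Kantor--Lubotzky short-word property lets the word-length and doubling lemmas of \cite{GLWL1} and \cite{BKLM} assign unique colors to an entire factor in $O(\log \log n)$ rounds, so that the multiplication table of the factor, and hence its isomorphism type, becomes visible to the count-free Version~I refinement without ever individualizing more than a constant number of elements.
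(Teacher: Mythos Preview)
Your two-phase template and the use of \Thm{thm:BabaiKantorLubotzky} to encode the multiplication table of a simple factor in the color of a $7$-tuple are on the right track, and the second phase is indeed routine. But the counting step has a genuine gap. You write that ``the number of factor-generating $7$-tuples carrying the color-type of a simple group $S$ equals (the number of copies of $S$ in $G$) times (the number of generating $7$-tuples of $S$).'' For the $\beta_{1}\textsf{MAC}^{0}$ comparison, however, you must count \emph{all} tuples in a given color class, not just those generating a direct factor. When $S$ occurs with multiplicity $m \geq 2$ in $G$ there are many copies of $S$ that are \emph{not} direct factors---for example the diagonals $\{(s,\alpha(s)) : s \in S\} \leq S \times S$ for $\alpha \in \Aut(S)$---and a $7$-tuple generating such a diagonal has precisely the same internal multiplication table as a factor-generating tuple. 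Your argument (individualize, then use the word-length lemmas to give every element of the generated subgroup a unique color) sees only this internal table, so it does not separate these two cases; consequently you have not shown that the relevant color class has size $m$ times a constant depending only on $S$. The paper closes this gap with a normal-closure argument (\Lem{LemWeight1} and \Rmk{rmk:ColoringSocleFactors}): if $S \in \Fac(\Soc(G))$ then any two $G$-conjugates of $S$ commute, whereas for a non-factor copy one can exhibit conjugates that do not, and count-free WL witnesses this with $O(1)$ extra pebbles and $O(\log\log n)$ rounds via the rank lemma.

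The second gap is the reduction from arbitrary $H$. Asserting that agreement of the element-order multiset and ``the commuting structure'' forces $H$ to be a direct product of non-Abelian simples is not a proof; semisimple groups strictly larger than their socle can share such coarse invariants, and you give no mechanism by which count-free Version~I detects normality of the simple subgroups you find in $H$. The paper does this in two explicit steps: \Lem{lem:IsSemisimple} separates semisimple groups from groups with an Abelian normal subgroup by exhibiting noncommuting conjugates of a single element, and then \Lem{LemIdentifySocle} (built on \Lem{LemWeight1} and the rank lemma) shows that if $H$ is semisimple with $H \neq \Soc(H)$ then socle elements of $G$ are colored differently from non-socle elements of $H$. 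Only after this reduction is $H$ known to be a direct product of non-Abelian simples, at which point the factor-multiplicity comparison is meaningful.
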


\begin{remark}
Isomorphism testing of direct products of non-Abelian finite simple groups was known to be in $\textsf{P}$ via polynomial-time direct product decomposition results \cite{WilsonDirectProductsArxiv, KayalNezhmetdinov} and CFSG (in particular, the result that finite simple groups are $2$-generated). Brachter \& Schweitzer \cite[Lemmas 5.20 \& 5.21]{BrachterSchweitzerWLLibrary} established the analogous result for Weisfeiler--Leman Version II. A careful analysis of their work shows that only $O(1)$ rounds are required, which places isomorphism testing into $\textsf{L}$.
\end{remark}

\subsection{Preliminaries}
We note that direct products of finite non-Abelian simple groups have no Abelian normal subgroups (i.e., are semisimple). Thus, we recall some facts about semisimple groups from \cite{BCGQ, GLWL1}. As a semisimple group $G$ has no Abelian normal subgroups, we have that $\Soc(G)$ is the direct product of non-Abelian simple groups. The conjugation action of $G$ on $\Soc(G)$ permutes the direct factors of $\Soc(G)$. So there exists a faithful permutation representation $\alpha : G \to G^{*} \leq \Aut(\Soc(G))$. $G$ is determined by $\Soc(G)$ and the action $\alpha$. Let $H$ be a semisimple group with the associated action $\beta : H \to \text{Aut}(\Soc(H))$. We have that $G \cong H$ precisely if $\Soc(G) \cong \Soc(H)$ via an isomorphism that makes $\alpha$ equivalent to $\beta$. 

We now introduce the notion of permutational isomorphism, which is our notion of equivalence for $\alpha$ and $\beta$. Let $A$ and $B$ be finite sets, and let $\pi : A \to B$ be a bijection. For $\sigma \in \text{Sym}(A)$, let $\sigma^{\pi} \in \text{Sym}(B)$ be defined by $\sigma^{\pi} := \pi^{-1}\sigma \pi$. For a set $\Sigma \subseteq \text{Sym}(A)$, denote $\Sigma^{\pi} := \{ \sigma^{\pi} : \sigma \in \Sigma\}$. Let $K \leq \text{Sym}(A)$ and $L \leq \text{Sym}(B)$ be permutation groups. A bijection $\pi : A \to B$ is a \textit{permutational isomorphism} $K \to L$ if $K^{\pi} = L$.

The following lemma, applied with $R = \Soc(G)$ and $S = \Soc(H)$, precisely characterizes semisimple groups \cite{BCGQ}.      
 
\begin{lemma}[{\cite[Lemma 3.1]{BCGQ}}] \label{CharacterizeSemisimple}
Let $G$ and $H$ be groups, with $R \triangleleft G$ and $S \triangleleft H$ groups with trivial centralizers. Let $\alpha : G \to G^{*} \leq \Aut(R)$ and $\beta : H \to H^{*} \leq \Aut(S)$ be faithful permutation representations of $G$ and $H$ via the conjugation action on $R$ and $S$, respectively. Let $f : R \to S$ be an isomorphism. Then $f$ extends to an isomorphism $\hat{f} : G \to H$ if and only if $f$ is a permutational isomorphism between $G^{*}$ and $H^{*}$; and if so, $\hat{f} = \alpha f^{*} \beta^{-1}$, where $f^{*} :  G^{*} \to H^{*}$ is the isomorphism induced by $f$.
\end{lemma}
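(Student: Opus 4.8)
The plan is to use the trivial-centralizer hypotheses to make $\alpha$ and $\beta$ faithful, so that $G$ and $H$ are faithfully encoded by the permutation groups $G^{*} \le \text{Sym}(R)$ and $H^{*} \le \text{Sym}(S)$, and then to translate ``extending $f$'' into a conjugacy statement inside $\text{Sym}(S)$. First I would record the structural facts that make the encoding work: since $\ker\alpha = C_{G}(R) = 1$, the map $\alpha$ is an isomorphism $G \xrightarrow{\sim} G^{*}$, and likewise $\beta\colon H \xrightarrow{\sim} H^{*}$; moreover $Z(R) \le C_{G}(R) = 1$, so $\alpha$ restricts to an isomorphism $R \xrightarrow{\sim} \Inn(R) = \alpha(R) \le G^{*}$, and symmetrically $\beta$ restricts to $S \xrightarrow{\sim} \Inn(S) \le H^{*}$. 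Thus both $G^{*}$ and $H^{*}$ sit between the inner and full automorphism groups of $R$ and $S$, and the isomorphism $f\colon R \to S$ supplies a conjugation map $f^{*}\colon \sigma \mapsto \sigma^{f}$ carrying permutations of $R$ to permutations of $S$.

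For the forward direction I would assume $\hat f\colon G \to H$ is an isomorphism with $\hat f|_{R} = f$ (so automatically $\hat f(R) = S$), and compute how $\hat f$ intertwines the two conjugation actions. For $g \in G$ and $r \in R$, normality of $R$ gives $grg^{-1} \in R$, and applying the homomorphism $\hat f$ yields $f\bigl(\alpha(g)(r)\bigr) = \beta(\hat f(g))\bigl(f(r)\bigr)$; equivalently, $\beta(\hat f(g))$ is the transport $\alpha(g)^{f}$ of $\alpha(g)$ along $f$. Letting $g$ range over $G$ and using that $\hat f$ is onto gives $(G^{*})^{f} = \beta(H) = H^{*}$, so $f$ is a permutational isomorphism. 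The same identity forces $\hat f(g) = \beta^{-1}\bigl(f^{*}(\alpha(g))\bigr)$, so the extension, if it exists, is unique and equal to $\alpha f^{*}\beta^{-1}$.

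For the converse I would take $(G^{*})^{f} = H^{*}$ as the hypothesis, so that $f^{*}\colon G^{*} \to H^{*}$ is a genuine isomorphism, and simply define $\hat f := \alpha f^{*}\beta^{-1}$; as a composite of three isomorphisms it is an isomorphism $G \to H$. The one step needing care---and the one I expect to be the crux---is checking that $\hat f$ really restricts to $f$ on $R$, because this is the only point at which the multiplicativity of $f$ is used (the permutational-isomorphism condition by itself only asserts that $f$ is a set bijection conjugating $G^{*}$ onto $H^{*}$). Concretely, for $r \in R$ one has $\alpha(r) = \mathrm{inn}_{r} \in \Inn(R)$, and since $f$ is a homomorphism, $\alpha(r)^{f}$ sends $f(r')$ to $f(r r' r^{-1}) = f(r)f(r')f(r)^{-1}$ for every $r' \in R$; that is, $\alpha(r)^{f} = \mathrm{inn}_{f(r)} = \beta(f(r))$. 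Hence $\hat f(r) = \beta^{-1}\bigl(f^{*}(\alpha(r))\bigr) = f(r)$, so $\hat f$ extends $f$. Combined with the uniqueness extracted in the forward direction, the two implications yield the stated equivalence together with the explicit form $\hat f = \alpha f^{*}\beta^{-1}$.
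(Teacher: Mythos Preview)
Your proof is correct. However, the paper does not actually prove this lemma: it is stated in the preliminaries of Section~6 with the citation \cite[Lemma~3.1]{BCGQ} and no proof is given in the present paper. So there is nothing in the paper to compare your argument against; you have supplied a complete proof where the authors simply invoked the cited reference. Your argument is the standard one: use triviality of the centralizers to make $\alpha$ and $\beta$ isomorphisms onto $G^{*}$ and $H^{*}$, transport the conjugation action along $f$ to obtain the permutational-isomorphism condition and the formula $\hat f = \alpha f^{*}\beta^{-1}$ in the forward direction, and in the converse verify that $\alpha f^{*}\beta^{-1}$ restricts to $f$ on $R$ via the computation $\alpha(r)^{f} = \mathrm{inn}_{f(r)} = \beta(f(r))$.
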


We also need the following standard group-theoretic lemmas. The first provides a key condition for identifying whether a non-Abelian simple group belongs in the socle. Namely, if $S_{1} \cong S_{2}$ are non-Abelian simple groups where $S_{1}$ is in the socle and $S_{2}$ is not in the socle, then the normal closures of $S_{1}$ and $S_{2}$ are non-isomorphic. In particular, the normal closure of $S_{1}$ is a direct product of non-Abelian simple groups, while the normal closure of $S_{2}$ is not a direct product of non-Abelian simple groups. We will apply this condition later when $S_{1}$ is a simple direct factor of $\Soc(G)$; in which case, the normal closure of $S_{1}$ is of the form $S_{1}^{k}$. We refer the reader to \cite[Section~6.1]{GLWL1} for the proofs.

\begin{lemma} \label{LemmaSocle}
Let $G$ be a finite semisimple group. A subgroup $S \leq G$ is contained in $\Soc(G)$ if and only if the normal closure of $S$ is a direct product of nonabelian simple groups.
\end{lemma}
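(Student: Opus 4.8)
The plan is to reduce everything to one structural fact about direct products of non-Abelian simple groups: if $T = T_1 \times \cdots \times T_m$ with each $T_i$ non-Abelian simple, then every normal subgroup $N \trianglelefteq T$ is a \emph{subproduct}, i.e.\ $N = \prod_{i \in I} T_i$ for some $I \subseteq [m]$. I would prove this via the coordinate projections $\pi_i : T \to T_i$: each $\pi_i(N)$ is normal in $T_i$, hence trivial or all of $T_i$ by simplicity; and whenever $\pi_i(N) = T_i$, the commutators $[n,s]$ with $n \in N$, $s \in T_i$ lie in $N \cap T_i$ and already generate $[T_i, T_i] = T_i$ (non-Abelian simple groups being perfect), so $T_i \leq N$. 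Taking $I = \{ i : \pi_i(N) = T_i\}$ then yields $N = \prod_{i \in I} T_i$. Recall also that for semisimple $G$ the socle $\Soc(G)$ is itself a direct product of non-Abelian simple groups $T_1 \times \cdots \times T_m$, equal to the product of the minimal normal subgroups of $G$.

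For the forward direction, suppose $S \leq \Soc(G)$. Since $\Soc(G)$ is a normal subgroup of $G$ containing $S$, the normal closure satisfies $\ncl_G(S) \leq \Soc(G)$; being normal in $G$, it is a fortiori normal in $\Soc(G)$. Applying the subproduct fact to $\ncl_G(S) \trianglelefteq \Soc(G)$ shows $\ncl_G(S) = \prod_{i \in I} T_i$, which is a direct product of non-Abelian simple groups, as desired.

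For the converse, write $N := \ncl_G(S)$ and suppose $N = U_1 \times \cdots \times U_k$ with each $U_j$ non-Abelian simple. Since $S \leq N$, it suffices to show $N \leq \Soc(G)$. The $U_j$ are precisely the minimal normal subgroups of $N$, and because $N \trianglelefteq G$, conjugation by $G$ induces automorphisms of $N$ and hence permutes the set $\{U_1, \ldots, U_k\}$. For each $G$-orbit $O$ on these factors, the subproduct $M_O := \prod_{j \in O} U_j$ is $G$-invariant, hence normal in $G$; I claim it is a minimal normal subgroup. Indeed, any $1 \neq L \trianglelefteq G$ with $L \leq M_O$ is normal in $M_O$, hence a subproduct $\prod_{j \in O'} U_j$ by the structural fact, and $O'$ is then a nonempty $G$-invariant subset of the single orbit $O$, forcing $O' = O$ and $L = M_O$. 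Thus each $M_O \leq \Soc(G)$, and since $N$ is the product of the $M_O$ over all orbits, $N \leq \Soc(G)$.

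The main obstacle is the converse direction: one must upgrade the hypothesis that $N$ \emph{internally} decomposes as a product of simple groups to the conclusion that its simple pieces assemble into genuine minimal normal subgroups of the ambient $G$. This is exactly where the $G$-action permuting the factors and the subproduct lemma are both needed. The forward direction, by contrast, is essentially immediate once the subproduct lemma is in hand.
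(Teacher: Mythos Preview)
Your argument is correct. The subproduct lemma for normal subgroups of a direct product of non-Abelian simple groups is proved cleanly, the forward direction follows immediately from it, and your converse is the right idea: use that $G$-conjugation permutes the simple factors of $N$, so each $G$-orbit of factors assembles into a minimal normal subgroup of $G$.

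There is nothing to compare against in this paper: the proof given here is simply a pointer to \cite[Lemma~6.5]{GLWL1}, with no argument reproduced. Your write-up is a standard self-contained proof of the lemma and would stand in fine for the citation. One small stylistic point: when you assert that ``the $U_j$ are precisely the minimal normal subgroups of $N$,'' you might note explicitly that this is itself an application of your subproduct lemma (a nontrivial normal subgroup is a nonempty subproduct, and the minimal nonempty subproducts are the single factors); you use exactly this again when identifying the minimal normal subgroups of $L$ inside $M_O$, so it is worth flagging once.
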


\begin{proof}
See \cite[Lemma~6.5]{GLWL1}.
\end{proof}

\begin{corollary} \label{CorSocleFactor}
Let $G$ be a finite semisimple group. A nonabelian simple subgroup $S \leq G$ is a direct factor of $\Soc(G)$ if and only if its normal closure $N = ncl_G(S)$ is isomorphic to $S^k$ for some $k \geq 1$ and $S \unlhd N$.
\end{corollary}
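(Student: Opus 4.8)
The plan is to reduce everything to a single structural fact: if $T = T_1 \times \cdots \times T_m$ is a direct product of non-Abelian finite simple groups, then every normal subgroup of $T$ is a subproduct $\prod_{i \in I} T_i$ for some $I \subseteq [m]$. I would first record this (standard) fact, together with the observation already noted in the preliminaries that the conjugation action of $G$ permutes the simple direct factors $T_1, \ldots, T_m$ of $\Soc(G)$, these being precisely the minimal normal subgroups of $\Soc(G)$. The structural fact itself follows by noting that for $M \trianglelefteq T$ one has $M \cap T_i \trianglelefteq T_i$, hence $M \cap T_i \in \{1, T_i\}$ by simplicity; when $M \cap T_i = 1$ we get $[M, T_i] = 1$, so every $x \in M$ has trivial $i$-th coordinate since $Z(T_i) = 1$, forcing $M = \prod_{i : T_i \leq M} T_i$.

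For the forward direction, suppose $S$ is a direct factor of $\Soc(G) = T_1 \times \cdots \times T_m$, say $S = T_1$. Since $G$ permutes the $T_i$, the $G$-conjugates of $S$ are exactly the factors in the $G$-orbit of $T_1$, say $T_{i_1}, \ldots, T_{i_k}$. Distinct factors commute and intersect trivially, so the subgroup they generate --- which is the normal closure $N = \ncl_G(S)$ --- is an internal direct product isomorphic to $S^k$. As $S = T_1$ is one of these direct factors, it is normal in $N$, giving both required conditions.

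For the converse, suppose $N = \ncl_G(S) \cong S^k$ and $S \trianglelefteq N$. By \Lem{LemmaSocle}, since $N$ is a direct product of non-Abelian simple groups, $S \leq \Soc(G)$; as $\Soc(G) \trianglelefteq G$, we also get $N = \ncl_G(S) \leq \Soc(G)$, and $N$ is normal in $\Soc(G)$. Applying the structural fact to the normal subgroup $N$ of $\Soc(G)$ yields $N = \prod_{i \in I} T_i$; comparing with $N \cong S^k$ and using uniqueness of the decomposition into simple factors forces each $T_i$ $(i \in I)$ to be isomorphic to $S$ and $|I| = k$. Finally, since $S \trianglelefteq N = \prod_{i \in I} T_i$, the same structural fact gives $S = \prod_{j \in J} T_j$ for some $J \subseteq I$; but $S$ is simple, so $J$ is a singleton and $S = T_j$ is a direct factor of $\Soc(G)$.

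The only genuinely delicate point is the last step of the converse: the hypothesis $N \cong S^k$ alone does \emph{not} force $S$ to be a factor (for instance $S$ could sit diagonally inside $S^k$), so the extra assumption $S \trianglelefteq N$ is exactly what rules out such diagonal embeddings and pins $S$ down to a single $T_j$. Establishing the structural fact about normal subgroups of a product of non-Abelian simple groups is the main technical ingredient, though it is entirely standard; everything else is bookkeeping with the permutation action on the socle factors.
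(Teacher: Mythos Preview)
Your argument is correct and complete. The paper itself does not give a proof here --- it simply cites \cite[Corollary~6.6]{GLWL1} --- so there is nothing to compare against; your self-contained derivation via the standard ``normal subgroups of a direct product of non-Abelian simple groups are subproducts'' fact, together with \Lem{LemmaSocle} and the permutation action of $G$ on the socle factors, is exactly the intended route and would serve well as a replacement for the bare citation. Your closing remark that the hypothesis $S \trianglelefteq N$ is what excludes diagonal embeddings is also on point and worth keeping.
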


\begin{proof}
See \cite[Corollary~6.6]{GLWL1}.
\end{proof}

\subsection{Main Result}

We first show that count-free WL can distinguish semisimple groups from those where $\rad(G)$ is non-trivial. In particular, this shows that count-free WL will distinguish direct products of finite non-Abelian simple groups from groups where $\rad(G)$ is non-trivial.

\begin{lemma} \label{lem:IsSemisimple}
Let $G$ be a semisimple group, and let $H$ be a group with Abelian normal subgroups. The count-free $(O(1), O(1))$-WL Version I will distinguish $G$ from $H$.
\end{lemma}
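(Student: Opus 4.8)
The plan is to isolate a single first-order property, expressible in the Version I logic $\mathcal{L}^{I}$ (Section~\ref{sec:Logics}) with $O(1)$ variables and $O(1)$ quantifier depth, that holds in $H$ but fails in $G$, and then to invoke the count-free analogue of the Brachter--Schweitzer correspondence established by \cite{GLWL1}. The property I would use asserts that \emph{some non-identity element commutes with all of its conjugates}: writing $a^{x} := x^{-1}ax$, it reads $\exists a \neq 1\ \forall x\ [a, a^{x}] = 1$.

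First I would check the two sides. Since $G$ is semisimple it has no non-trivial Abelian normal subgroup, whereas by hypothesis $H$ has one; call it $A$. For any $a \in A \setminus \{1\}$, every conjugate $a^{x}$ again lies in the Abelian group $A$, so $[a, a^{x}] = 1$ for all $x$, and $H$ satisfies the sentence. Conversely, suppose $a \neq 1$ in $G$ commutes with all of its conjugates. Conjugating the relation $[a, a^{x}] = 1$ by $y$ gives $[a^{y}, a^{xy}] = 1$, and as $x,y$ range over $G$ this shows that \emph{all} conjugates of $a$ pairwise commute. Hence $\ncl_{G}(a) = \langle a^{G}\rangle$ is a non-trivial Abelian normal subgroup of $G$, contradicting semisimplicity; so $G$ fails the sentence.

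Next I would express the property in $\mathcal{L}^{I}$ using only the ternary relation $R(x,y,z) \equiv (xy = z)$. Each ingredient is bounded: ``$w$ is the conjugate $x^{-1}ax$'' is captured by $\exists p\,(R(x,w,p) \wedge R(a,x,p))$, which forces $xw = ax$; ``$a$ and $w$ commute'' by $\exists q\,(R(a,w,q) \wedge R(w,a,q))$; and ``$a$ is not the identity'' by stating that $a$ is not a two-sided identity. Assembling these yields a sentence of the shape $\exists a\,\bigl(a \neq e \wedge \forall w\,(\mathrm{conj}(w,a) \rightarrow \mathrm{comm}(a,w))\bigr)$ with $O(1)$ variables and $O(1)$ quantifier depth. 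By the count-free correspondence of \cite{GLWL1} (Section~\ref{sec:Logics}), separation by a formula of $\mathcal{L}^{I}$ with $k+1$ variables and quantifier depth $r$ is equivalent to distinguishability by the count-free $(k,r)$-WL Version I; since our sentence needs only $O(1)$ variables and $O(1)$ quantifier depth, the count-free $(O(1),O(1))$-WL Version I distinguishes $G$ from $H$, and in particular the multiset of stable colors differs.

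The only delicate point I foresee is the purely group-theoretic step ``$a$ commutes with all of its conjugates $\iff$ $\langle a^{G}\rangle$ is Abelian,'' which drives the contradiction on the $G$-side; it is standard but should be stated with care, and the remainder is bookkeeping on the quantifier depth. As an alternative that stays inside the pebble-game formalism used elsewhere in the paper, Spoiler can instead follow the quantifier alternation of the sentence directly---selecting the witness $a$ on the $H$-side, an offending conjugating element on the $G$-side, and finally the product witnessing (non)commutation---so that the multiplication relation recorded by $R$ is present on one side and absent on the other, winning the count-free game with $O(1)$ pebbles in $O(1)$ rounds.
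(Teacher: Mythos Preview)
Your proposal is correct and follows essentially the same idea as the paper: both exploit the fact that an element of a non-trivial Abelian normal subgroup of $H$ commutes with all of its conjugates, whereas no non-identity element of the semisimple group $G$ can (since its normal closure would then be a non-trivial Abelian normal subgroup). The paper presents this directly in the count-free pebble game (Spoiler pebbles $h \in M$, then two conjugating elements $u,v \in G$ witnessing non-commuting conjugates), while you package the same content as an $\mathcal{L}^{I}$-sentence of bounded variables and quantifier depth and invoke the count-free correspondence of \cite{GLWL1}; your closing paragraph in fact sketches exactly the paper's pebble-game argument.
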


\begin{proof}
Let $M$ be an Abelian normal subgroup of $H$. Spoiler pebbles some element $h \in M$. Duplicator responds by pebbling some $g \in G$. Now as $M$ is Abelian and normal, $\text{ncl}_{H}(h) \leq M$ is Abelian. As $G$ does not have Abelian normal subgroups, $\text{ncl}_{G}(g)$ is non-Abelian.

Now recall that $\text{ncl}_{H}(h)$ is generated by the $H$-conjugates of $h$. As $\text{ncl}_{H}(h)$ is Abelian, the $H$-conjugates of $h$ all pairwise commute. On the other hand, there exist a pair of conjguates of $g$ that do not commute. Let $u, v \in G$ such that $ugu^{-1}, vgv^{-1}$ do not commute. Spoiler pebbles $u,v$ over the next two rounds. Regardless of what Duplicator pebbles, Spoiler wins with $O(1)$ additional pebbles and $O(1)$ additional rounds.
\end{proof}

Our next goal is to show that count-free WL can distinguish elements that belong to direct factors of $\Soc(G)$ from elements not in $\Soc(H)$. To do so, we will appeal to the \textit{rank lemma} \cite[Lemma~4.3]{GLWL1} of Grochow \& Levet. 

\begin{definition}[{\cite[Definition~4.1]{GLWL1}}]
Let $C \subseteq G$ be a subset of a group $G$ that is closed under taking inverses. We define the \emph{C-rank} of $g \in G$, denoted $\rk_C(g)$, as the minimum $m$ such that $g$ can be written as a word of length $m$ in the elements of $C$. If $g$ cannot be so written, we define $\rk_C(g) = \infty$.
\end{definition}

\begin{lemma}[{\cite[Rank Lemma~4.3]{GLWL1}}] \label{lem:RankLemma}
Suppose that $C \subseteq G$ is distinguished by the count-free $(k,r)$-WL Version I. If $\rk_{C}(g) \neq \rk_{C}(h)$, then Spoiler can win with $k+1$ pebbles and $\max\{r,\log d + O(1)\}$ rounds, where $d = \text{diam}(Cay(\langle C \rangle, C)) \leq |\langle C \rangle| \leq |G|$.
\end{lemma}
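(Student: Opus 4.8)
The plan is to convert the rank gap into a membership-in-$C$ gap by a recursive halving of the shorter word, in the spirit of the support-halving argument of \Lem{lem:Support}, so that the base case is exactly the situation the hypothesis resolves. Without loss of generality assume $\rk_C(g) < \rk_C(h)$, where we allow $\rk_C(h) = \infty$ when $h \notin \langle C \rangle$; since the two ranks differ, the smaller one, say $\rk_C(g) = m$, is finite, and we fix a geodesic expression $g = c_1 c_2 \cdots c_m$ with each $c_i \in C$. Spoiler begins by pebbling $g \mapsto h$ and maintains the invariant that the anchored pair $(g', h')$ satisfies $\rk_C(g') < \rk_C(h')$ while $\rk_C(g')$ at most halves from one step to the next.

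At a recursion step, write the anchored $g' = c_1' \cdots c_\ell'$ as a geodesic word with $\ell \ge 2$ and split it as $g' = g_1' g_2'$ with $g_1' = c_1' \cdots c_{\lceil \ell/2 \rceil}'$; then $\rk_C(g_1') + \rk_C(g_2') = \ell$ and each half has $C$-rank at most $\lceil \ell/2 \rceil$. Spoiler places two pebble pairs on $g_1', g_2'$ on the $G$-side, and Duplicator answers with $h_1', h_2'$. If $h_1' h_2' \ne h'$, the relation $g_1' g_2' = g'$ is not preserved and Spoiler wins immediately (Version~I marked equivalence fails), so assume $h_1' h_2' = h'$. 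By subadditivity of $C$-rank we have $\rk_C(h') \le \rk_C(h_1') + \rk_C(h_2')$, and since $\rk_C(h') > \ell = \rk_C(g_1') + \rk_C(g_2')$, some index $i$ satisfies $\rk_C(h_i') > \rk_C(g_i')$; this is where the $\infty$ case is absorbed, as $\infty$ dominates every finite rank and propagates to one half exactly as above. Spoiler re-anchors on $(g_i', h_i')$, keeping that pebble and freeing the other two, and recurses. Since $\rk_C(g')$ at most halves each time, after at most $\lceil \log_2 m \rceil + 1 \le \log_2 d + O(1)$ steps the anchored $g'$ has rank $1$, so $g' \in C$ while $\rk_C(h') \ge 2$ forces $h' \notin C$. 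The recursion thus uses only the anchor pebble together with two pebbles naming the halves, recycled at each step.

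At the resulting leaf we are in the position $g' \mapsto h'$ with $g' \in C$ and $h' \notin C$, already pebbled; invoking the hypothesis that $C$ is distinguished by the count-free $(k,r)$-WL Version~I, Spoiler finishes in $r$ further rounds using $k$ pebbles on the board, one of which is the anchor already in place. Since the recursion needed three pebble pairs and the base case needs $k$ (reusing the anchor), $k+1$ pebbles suffice throughout. The step I expect to be the main obstacle is the round bookkeeping: the recursion contributes $\log_2 d + O(1)$ rounds and the base case contributes $r$, and one must argue that these combine into $\max\{r, \log d + O(1)\}$ rather than their sum. This is cleanest from the coloring viewpoint: membership in $C$ is already encoded in the count-free coloring by round $r$, and the geodesic-halving argument shows that the rank coordinate is refined together with it, so both pieces of information are present by round $\max\{r, \log d + O(1)\}$. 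It is this simultaneous refinement, rather than a strictly sequential ``first detect $C$, then propagate rank'' reading of the pebble game, that the stated bound records.
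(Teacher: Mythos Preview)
The paper does not supply its own proof here; it quotes \cite[Lemma~4.3]{GLWL1} and only remarks that the argument there (stated for the bijective Version~II game) goes through for count-free Version~I. Your geodesic-halving argument is exactly that standard proof, so you have correctly reconstructed what the paper is relying on.

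The one soft spot is the round bound, and you are right to flag it. The sequential strategy you describe---halve until the anchor has $C$-rank~$1$, then invoke the $(k,r)$-WL distinguisher for $C$ at the leaf---naturally yields $r + O(\log d)$ rounds, not $\max\{r,\log d + O(1)\}$. Your ``simultaneous refinement'' paragraph gestures at a $\max$ bound but does not actually exhibit a strategy of that depth: read carefully, the coloring argument you sketch still says ``$C$ is determined by round $r$, and each further round doubles the detectable rank,'' which is additive. (A related minor point: in the count-free game each halving step costs two rounds, one per half, so the recursion alone is $2\log d + O(1)$ rather than $\log d + O(1)$.) None of this matters for the paper's applications, since in every use of the lemma both $r$ and $\log d$ are $O(\log\log n)$ or $\operatorname{poly}\log\log n$, so $\max$ and sum agree up to constants; but as a standalone proof of the stated $\max$ bound, this last step is not justified.
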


\begin{remark}
The way the rank lemma is formulated in \cite{GLWL1} relies on Duplicator selecting a bijection $f : G \to H$ to show that there exists an $x \in G$ such that $\rk(x) \neq \rk(f(x))$.  The remainder of the proof from \cite{GLWL1} goes through in the case of the count-free pebble game. We will use the rank lemma in cases where $g \mapsto h$ have been pebbled by assumption. Thus, the rank lemma will allow us to reason about count-free WL. The rank lemma in \cite{GLWL1} is also formulated in terms of WL Version II, but the argument goes through in the presence of WL Version I.
\end{remark}

\begin{lemma} \label{LemWeight1}
Let $G, H$ be a semisimple groups. Let $S \in \Fac(\Soc(G))$. If $g \in S$ and $h \not \in \Soc(H)$. The count-free $(O(1), O(\log \log n))$-WL Version I will distinguish $g$ from $h$.
\end{lemma}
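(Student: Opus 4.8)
The plan is to reduce the statement to a clean structural dichotomy about normal closures and then make that dichotomy visible to count-free WL via the Rank Lemma. First I would record the group theory. Write $N_g := \ncl_G(g)$ and $N_h := \ncl_H(h)$. Since $S \in \Fac(\Soc(G))$ and $g \in S$ is nontrivial, conjugation by $G$ permutes the socle factors in the $G$-orbit of $S$, so $N_g \cong S^k$ is a direct product of (pairwise isomorphic) non-Abelian simple groups, and the normal closure of $g$ computed \emph{inside} $N_g$ is the single factor $M_g := \ncl_{N_g}(g) \cong S$, which is non-Abelian simple. For $h$, set $M_h := \ncl_{N_h}(h)$. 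The key claim is that $M_h$ is \emph{not} non-Abelian simple: if it were, then being normally generated by $h$ it would be a minimal normal subgroup of $N_h$ (any $N_h$-normal subgroup inside a simple $M_h$ is $\{1\}$ or $M_h$), so $h \in \Soc(N_h)$; but $\Soc(N_h)$ is characteristic in $N_h \trianglelefteq H$, hence normal in $H$, and is a direct product of non-Abelian simple groups, so $\Soc(N_h) \leq \Soc(H)$ by \Lem{LemmaSocle}, forcing $h \in \Soc(H)$, a contradiction. Thus the invariant ``the normal closure of $x$ taken inside $\ncl(x)$ is non-Abelian simple'' separates $g$ from $h$.

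Next I would make this invariant detectable in $O(\log\log n)$ rounds. Assume $g \mapsto h$ has been pebbled. Membership in the conjugacy class $C := g^G$ (resp.\ $h^H$) is definable with $O(1)$ extra pebbles in $O(1)$ rounds (pebble a conjugator $x$ and check that $xgx^{-1}$ equals the pebbled target), so $C$ is distinguished. I then invoke the Rank Lemma (\Lem{lem:RankLemma}) with this $C$. The crucial point for the round bound is that the diameter of $\mathrm{Cay}(\langle C\rangle, C)$ is only $O(\log n)$: in a non-Abelian simple group every element is a product of $O(\log|S|)$ members of a fixed nontrivial conjugacy class, and in $N_g \cong S^k$ one handles the $k$ factors independently for a total of $O(\log|S^k|) = O(\log n)$. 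Hence $\log d + O(1) = O(\log\log n)$, count-free WL assigns each element its exact $C$-rank within $O(\log\log n)$ rounds, and in particular it separates $N_g = \langle C\rangle$ (finite rank) from its complement, so any surviving Duplicator bijection must carry $N_g$ to $N_h$ and preserve rank.

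Working now inside the distinguished set $N_x$, I would recover $M_x = \ncl_{N_x}(x)$ the same way: the restricted conjugacy class $\{\,n x n^{-1} : n \in N_x\,\}$ is distinguished (conjugators are constrained to the already-distinguished $N_x$), and a second application of the Rank Lemma, again with diameter $O(\log n)$, captures $M_x$ and its internal rank function in $O(\log\log n)$ further rounds. Finally I would convert ``$M_g$ is non-Abelian simple while $M_h$ is not'' into a Spoiler win: if $M_h$ is Abelian, Spoiler pebbles two non-commuting conjugates of $g$ inside $M_g$, which have no commuting image in $M_h$; if $M_h$ is non-Abelian but not simple, Spoiler pebbles a nontrivial $y \in M_h$ whose $N_h$-restricted normal closure is proper (witnessed by an element of infinite rank with respect to the $M_h$-conjugacy class of $y$), whereas in the simple group $M_g$ every nontrivial element normally generates all of $M_g$, so Duplicator cannot respond.

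I expect the last step to be the main obstacle: turning ``not non-Abelian simple'' into a single bounded-round, $O(1)$-pebble witness, and verifying that nesting a constant number of Rank-Lemma applications keeps the total at $O(\log\log n)$ rounds (each contributes $O(\log\log n)$ and the nesting depth is constant). A secondary technical point I would pin down is the $O(\log n)$ diameter bound for the conjugacy-class Cayley graph of $S^k$; I would source this from covering-number results for finite simple groups rather than from the (open) Babai diameter conjecture, since $C$ is an entire conjugacy class and not an arbitrary generating set.
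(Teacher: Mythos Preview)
Your approach is correct but takes a genuinely different route from the paper. The paper's proof pebbles the seven Babai--Kantor--Lubotzky generators of $S$ (\Thm{thm:BabaiKantorLubotzky}), so the Rank Lemma immediately singles out $S$ with diameter $O(\log|S|)$; it then pebbles two conjugators $u,v\in H$ witnessing that two distinct conjugates of the corresponding $T\leq H$ fail to commute, which cannot happen for $S$ since $S\unlhd\Soc(G)$ and \Lem{LemmaSocle} forces $\ncl_H(T)$ not to be a direct product of simple groups. That is one Rank-Lemma call plus one commutation check, all with tools already in the paper.

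Your argument instead never pebbles explicit generators: you use the $G$-conjugacy class of $g$ and appeal to Liebeck--Shalev covering-number bounds for the $O(\log n)$ diameter, then nest three Rank-Lemma applications to isolate $N_g$, then $M_g$, then a proper normal closure inside $M_h$. The structural invariant you exploit---$M_g=\ncl_{N_g}(g)$ is simple while $M_h=\ncl_{N_h}(h)$ is not---is different from, and a bit finer than, the paper's ``conjugates of $S$ commute'' test. A couple of small clean-ups: your case~(a) is actually vacuous, since $N_h$ inherits semisimplicity from $H$ (its solvable radical is characteristic in $N_h\trianglelefteq H$, hence normal in $H$), so $M_h\trianglelefteq N_h$ can never be Abelian; and in case~(b) your parenthetical mixes ``$N_h$-restricted normal closure'' with ``$M_h$-conjugacy class''---either choice works, but they should agree. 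The trade-off: the paper's proof is shorter and self-contained, while yours avoids fixing generators at the cost of importing an external covering-number theorem and stacking more Rank-Lemma invocations.
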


\begin{proof}
We consider the pebble game, starting from $g \mapsto h$. Suppose first that $h$ is not contained in a subgroup $T \leq H$ that is isomorphic to $S$. In this case, Spoiler pebbles generators $s_{1}, \ldots, s_{7}$ for $S$ such that every element of $S$ can be written as a word of length $O(\log^{C} |S|)$ for some absolute constant $C$ over $s_{1}, \ldots, s_{7}$ (see \Thm{thm:BabaiKantorLubotzky}). Regardless of Duplicator's choice of $t_{1}, \ldots, t_{7}$, the map $(g, s_{1}, \ldots, s_{7}) \mapsto (h, t_{1}, \ldots, t_{7})$ does not extend to an isomorphism. So by the \textit{rank lemma} of Grochow \& Levet \cite[Lemma~4.3]{GLWL1} (recalled as \Lem{lem:RankLemma}), Spoiler wins with $O(1)$ additional pebbles and $O(\log \log n)$ additional rounds. %Here, the bound on the number of rounds follows from the fact that every element of $S$ can be written as a word of length $O(\log^{C} |S|)$ over $s_{1}, \ldots, s_{7}$. 

So suppose now that there is a copy $T \leq H$ of $S$ containing $h$. Spoiler pebbles generators $s_{1}, \ldots, s_{7}$ for $S$ such that every element of $S$ can be written as a word of length $O(\log^{C} |S|)$ over $s_{1}, \ldots, s_{7}$ (see \Thm{thm:BabaiKantorLubotzky}). We may assume that Duplicator responds by pebbling generators $t_{1}, \ldots, t_{7}$ for $T$. Otherwise, by \Thm{thm:BabaiKantorLubotzky} and \cite[Lemma~4.3]{GLWL1}, Spoiler wins. 

As $h \not \in \Soc(H)$, we have that $T \not \in \Fac(\Soc(H))$. As $S \in \Fac(\Soc(G))$, the normal closure $\text{ncl}(S)$ is minimal normal in $G$ \cite[Exercise 2.A.7]{Isaacs2008FiniteGT}. As $h$ is not even contained in $\text{Soc}(H)$, we have by Lemma \ref{LemmaSocle} that $\text{ncl}(T)$ is not a direct product of non-Abelian simple groups, so $\text{ncl}(S) \not\cong \text{ncl}(T)$. We note that $\text{ncl}(S) = \langle \{ gSg^{-1} : g \in G \} \rangle$. 

As $\text{ncl}(T)$ is not isomorphic to a direct power of $S$, there are conjugates $u T u^{-1}, vTv^{-1} \neq T$ such that $uTu^{-1}, vTv^{-1}$ do not commute, by \Lem{LemmaSocle}. Yet since $S \unlhd \Soc(G)$, any two distinct conjugates of $S$ do commute. Spoiler pebbles such a $u, v \in H$, and Duplicator responds by pebbling some $u', v' \in G$. Regardless of Duplicator's choice of $u',v'$, the map $(s_{1}, \ldots, s_{7}, u', v') \mapsto (t_{1}, \ldots, t_{7}, u, v)$ does not extend to an isomorphism. By \Thm{thm:BabaiKantorLubotzky}, every element in $S$ can be written as a word of length $O(\log^{C} |S|)$ for some absolute constant $C$. Furthermore, as we need only check commutativity between $uTu^{-1}, vTv^{-1}$ and between $u'S(u')^{-1}, v'S(v')^{-1}$, we need only consider conjugates of elements in $S$ or $T$. So a distinguishing word has length at most $O(\log^{C} |S|) \leq O(\log^{C} n)$. Thus, by the \textit{rank lemma} \cite[Lemma~4.3]{GLWL1}, Spoiler wins with $O(1)$ additional pebbles and $O(\log \log n)$ additional rounds, as desired.
\end{proof}

\begin{remark} \label{rmk:ColoringSocleFactors}
Let $T \in \Fac(\Soc(G))$, and let $(t_{1}, \ldots, t_{7})$ be a generating sequence for $T$ prescribed by \Thm{thm:BabaiKantorLubotzky}, where every element of $T$ can be realized as a word of length $O(\log^{C} |T|)$ for some absolute constant $c$. Let $K \leq G$ be an isomorphic copy of $T$ such that $K \not \in \Fac(\Soc(G))$, and let $(k_{1}, \ldots, k_{7})$ be a generating sequence for $K$ such that the map $t_{i} \mapsto k_{i}$ for all $i \in [7]$ extends to an isomorphism. The proof of \Lem{LemWeight1} provides that the count-free $(O(1), O(\log \log n))$-WL Version I will assign different colors to $(t_{1}, \ldots, t_{7}, 1, \ldots, 1)$ and $(k_{1}, \ldots, k_{7}, 1, \ldots, 1)$.
\end{remark}

\begin{lemma} \label{LemIdentifySocle}
Let $G, H$ be a semisimple groups. Let $g \in \Soc(G)$ and $h \not \in \Soc(H)$, and suppose that $g \mapsto h$ has been pebbled. The count-free $(O(1), O(\log \log n))$-WL Version I will distinguish $g$ from $h$.
\end{lemma}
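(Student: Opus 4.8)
The plan is to reduce to the single-factor case already handled by \Lem{LemWeight1} via a divide-and-conquer strategy on the socle factors, in the same spirit as the support-halving argument of \Lem{lem:Support}. Since $G$ is semisimple, $\Soc(G) = S_{1} \times \cdots \times S_{m}$ is a direct product of non-Abelian simple groups, and any $g \in \Soc(G)$ factors uniquely as $g = c_{1} \cdots c_{s}$, where the $c_{i}$ are its nontrivial components lying in distinct factors. Because the smallest non-Abelian simple group has order $60$, we have $m \leq \log_{60} n = O(\log n)$, so $\log_{2} m = O(\log \log n)$; this is what will control the round count.

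First I would record the invariant to be maintained: a pebbled pair $g^{*} \mapsto h^{*}$ with $g^{*} \in \Soc(G)$, $g^{*} \neq 1$, and $h^{*} \not\in \Soc(H)$. Observe that if ever $g^{*} = 1$ while $h^{*} \neq 1$ (which is forced by $h^{*} \not\in \Soc(H)$), the single-pebble relation $g^{*} g^{*} = g^{*}$ holds in $G$ but $h^{*} h^{*} = h^{*}$ fails in $H$, so the current pebbling already violates marked equivalence and Spoiler wins; hence we may assume $g^{*} \neq 1$ throughout. For the inductive step, Spoiler splits $g^{*} = c_{1} \cdots c_{s}$ into two commuting halves $a = c_{1} \cdots c_{\lceil s/2 \rceil}$ and $b = c_{\lceil s/2 \rceil + 1} \cdots c_{s}$ with $ab = g^{*}$, and pebbles $a \mapsto a'$ and $b \mapsto b'$. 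If $a' b' \neq h^{*}$, then the relation $ab = g^{*}$ holds in $G$ but not in $H$, so Spoiler wins immediately. Otherwise $a' b' = h^{*}$; since $\Soc(H)$ is a (normal) subgroup and $h^{*} \not\in \Soc(H)$, at least one of $a', b'$ lies outside $\Soc(H)$. Spoiler recurses on that half, reusing the pebbles on the discarded half and on the old $g^{*}$, so that the support size drops to at most $\lceil s/2 \rceil$ while only $O(1)$ pebbles are ever on the board.

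After $O(\log m) = O(\log \log n)$ such rounds, each costing $O(1)$ rounds and $O(1)$ pebbles, the support size reaches $1$, so the surviving pair is $g^{*} \mapsto h^{*}$ with $g^{*}$ a nontrivial element of a single factor $S \in \Fac(\Soc(G))$ and $h^{*} \not\in \Soc(H)$. At this base case I would invoke \Lem{LemWeight1} directly, which finishes with $O(1)$ additional pebbles and $O(\log \log n)$ additional rounds. In total Spoiler uses $O(1)$ pebbles and $O(\log \log n)$ rounds, as required.

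The main obstacle is not any single step but verifying that the halving is self-sustaining: one must check that at least one half always lands outside $\Soc(H)$ (which uses only that $\Soc(H)$ is closed under multiplication), and that the base case genuinely reduces to \Lem{LemWeight1} rather than to a degenerate identity element---handled by the $g^{*} g^{*} = g^{*}$ observation above. A secondary point to confirm is that the product relations defining each split are visible to Version I marked equivalence, which holds because Version I records exactly the relations $u_{i} u_{j} = u_{\ell}$ among pebbled elements.
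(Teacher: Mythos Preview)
Your proof is correct and follows essentially the same approach as the paper: both reduce to \Lem{LemWeight1} via a depth-$O(\log\log n)$ divide-and-conquer on the (at most $O(\log n)$) socle components, exploiting that $\Soc(H)$ is a subgroup so at least one half must land outside it. The paper packages this recursion by invoking the rank lemma (\Lem{lem:RankLemma}) with $C = \bigcup_{S \in \Fac(\Soc(G))} S$, whereas you unroll the same halving argument explicitly.
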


\begin{proof}
Define:
\[
C := \bigcup_{S \in \Fac(\Soc(G))} S.
\]

\noindent By \Lem{LemWeight1}, $C$ is identified by the count-free $(O(1), O(\log \log n))$-WL. Now $\Soc(G)$ has at most $O(\log |G|)$ non-Abelian simple direct factors. Thus, every element in $\Soc(G)$ can be written as a word of length $|\text{Fac}(\Soc(G))| \leq O(\log n)$ over $C$. So $g$ and $h$ will receive different colors in at most $O(\log \log n)$ rounds. 
\end{proof}

\begin{theorem}
Let $G$ be a direct product of non-Abelian simple groups, and let $H$ be arbitrary. We can decide whether $G \cong H$ in $\beta_{1}\textsf{MAC}^{0}(\textsf{FOLL})$.
\end{theorem}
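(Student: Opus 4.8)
The plan is to follow the Grochow--Levet template used throughout the paper: run the count-free $(O(1), O(\log\log n))$-WL Version I algorithm---which is $\textsf{FOLL}$-computable by Grohe \& Verbitsky---show that the resulting multiset of colors distinguishes $G$ from $H$ whenever $G \not\cong H$, and then apply a single $\beta_{1}\textsf{MAC}^{0}$ layer to detect the difference. For that last layer I would guess a color class $C$ using $O(\log n)$ nondeterministic bits, and feed a $1$ to one $\textsf{Majority}$ gate for each $k$-tuple of $G$ lying in $C$ and a $0$ for each $k$-tuple of $H$ lying in $C$, exactly as in \Thm{thm:SecCountFreeCFI} and the coprime case. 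Thus the genuine work is the combinatorial claim that the color multisets differ once $G \not\cong H$.

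First I would dispose of the easy cases. If $|G| \neq |H|$, the color multisets already differ (Spoiler wins the count-free game at round $0$). If $H$ is not semisimple, i.e. has a nontrivial Abelian normal subgroup, then since $G$ is semisimple, \Lem{lem:IsSemisimple} shows count-free $(O(1),O(1))$-WL separates them, so again the color multisets differ. Hence I may assume $|G| = |H|$ and that $H$ is semisimple. The crucial group-theoretic reduction is that $G$, being a direct product of non-Abelian simple groups, satisfies $G = \Soc(G)$. Consequently, if $H$ is semisimple with $|H| = |G|$ and $\Soc(G) \cong \Soc(H)$ as abstract groups, then $|\Soc(H)| = |\Soc(G)| = |H|$ forces $\Soc(H) = H$, whence $H = \Soc(H) \cong \Soc(G) = G$. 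So under the standing assumptions $G \not\cong H$ is \emph{equivalent} to $\Soc(G) \not\cong \Soc(H)$. (In particular the full permutational-isomorphism machinery of \Lem{CharacterizeSemisimple} is not needed: the action of $G$ on its socle is inner because $G$ equals its own socle.)

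It then remains to argue that the count-free color multiset determines the multiset of isomorphism types of the simple direct factors of the socle. Here I would invoke \Lem{LemWeight1}, \Rmk{rmk:ColoringSocleFactors}, and \Lem{LemIdentifySocle}: after individualizing a Babai--Kantor--Lubotzky generating $7$-tuple (\Thm{thm:BabaiKantorLubotzky}), count-free WL Version I assigns such a tuple a color that (i) is invariant under $\Aut(G)$ and hence depends only on the isomorphism type of the factor it generates, and (ii) distinguishes tuples generating a genuine socle factor from tuples generating an isomorphic but non-socle copy. Reading off the multiplicities of these colors therefore recovers, for each isomorphism type $S$, the number of socle factors isomorphic to $S$, and so determines $\Soc(G)$ and $\Soc(H)$ up to isomorphism; when $\Soc(G) \not\cong \Soc(H)$, some such color has different multiplicity in $G$ than in $H$, and the socle-membership coloring of \Lem{LemIdentifySocle} additionally handles the case $\Soc(H) \subsetneq H$ via the ``in-socle'' color count. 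Combining the cases, the color multisets differ exactly when $G \not\cong H$, as required.

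The main obstacle I anticipate is this third step: establishing rigorously that count-free WL \emph{Version I}---which tracks only multiplication relations among pebbled tuples, not the abstract isomorphism type of generated subgroups as Version II does---nonetheless separates generating tuples of non-isomorphic simple factors and pins down their multiplicities within $O(\log\log n)$ rounds. This is precisely where the rank lemma (\Lem{lem:RankLemma}) combined with the $O(\log|S|)$ word-length guarantee of \Thm{thm:BabaiKantorLubotzky} must be deployed, as in the proof of \Lem{LemWeight1}, so that the round count stays $O(\log\log n)$ rather than blowing up to $O(\log n)$. The remaining bookkeeping---order detection, socle membership, and the reduction to $\Soc(G)\not\cong\Soc(H)$---is routine given the earlier lemmas.
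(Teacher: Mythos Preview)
Your proposal is correct and follows essentially the same route as the paper: reduce to $H$ semisimple via \Lem{lem:IsSemisimple}, use \Lem{LemIdentifySocle} to handle $\Soc(H)\subsetneq H$, and for the remaining case compare multiplicities of the count-free WL color of a fixed Babai--Kantor--Lubotzky $7$-tuple (via \Rmk{rmk:ColoringSocleFactors} and the rank lemma) using the $\beta_{1}\textsf{MAC}^{0}$ layer. One small imprecision: in your point (i) the ``hence'' does not follow---Aut-invariance alone does not force the color to depend only on the isomorphism type of the factor, since two BKL tuples of the same factor need not be $\Aut(G)$-conjugate---but what the argument actually needs (and what the cited lemmas give) is that the color class of a \emph{fixed} BKL tuple $(x_1,\dots,x_7)$ has size $m_K\cdot|\Aut(K)|$ in each group, so the multiplicity comparison still goes through.
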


\begin{proof}
By \Lem{lem:IsSemisimple}, we may assume that $H$ is semisimple; otherwise, the count-free $(O(1), O(1))$-WL will distinguish $G$ and $H$. As $G$ is a direct product of non-Abelian simple groups, we have that $G = \Soc(G)$. Now if $H \neq \Soc(H)$, then by \Lem{LemIdentifySocle}, Spoiler can win with $O(1)$ pebbles and $O(\log \log n)$ rounds.

We may now assume that $H$ is a direct product of non-Abelian simple groups. If $G \not \cong H$, then $\Fac(\Soc(G)) \neq \Fac(\Soc(H))$. In particular, there exists a non-Abelian direct factor $K$ of $\Soc(G)$ whose isomorphism class appears with greater multiplicity than in $\Fac(\Soc(H))$. By \Thm{thm:BabaiKantorLubotzky}, there exist generators $K = \langle x_{1}, \ldots, x_{7} \rangle$ such that every element of $K$ can be written as a word of length $O(\log^{C} |K|)$ (for some absolute constant $C$) over $x_{1}, \ldots, x_{7}$. 

We will use the count-free $(O(1), O(\log \log n))$-WL Version I, which by \Rmk{rmk:ColoringSocleFactors}, will assign two non-isomorphic finite simple groups different colors. Furthermore, by \Rmk{rmk:ColoringSocleFactors}, if $T = \langle t_{1}, \ldots, t_{7} \rangle \leq H$ is a copy of $K$ and $T \not \in \Fac(\Soc(H))$, then $(x_{1}, \ldots, x_{7}, 1, \ldots, 1)$ and $(t_{1}, \ldots,  t_{7}, 1, \ldots, 1)$ will receive different colors after $O(\log \log n)$ rounds. So in particular, any tuple $(t_{1}, \ldots, t_{7},1, \ldots, 1)$ over $H$ receiving the same color as $(x_{1}, \ldots, x_{7}, 1, \ldots, 1)$ after $O(\log \log n)$ rounds will correspond to an element of $\Fac(\Soc(H))$ such that the map $(x_{1}, \ldots, x_{7}) \mapsto (t_{1}, , \ldots, t_{7})$ extends to an isomorphism. Using $O(\log n)$ bits, we guess the color class corresponding to $(x_{1}, \ldots, x_{7}, 1, \ldots, 1)$. We use a single $\textsf{Majority}$ gate to compare the multiplicity of $\chi_{O(\log \log n)}((x_{1}, \ldots, x_{7}, 1, \ldots, 1))$ in $G$ to that in $H$. Precisely, for each tuple over $G$ that has the same color as $(x_{1}, \ldots, x_{7}, 1, \ldots, 1)$, we feed a $1$ to the \textsf{Majority} gate. And for each tuple over $H$ that has the same color $(x_{1}, \ldots, x_{7}, 1, \ldots, 1)$, we feed a $0$ to the $\textsf{Majority}$ gate. The parallel WL implementation due to Grohe \& Verbitsky \cite{GroheVerbitsky} stores indicators at each round as to whether two tuples receive the same color. So identifying the tuples that have the same color as $(x_{1}, \ldots, x_{7}, 1, \ldots, 1)$ is $\textsf{AC}^{0}$-computable. Thus, the entire procedure is computable using a $\beta_{1}\textsf{MAC}^{0}(\textsf{FOLL})$ circuit.

The result now follows.
\end{proof}

\section{Higher-arity Count-Free Lower Bound}

We generalize the count-free pebble game of Cai, F\"urer, \& Immerman \cite{CFI} to the $q$-ary setting. Fix $q \geq 1$. In the $q$-ary, $(k+1)$-pebble game, we again have two players: Spoiler and Duplicator. Spoiler wishes to show that the two groups $G$ and $H$ are non-isomorphic. Duplicator wishes to show that the two groups are isomorphic. Each round of the game proceeds as follows.
\begin{enumerate}
\item Spoiler picks up pebble pairs $(p_{i_{1}}, p_{i_{1}}'), \ldots, (p_{i_{d}}, p_{i_{d}}')$, where $1 \leq d \leq q$.

\item The winning condition is checked. This will be formalized later.

\item For each $j \in [d]$, Spoiler selects $j$ elements of one group on which to place the $j$ pebbles. After Spoiler has placed all their pebbles, Duplicator then responds by placing the corresponding pebbles on $j$ elements of the other group.
\end{enumerate}

\noindent \\ Let $g_{1}, \ldots, g_{m}$ be the pebbled elements of $G$ at the end of step $1$ of the given round, and let $h_{1}, \ldots, h_{m}$ be the corresponding pebbled elements of $H$. Spoiler wins precisely if the map $g_{i} \mapsto h_{i}$ does not extend to an isomorphism of $\langle g_{1}, \ldots, g_{m} \rangle$ and $\langle h_{1}, \ldots, h_{m} \rangle$. Duplicator wins otherwise. Spoiler wins, by definition, at round $0$ if $G$ and $H$ do not have the same number of elements. In the $q = 1$ case, the $(k+1)$-pebble, $r$-round count-free game is equivalent to the count-free $(k,r)$-WL (see \cite{CFI} for the setting of graphs and \cite{GLWL1} for the setting of groups).

We establish the following.

\begin{theorem}
Let $q \geq 1, n \geq 5$. Let $G_{n} := (\mathbb{Z}/2\mathbb{Z})^{qn} \times (\mathbb{Z}/4\mathbb{Z})^{qn}$ and $H_{n} := (\mathbb{Z}/2\mathbb{Z})^{qn-2} \times (\mathbb{Z}/4\mathbb{Z})^{qn+1}$. Duplicator has a winning strategy in the $q$-ary, $qn/4$-pebble game.
\end{theorem}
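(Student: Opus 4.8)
The plan is to exhibit an explicit Duplicator strategy and verify that it survives any play using at most $qn/4$ pebbles. First I would record the structural facts separating $G_n$ from $H_n$. Both are abelian of exponent $4$ and of equal order $2^{3qn}$, so Spoiler does not win at round $0$, yet their invariants differ only slightly. Writing $A$ for either group, the squares form a chain $2A \subseteq A[2] := \{x : 2x = 0\} \subseteq A$, and as $\mathbb{F}_2$-spaces $\dim_{\mathbb{F}_2} 2A$ counts the $\mathbb{Z}/4\mathbb{Z}$-summands while $\dim_{\mathbb{F}_2} A[2]/2A$ counts the $\mathbb{Z}/2\mathbb{Z}$-summands. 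For $G_n$ these dimensions are $qn$ and $qn$; for $H_n$ they are $qn+1$ and $qn-2$. The key quantitative point is that \emph{every} relevant quotient ($2A$, $A[2]$, $A[2]/2A$, $A/A[2]$) has dimension at least $qn-2$ in \emph{both} groups, so the two groups are indistinguishable up to any bounded codimension.

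The core of the strategy is an invariant that Duplicator maintains after every round. If $g_1,\dots,g_m \in G_n$ and $h_1,\dots,h_m \in H_n$ are the currently pebbled elements, Duplicator guarantees that $g_i \mapsto h_i$ extends to a group isomorphism $\phi\colon \langle g_1,\dots,g_m\rangle \to \langle h_1,\dots,h_m\rangle$ which in addition \emph{respects being a square in the ambient group}: $\phi(\langle \bar g\rangle \cap 2G_n) = \langle \bar h\rangle \cap 2H_n$. Preservation of element orders, and hence of membership in $A[2]$, is automatic because $\phi$ is an isomorphism; the square predicate is the single genuinely extra condition, and it is exactly the feature on which the two groups differ. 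This invariant immediately yields the winning condition, since $\phi$ witnesses that the pebbled map extends to an isomorphism of generated subgroups.

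The heart of the argument is an \emph{Extension Lemma}: if the invariant holds with $\operatorname{rank}\langle \bar g\rangle = m < qn/4$, then for any new element $g \in G_n$ Duplicator can choose $h \in H_n$ so that the invariant persists for the enlarged configuration, and symmetrically with $G_n$ and $H_n$ interchanged. I would prove it by cases on the order $k \in \{1,2,4\}$ of $g$ modulo $\langle \bar g\rangle$. For $k=1$ take $h = \phi(g)$. For $k=2$, writing $w = 2g \in \langle \bar g\rangle \cap 2G_n$, the invariant gives $\phi(w) \in 2H_n$, so $\phi(w)$ has square roots in $H_n$; Duplicator selects such a root $h \notin \langle \bar h\rangle$ lying in the correct square-class and avoiding the finitely many accidental order-two coincidences, and the case $k=4$ is handled analogously with an order-$4$ element. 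In each case the elements Duplicator must avoid, together with the square-class requirement, impose only $O(m)$ linear constraints inside the $\mathbb{F}_2$-spaces $H_n[2]$ and $H_n/H_n[2]$ (each of dimension $\ge qn-2$), so a valid choice exists precisely because $m$ is a small fraction of $qn$; the constant $1/4$ leaves ample slack for this dimension count and for Spoiler's freedom to play in either group. A $q$-ary round, in which Spoiler places up to $q$ pebbles on one group before Duplicator responds, is handled by applying the Extension Lemma internally $q$ times in sequence (Duplicator sees all placements first), and since at most $qn/4$ pebbles ever sit on the board simultaneously the bound $m < qn/4$ always holds.

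I expect the main obstacle to be verifying that the square-preserving invariant is genuinely \emph{maintainable} rather than merely necessary: one must confirm that choosing $h$ to match the ambient square predicate never conflicts with the $O(m)$ other relations the enlarged subgroup imposes, and that this remains possible in both groups despite their differing socle dimensions. This is where the slack in $qn/4$ and the uniform lower bound $qn-2$ on all filtration quotients are used. It is also the step that explains why counting is essential: Duplicator's response need only realize the \emph{existence}, not the \emph{number}, of square roots of $\phi(w)$, and that number ($|G_n[2]|$ versus $|H_n[2]|$) is exactly the quantity on which the two groups disagree.
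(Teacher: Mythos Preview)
Your proposal is correct and follows essentially the same architecture as the paper's proof. Your invariant---that the partial isomorphism $\phi$ additionally satisfies $\phi(\langle \bar g\rangle \cap 2G_n) = \langle \bar h\rangle \cap 2H_n$---is a cleaner, intrinsic reformulation of the paper's condition that ``the same number of $\mathbb{Z}/2\mathbb{Z}$ direct factors of the generated subgroup lie inside copies of $\mathbb{Z}/4\mathbb{Z}$ in the ambient group,'' and both arguments reduce the $q$-ary round to $q$ sequential applications of the one-pebble extension step (the paper cites this step as \cite[Theorem~7.10]{GLWL1}, whereas you sketch it directly via the case split on $|g \bmod \langle \bar g\rangle|$). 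The only substantive difference is presentational: you carry out the dimension-counting for the Extension Lemma explicitly, while the paper black-boxes it from the $q=1$ result in prior work.
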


\begin{proof}
The proof is by induction on the number of pebbles. On the first round, Spoiler picks up $i \in [q]$ pebbles and places them on the board- without loss of generality, placing all pebbles on $G$. Let $(g_{1}, \ldots, g_{i})$ be the pebbled elements of $G$. Duplicator responds by placing $i$ pebbles on $(h_{1}, \ldots, h_{i})$. As $n \geq 5$ and $i \leq q$, Duplicator is able to pebble elements such that:
\begin{itemize}
\item The map $g_{j} \mapsto h_{j}$ extends to an isomorphism of $\langle g_{1}, \ldots, g_{i} \rangle \cong \langle h_{1}, \ldots, h_{i} \rangle \cong (\mathbb{Z}/2\mathbb{Z})^{a} \times (\mathbb{Z}/4\mathbb{Z})^{b}$, and

\item Exactly $0 \leq a_{1} \leq a$ of the $\mathbb{Z}/2\mathbb{Z}$ direct factors of $\langle g_{1} \ldots, g_{i} \rangle$ are contained in copies of $\mathbb{Z}/4\mathbb{Z}$ in $G_{n}$ if and only if exactly $a_{1}$ of the the $\mathbb{Z}/2\mathbb{Z}$ direct factors of $\langle h_{1} \ldots, h_{i} \rangle$ are contained in copies of $\mathbb{Z}/4\mathbb{Z}$ in $H_{n}$.
\end{itemize} 

\noindent \\ Now fix $1 \leq \ell < qn/4$, and suppose that Duplicator has a winning strategy in the $\ell$-pebble game. In particular, suppose that $g_{1}, \ldots, g_{m} \in G$ and $h_{1}, \ldots, h_{m} \in H$ have been pebbled, and that the following conditions hold:
\begin{itemize}
\item The map $\alpha : g_{j} \mapsto h_{j}$ extends to a marked isomorphism of $\hat{\alpha} : \langle g_{1}, \ldots, g_{i} \rangle \cong \langle h_{1}, \ldots, h_{i} \rangle$ of a subgroup of the form $(\mathbb{Z}/2\mathbb{Z})^{a} \times (\mathbb{Z}/4\mathbb{Z})^{b}$, and

\item Exactly $0 \leq a_{1} \leq a$ of the $\mathbb{Z}/2\mathbb{Z}$ direct factors of $\langle g_{1} \ldots, g_{i} \rangle$ are contained in copies of $\mathbb{Z}/4\mathbb{Z}$ in $G_{n}$ if and only if exactly $a_{1}$ of the the $\mathbb{Z}/2\mathbb{Z}$ direct factors of $\langle h_{1} \ldots, h_{i} \rangle$ are contained in copies of $\mathbb{Z}/4\mathbb{Z}$ in $H_{n}$.
\end{itemize} 

\noindent \\ As Duplicator has a winning strategy in the $\ell$-pebble game, it is not to Spoiler's advantage to move any existing pebbles. Thus, Spoiler picks up $p \in [q]$ new pebbles and places them on the board. If Spoiler places a given pebble on an element in $\langle g_{1}, \ldots, g_{m} \rangle$, then Duplicator may respond by placing corresponding pebble on the corresponding element of $\langle h_{1}, \ldots, h_{m} \rangle$. So we may assume that Spoiler places all $p$ pebbles on elements outside of $\langle g_{1}, \ldots, g_{m} \rangle$. Call these elements $g_{1}', \ldots, g_{p}'$. Denote $h_{1}', \ldots, h_{p}'$ to be the corresponding elements that Duplicator pebbles. We will show that Duplicator can select $h_{1}', \ldots, h_{p}'$ to win at this round. 

By similar argument as in the base case, Duplicator may select $h_{1}', \ldots, h_{p}'$ such that the map $g_{j}' \mapsto h_{j}'$ for all $j \in [p]$ extends to an isomorphism of $\langle g_{1}', \ldots, g_{p}' \rangle$ and $\langle h_{1}', \ldots, h_{p}' \rangle$. It remains to show that the map $(g_{1}, \ldots, g_{m}, g_{1}', \ldots, g_{p}') \mapsto (h_{1}, \ldots, h_{m}, h_{1}', \ldots, h_{p}')$ extends to an isomorphism of:
\[
\langle g_{1}, \ldots, g_{m}, g_{1}', \ldots, g_{p}' \rangle \text{ and } \langle h_{1}, \ldots, h_{i}, h_{1}', \ldots, h_{p}' \rangle.
\]

\noindent We now prove the inductive step, by induction on $j$, the number of new pebbles used. Precisely, we will show the following.

\begin{quote}
\textbf{Claim:} For each $1 \leq j \leq p$, we have that the map: $(g_{1}, \ldots, g_{m}, g_{1}', \ldots, g_{j}') \mapsto (h_{1}, \ldots, h_{m}, h_{1}', \ldots, h_{j}')$ extends to an isomorphism of:
\[
\langle g_{1}, \ldots, g_{m}, g_{1}', \ldots, g_{j}' \rangle \text{ and } \langle h_{1}, \ldots, h_{m}, h_{1}', \ldots, h_{j}' \rangle,
\]

\noindent which is a subgroup of the form $(\mathbb{Z}/2\mathbb{Z})^{a} \times (\mathbb{Z}/2\mathbb{Z})^{b}$. Furthermore, exactly $a_{1} \leq b_{j} \leq a$ of the $\mathbb{Z}/2\mathbb{Z}$ direct factors of $\langle g_{1} \ldots, g_{i} \rangle$ are contained in copies of $\mathbb{Z}/4\mathbb{Z}$ in $G_{n}$ if and only if exactly $b_{j}$ of the the $\mathbb{Z}/2\mathbb{Z}$ direct factors of $\langle h_{1} \ldots, h_{i} \rangle$ are contained in copies of $\mathbb{Z}/4\mathbb{Z}$ in $H_{n}$.

\begin{proof}[Proof of Claim]
The case of $j = 1$ is handled precisely by \cite[Theorem~7.10]{GLWL1}. Now fix $j \geq 1$, and suppose that the map $(g_{1}, \ldots, g_{m}, g_{1}', \ldots, g_{j}') \mapsto (h_{1}, \ldots, h_{m}, h_{1}', \ldots, h_{j}')$ extends to an isomorphism of:
\[
\langle g_{1}, \ldots, g_{m}, g_{1}', \ldots, g_{j}' \rangle \text{ and } \langle h_{1}, \ldots, h_{m}, h_{1}', \ldots, h_{j}' \rangle,
\]

\noindent which is a subgroup of the form $(\mathbb{Z}/2\mathbb{Z})^{a} \times (\mathbb{Z}/4\mathbb{Z})^{b}$. Furthermore, suppose that $a_{1} \leq b_{j} \leq a$ of the $\mathbb{Z}/2\mathbb{Z}$ direct factors of $\langle g_{1}, \ldots, g_{i}, g_{1}', \ldots, g_{j}' \rangle$ are contained in copies of $\mathbb{Z}/4\mathbb{Z}$ within $G_{n}$, and similarly $b_{j}$ of the $\mathbb{Z}/2\mathbb{Z}$ direct factors of $\langle h_{1}, \ldots, h_{i}, h_{i}', \ldots, h_{j}' \rangle$ are contained in copies of $\mathbb{Z}/4\mathbb{Z}$ within $H_{n}$.

Using the fact that $\ell \leq qn/4$, we may again apply \cite[Theorem~7.10]{GLWL1} to deduce that the map $(g_{1}, \ldots, g_{m}, g_{1}', \ldots, g_{j+1}') \mapsto (h_{1}, \ldots, h_{m}, h_{1}', \ldots, h_{j+1}')$ extends to an isomorphism of:
\[
\langle g_{1}, \ldots, g_{m}, g_{1}', \ldots, g_{j+1}' \rangle \text{ and } \langle h_{1}, \ldots, h_{m}, h_{i}', \ldots, h_{j+1}' \rangle,
\]

\noindent which is a subgroup of the form $(\mathbb{Z}/2\mathbb{Z})^{a'} \times (\mathbb{Z}/4\mathbb{Z})^{b'}$. We may further suppose that if exactly $b_{j} \leq b_{j+1} \leq a$ of the $\mathbb{Z}/2\mathbb{Z}$ direct factors of $\langle g_{1}, \ldots, g_{i}, g_{1}', \ldots, g_{j+1}' \rangle$ are contained in copies of $\mathbb{Z}/4\mathbb{Z}$ within $G_{n}$, then similarly exactly $b_{j+1}$ of the $\mathbb{Z}/2\mathbb{Z}$ direct factors of $\langle h_{1}, \ldots, h_{i}, h_{i}', \ldots, h_{j+1}' \rangle$ are contained in copies of $\mathbb{Z}/4\mathbb{Z}$ within $H_{n}$. The claim now follows by induction.
\end{proof}
\end{quote}

\noindent We note that while Duplicator must pebble its $m$ elements at once, Duplicator does so only after Spoiler selects its $p$ elements. So Duplicator may internally identify $h_{1}', \ldots, h_{p}'$ in sequence, selecting $h_{j}'$ to depend on $h_{1}', \ldots, h_{j-1}'$. Once Duplicator identifies the elements to pebble, Duplicator then places pebbles on $h_{1}', \ldots, h_{p}'$ at once. Thus, Duplicator has a winning strategy in the $\ell + q$ pebble game. The result now follows by induction.
\end{proof}

\section{Conclusion}

We used the count-free Weisfeiler--Leman in tandem with bounded non-determinism and an $\textsf{MAC}^{0}$ circuit to exhibit new upper bounds on isomorphism testing of several families of groups, including the CFI groups arising from Mekler's construction \cite{Mekler, HeQiao}, coprime extensions $H \ltimes N$ where $N$ is Abelian and $H$ is either (i) $O(1)$-generated solvable with solvability class $\poly \log \log n$ or (ii) finite simple, and direct products of simple groups. 

%In the process, we exhibited an $\textsf{FOLL}$ upper bound on the \algprobm{Cayley Group Membership} problem for solvable groups, which resolved a conjecture of Barrington, Kadau, Lange, \& McKenzie \cite{BKLM}. We also exhibited a $\textsf{quasiFOLL}$ upper bound for canonizing solvable groups. In the special case of $O(1)$-generated solvable groups, we improve this bound to $\textsf{FOLL}$. Furthermore, we exhibited an $\textsf{FOLL}$ canonization procedure for finite simple groups.

We also showed that the higher-arity count-free pebble game is unable to even distinguish  Abelian groups. This suggests that some measure of counting is necessary to place $\algprobm{GpI}$ into $\textsf{P}$.

Our work leaves several open questions.

\begin{question}
Suppose that $\Gamma_{0}$ is a graph identified by the count-free $3$-WL algorithm. Let $\Gamma_{1} = \text{CFI}(\Gamma_{0})$ and $\Gamma_{2} = \widetilde{\text{CFI}(\Gamma_{0})}$. Let $G_{i}$ ($i = 1, 2$) be the class $2$ $p$-group of exponent $p$ ($p > 2$) arising from $\Gamma_{i}$ via Mekler's construction \cite{Mekler, HeQiao}. Can the constant-dimensional count-free WL algorithm for groups distinguish $G_{1}$ from $G_{2}$?
\end{question}

As canonizing finite simple groups is $\textsf{FOLL}$-computable, the following problem seems within reach.

\begin{question} \label{question:Simple}
Let $G$ be a finite group given by its Cayley (multiplication) table. Can we decide whether $G$ is simple in $\textsf{FOLL}$? 
\end{question}

It is possible to decide in $\textsf{L}$ whether $G$ is simple by using a membership test. For each $x \in G$, we test whether the normal closure $\text{ncl}(x) = G$. It is not clear whether this membership test is $\textsf{FOLL}$-computable.

We also conjecture that isomorphism testing of groups without Abelian normal subgroups is decidable using $\beta_{1}\textsf{MAC}^{0}(\textsf{quasiFOLL})$-circuits of size $n^{\Theta(\log \log n)}$. The strategy from Grochow \& Levet \cite{GLWL1} was to individualize generators for the non-Abelian simple direct factors of $\Soc(G)$ and then apply the standard counting Weisfeiler--Leman algorithm. In order to compute direct the direct factors of $\Soc(G)$ and the relevant generators, Grochow \& Levet relied on a membership test, which is known to be $\textsf{L}$-computable \cite{BarringtonMcKenzie}. Resolving Question~\ref{question:Simple} is a first step towards resolving this conjecture.

Finally, we note that it is not clear as to the precise logic corresponding to the $q$-ary count-free game. Thus, we ask the following.

\begin{question}
For $q \geq 2$, determine the logic corresponding to the $q$-ary count-free game.
\end{question}

\bibliographystyle{alphaurl}
\bibliography{references}

\end{document}